\newif\ifincludeappendix
\ifincludeappendix\settopmatter{printfolios=true}\fi 
\newcommand{\algname}{Eg-walker\xspace}
\begin{document}
\def\sectionautorefname{Section}%
\def\subsectionautorefname{Section}%
\def\subsubsectionautorefname{Section}%
\def\listingautorefname{Listing}%

\title[Collaborative Text Editing with Eg-walker: Better, Faster, Smaller]{Collaborative Text Editing with Eg-walker:\\ Better, Faster, Smaller}
\author{Joseph Gentle}
\email{me@josephg.com}
\orcid{0009-0005-9322-1973}
\affiliation{%
  \institution{Independent}
  \city{Melbourne}
  \country{Australia}
}

\author{Martin Kleppmann}
\email{martin.kleppmann@cst.cam.ac.uk}
\orcid{0000-0001-7252-6958}
\affiliation{%
  \institution{University of Cambridge}
  \city{Cambridge}
  \country{United Kingdom}}

\begin{abstract}
  Collaborative text editing algorithms allow several users to concurrently modify a text file, and automatically merge concurrent edits into a consistent state.
  Existing algorithms fall in two categories: Operational Transformation (OT) algorithms are slow to merge files that have diverged substantially due to offline editing; CRDTs are slow to load and consume a lot of memory.
  We introduce \algname, a collaboration algorithm for text that avoids these weaknesses.
  Compared to existing CRDTs, it consumes an order of magnitude less memory in the steady state, and loading a document from disk is orders of magnitude faster.
  Compared to OT, merging long-running branches is orders of magnitude faster.
  In the worst case, the merging performance of \algname is comparable with existing CRDT algorithms.
  \algname can be used everywhere CRDTs are used, including peer-to-peer systems without a central server.
  By offering performance that is competitive with centralised algorithms, our result paves the way towards the widespread adoption of peer-to-peer collaboration software.
\end{abstract}

\begin{CCSXML}
  <ccs2012>
    <concept>
      <concept_id>10010405.10010497.10010500.10010501</concept_id>
      <concept_desc>Applied computing~Text editing</concept_desc>
      <concept_significance>500</concept_significance>
    </concept>
    <concept>
      <concept_id>10003120.10003130.10003131.10003570</concept_id>
      <concept_desc>Human-centered computing~Computer supported cooperative work</concept_desc>
      <concept_significance>500</concept_significance>
    </concept>
    <concept>
      <concept_id>10002951.10003227.10003233.10011766</concept_id>
      <concept_desc>Information systems~Asynchronous editors</concept_desc>
      <concept_significance>300</concept_significance>
    </concept>
    <concept>
      <concept_id>10010147.10010919.10010172</concept_id>
      <concept_desc>Computing methodologies~Distributed algorithms</concept_desc>
      <concept_significance>300</concept_significance>
    </concept>
  </ccs2012>
\end{CCSXML}

\ccsdesc[500]{Applied computing~Text editing}
\ccsdesc[500]{Human-centered computing~Computer supported cooperative work}
\ccsdesc[300]{Information systems~Asynchronous editors}
\ccsdesc[300]{Computing methodologies~Distributed algorithms}

\keywords{collaborative text editing, CRDTs, operational transformation, strong eventual consistency}
\maketitle

\section{Introduction}\label{introduction}

Real-time collaboration has become an essential feature for many types of software, including document editors such as Google Docs, Microsoft Word, or Overleaf, and graphics software such as Figma.
In such software, each user's device locally maintains a copy of the shared file (e.g. in a tab of their web browser).
A user's edits are immediately applied to their own local copy, without waiting for a network round-trip, so that the user interface is responsive regardless of network latency.
Different users may therefore make edits concurrently; the software must merge such concurrent edits in a way that maintains the integrity of the document, and ensures that all devices converge to the same state.

For example, in \autoref{two-inserts}, two users initially have the same document ``Helo''.
User 1 inserts a second letter ``l'' at index 3, while concurrently user 2 inserts an exclamation mark at index 4.
When user 2 receives the operation $\mathit{Insert}(3, \text{``l''})$ they can apply it to obtain ``Hello!'', but when user 1 receives $\mathit{Insert}(4, \text{``!''})$ they cannot apply that operation as-is, since that would result in the state ``Hell!o'', which would be inconsistent with the other user's state and the intended insertion position.
Due to the concurrent insertion at an earlier index, user 1 must insert the exclamation mark at index 5.

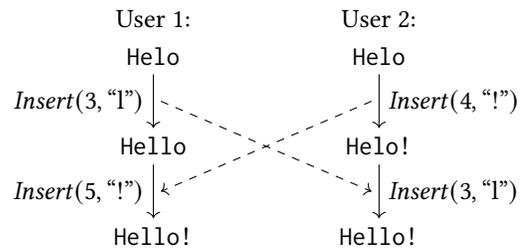
\begin{figure}
  \centering
  \begin{tikzpicture}
    \node at (0,2.9) {User 1:};
    \node at (3,2.9) {User 2:};
    \node (left1) at (0,2.4) {\texttt{Helo}};
    \node (left2) at (0,1.2) {\texttt{Hello}};
    \node (left3) at (0,0.0) {\texttt{Hello!}};
    \node (right1) at (3,2.4) {\texttt{Helo}};
    \node (right2) at (3,1.2) {\texttt{Helo!}};
    \node (right3) at (3,0.0) {\texttt{Hello!}};
    \draw [->] (left1) -- node [left] {$\mathit{Insert}(3, \text{``l''})$} (left2);
    \draw [->] (right1) -- node [right] {$\mathit{Insert}(4, \text{``!''})$} (right2);
    \draw [->] (left2) -- node [left] {$\mathit{Insert}(5, \text{``!''})$} (left3);
    \draw [->] (right2) -- node [right] {$\mathit{Insert}(3, \text{``l''})$} (right3);
    \draw [->,dashed] (0.1,1.8) -- (2.9,0.6);
    \draw [->,dashed] (2.9,1.8) -- (0.1,0.6);
  \end{tikzpicture}
  \caption{Two concurrent insertions into a text document.}
  \label{two-inserts}
\end{figure}

One way of solving this problem is to use \emph{Operational Transformation} (OT): when user 1 receives $\mathit{Insert}(4, \text{``!''})$ that operation is transformed with regard to the concurrent insertion at index 3, which increments the index at which the exclamation mark is inserted.
OT is an old and widely-used technique: it was introduced in 1989 \cite{Ellis1989}, and the OT algorithm Jupiter \cite{Nichols1995} is used in Google Docs \cite{DayRichter2010}.

OT is simple and fast in the case of \autoref{two-inserts}, where each user performed only one operation since the last version they had in common.
In general, if the users each performed $n$ operations since their last common version, merging their states using OT has a cost of at least $O(n^2)$, since each of one user's operations must be transformed with respect to all of the other user's operations.
Some OT algorithms' merge complexity is cubic or even slower \cite{Li2006,Roh2011RGA,Sun2020OT}.
This is acceptable for online collaboration where $n$ is typically small, but for larger $n$ an algorithm with complexity $O(n^2)$ can become impracticably slow.
In \autoref{benchmarking} we show a real-life example document that takes one hour to merge using OT.

Larger divergence occurs if users may edit a document offline, or if the software supports explicit branching and merging workflows.
In version control systems like Git, used mostly for software development, offline working and explicit branching are already the norm.
Recent research indicates that such workflows would also be valuable for writing prose \cite{Upwelling,Patchwork}, but OT-based collaborative editors struggle to offer such features because of the cost of merging substantially diverged branches.

\emph{Conflict-free Replicated Data Types} (CRDTs) have been proposed as an alternative to OT.
The first CRDT for collaborative text editing appeared in 2006 \cite{Oster2006WOOT}, and over a dozen text CRDTs have been published since \cite{crdt-papers}.
These algorithms work by maintaining additional metadata: they give each character a unique identifier, and use those IDs instead of integer indexes to identify the position of insertions and deletions.
This avoids having to transform operations, since IDs are not affected by concurrent operations.

Unfortunately, these IDs need to be loaded from disk when a document is opened, and held in memory while a document is being edited.
Some CRDT algorithms also need to retain IDs of deleted characters (\emph{tombstones}).
Early CRDT algorithms were very inefficient, using hundreds of bytes of memory for each character of text, making them impractical for long documents.
Recent CRDT implementations have reduced this overhead considerably, but as we show in \autoref{benchmarking}, even the best CRDTs available today use more than 10 times as much memory as OT to view and edit a document.
For this reason, popular apps like Google Docs~\cite{DayRichter2010}, Microsoft Office, and Overleaf~\cite{overleaf-ot} use OT.
Existing algorithms therefore present a trade-off: either use OT and accept that offline editing and long-running branches are slow, or pick a CRDT and accept a much higher memory use.

In this paper we propose \emph{Event Graph Walker} (\algname), a collaborative editing algorithm that overcomes this trade-off.
Like OT, \algname uses integer indexes to identify insertion and deletion positions, and transforms those indexes to merge concurrent operations.
When two users concurrently perform $n$ operations each, \algname can merge them at a cost of $O(n \log n)$, much faster than OT's cost of $O(n^2)$ or worse.
The example document that takes 1 hour to merge using OT is merged in just 24~ms using \algname (\autoref{chart-remote}).

\algname merges concurrent edits using a CRDT algorithm we designed.
Unlike existing algorithms, we invoke the CRDT only to perform merges of concurrent operations, and we discard its state as soon as the merge is complete.
We never write the CRDT state to disk and never send it over the network.
While a document is being edited, we only hold the document text in memory, but no CRDT metadata.
Most of the time, \algname therefore uses 1–2 orders of magnitude less memory than the best CRDTs.
During merging, when \algname temporarily uses more memory, its peak memory use is comparable to the best known CRDT implementations.

\algname assumes no central server, so it can be used over a peer-to-peer network.
Although all existing CRDTs and a few OT algorithms can be used peer-to-peer, most of them have poor performance compared to the centralised OT commonly used in production software.
In contrast, \algname's performance matches or surpasses that of centralised algorithms.
It therefore paves the way towards more collaboration software working peer-to-peer, for example in environments where co-located devices can communicate via local radio links, but not reach the Internet or any cloud services.
This setting is important e.g. for devices onboard the same aircraft~\cite{ditto-aircraft}, in a military context~\cite{ditto-military}, or for scientists conducting fieldwork in remote locations~\cite{antarctica}.

This paper focuses on collaborative editing of plain text files.
We believe that our approach can be generalised to other file types such as rich text, spreadsheets, graphics, presentations, CAD drawings, and more in the future.
More generally, \algname provides a framework for efficient coordination-free distributed systems, in which nodes can always make progress independently, but converge eventually \cite{Hellerstein2010}.

This paper makes the following contributions:

\begin{itemize}
\item We introduce \algname, a hybrid CRDT/OT algorithm for text that is faster and has a vastly smaller memory footprint than existing CRDTs (\autoref{algorithm}).
\item Since there is no established benchmark for collaborative text editing, we are also publishing a suite of editing traces of text files for benchmarking. They are derived from real documents and demonstrate various patterns of sequential and concurrent editing.
\item In \autoref{benchmarking} we use those editing traces to evaluate the performance of our implementation of \algname, comparing it to selected CRDTs and an OT implementation. We measure CPU time to load a document, CPU time to merge edits from a remote replica, memory usage, and file size. \algname improves the state of the art by orders of magnitude in the best cases, and is only slightly slower in the worst cases.
\item We prove the correctness of \algname in \ifincludeappendix\autoref{proofs}.\else the extended version of this paper~\cite{extended-version}.\fi
\end{itemize}

\section{Background}

We consider a collaborative plain text editor whose state is a linear sequence of characters, which may be edited by inserting or deleting characters at any position.
Such an edit is captured as an \emph{operation}; the operation $\mathit{Insert}(i, c)$ inserts character $c$ at index $i$, and $\mathit{Delete}(i)$ deletes the character at index $i$ (indexes are zero-based).
Our implementation compresses runs of consecutive insertions or deletions, but for simplicity we describe the algorithm in terms of single-character operations.

\subsection{System model}

Each device on which a user edits a document is a \emph{replica}, and each replica stores the full editing history of the document.
When a user makes an insertion or deletion, that operation is immediately applied to the user's local replica, and then asynchronously sent over the network to any other replicas that have a copy of the same document.
Users can also edit their local copy while offline; the corresponding operations are then enqueued and sent when the device is next online.

Our algorithm assumes a reliable broadcast protocol that detects and retransmits lost messages, but makes no other assumptions about the network.
For example, a relay server could store and forward messages from one replica to the others, or replicas could use a peer-to-peer gossip protocol.
We make no timing assumptions and tolerate arbitrary network delay, but we assume replicas are non-Byzantine.

Our algorithm ensures \emph{convergence}: any two replicas that have seen the same operations have the same document state (i.e., a text consisting of the same sequence of characters), even if the operations arrived in a different order at each replica.
If the underlying broadcast protocol ensures that every non-crashed replica eventually receives every operation, the algorithm achieves \emph{strong eventual consistency} \cite{Shapiro2011}.

\subsection{Event graphs}\label{event-graphs}

We represent the editing history of a document as an \emph{event graph}: a directed acyclic graph (DAG) in which every node is an \emph{event} consisting of an operation (insert/delete a character), a unique ID, and the set of IDs of its \emph{parent events}.
When $a$ is a \emph{parent} of $b$, we also say $b$ is a \emph{child} of $a$, and the graph contains an edge from $a$ to $b$.
We construct events such that the graph is transitively reduced (i.e., it contains no redundant edges).
When there is a directed path from $a$ to $b$ we say that $a$ \emph{happened before} $b$, and write $a \rightarrow b$ as per Lamport \cite{Lamport1978}.
The $\rightarrow$ relation is a strict partial order.
We say that events $a$ and $b$ are \emph{concurrent}, written $a \parallel b$, if both events are in the graph, $a \neq b$, and neither happened before the other: $a \not\rightarrow b \wedge b \not\rightarrow a$.

The \emph{frontier} is the set of events with no children.
Whenever a user performs an operation, a new event containing that operation is added to the graph, and the previous frontier in the replica's local copy of the graph becomes the new event's parents.
The new event is then broadcast over the network, and each replica adds it to its copy of the graph.
If any parents are missing (i.e., a parent ID in the event does not resolve to a known event), the replica waits for them to arrive before adding them to the graph; the result is a simple causal broadcast protocol \cite{Birman1991,Cachin2011}.
Two replicas can merge their event graphs by taking the union of their sets of events.
Events in the graph are immutable; they always represents the operation as originally generated, and not as a result of any transformation.
The graph grows monotonically (we never remove events), and a new event is always a child of existing events (we never add a parent to an existing event).

\begin{figure}
  \begin{tikzpicture}[every node/.style={inner ysep=1pt}]
    \node (char1) at (0,2.8) {$e_1: \mathit{Insert}(0, \text{``H''})$};
    \node (char2) at (0,2.1) {$e_2: \mathit{Insert}(1, \text{``e''})$};
    \node (char3) at (0,1.4) {$e_3: \mathit{Insert}(2, \text{``l''})$};
    \node (char4) at (0,0.7) {$e_4: \mathit{Insert}(3, \text{``o''})$};
    \node (char5) at (-1.5,0) {$e_5: \mathit{Insert}(3, \text{``l''})$};
    \node (char6) at (1.5,0) {$e_6: \mathit{Insert}(4, \text{``!''})$};
    \draw [->] (char1) -- (char2);
    \draw [->] (char2) -- (char3);
    \draw [->] (char3) -- (char4);
    \draw [->] (char4) -- (char5);
    \draw [->] (char4) -- (char6);
  \end{tikzpicture}
  \caption{The event graph corresponding to \autoref{two-inserts}.}
  \label{graph-example}
\end{figure}
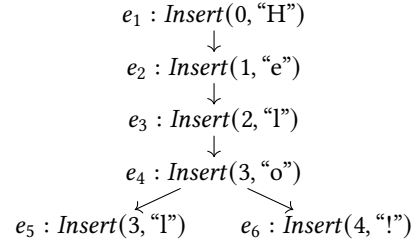

For example, \autoref{graph-example} shows the event graph corresponding to \autoref{two-inserts}.
The events $e_5$ and $e_6$ are concurrent, and the frontier of this graph is the set of events $\{e_5, e_6\}$.

The event graph for a substantial document, such as a research paper, may contain hundreds of thousands of events.
It can nevertheless be stored in a very compact form by exploiting the typical editing patterns of humans writing text: characters tend to be inserted or deleted in consecutive runs.
Many portions of a typical event graph are linear, with each event having one parent and one child.
We describe the storage format in more detail in \autoref{storage}.

\subsection{Document versions}\label{versions}

Let $G$ be an event graph, represented as a set of events.
Due to convergence, any two replicas that have the same set of events must be in the same state.
Therefore, the document state (sequence of characters) resulting from $G$ must be $\mathsf{replay}(G)$, where $\mathsf{replay}$ is some pure (deterministic and non-mutating) function.
In principle, any pure function of the set of events results in convergence, although a $\mathsf{replay}$ function that is useful for text editing must satisfy additional criteria (see \autoref{characteristics}).

Consider the event $\mathit{Delete}(i)$, which deletes the character at position $i$ in the document. In order to correctly interpret this event, we need to determine which character was at index $i$ at the time when the operation was generated.

More generally, let $e_i$ be some event. The document state when $e_i$ was generated must be $\mathsf{replay}(G_i)$, where $G_i$ is the set of events that were known to the generating replica at the time when $e_i$ was generated (not including $e_i$ itself).
By definition, the parents of $e_i$ are the frontier of $G_i$, and thus $G_i$ is the set of all events that happened before $e_i$, i.e., $e_i$'s parents and all of their ancestors.
Therefore, the parents of $e_i$ unambiguously define the document state in which $e_i$ must be interpreted.

To formalise this, given an event graph (set of events) $G$, we define the \emph{version} of $G$ to be its frontier set:
\begin{equation*}
  \mathsf{Version}(G) = \{e_1 \in G \mid \nexists e_2 \in G: e_1 \rightarrow e_2\}
\end{equation*}

Given some version $V$, the corresponding set of events can be reconstructed as follows:
\begin{equation*}
  \mathsf{Events}(V) = V \cup \{e_1 \mid \exists e_2 \in V : e_1 \rightarrow e_2\}
\end{equation*}

Since an event graph grows only by adding events that are concurrent to or children of existing events (we never change the parents of an existing event), there is a one-to-one correspondence between an event graph and its version.
For all valid event graphs $G$, $\mathsf{Events}(\mathsf{Version}(G)) = G$.

The set of parents of an event in the graph is the version of the document in which that operation must be interpreted.
The version can hence be seen as a \emph{logical clock}, describing the point in time at which a replica knows about the exact set of events in $G$.
Even if the event graph is large and there are many collaborators, a version rarely consists of more than two events in practice: a version with $n$ events occurs only if $n$ mutually concurrent events are merged with no new operations being generated in the intervening time.

\subsection{Replaying editing history}\label{replay}

Collaborative editing algorithms are usually defined in terms of sending and receiving messages over a network.
The abstraction of an event graph allows us to reframe these algorithms in a simpler way: a collaborative text editing algorithm is a pure function $\mathsf{replay}(G)$ of an event graph $G$.
This function can use the parent-child relationships to partially order events, but concurrent events could be processed in any order.
This allows us to separate the process of replicating the event graph from the algorithm that ensures convergence.
In fact, this is how \emph{pure operation-based CRDTs} \cite{polog} are formulated, as discussed in \autoref{related-work}.

In addition to determining the document state from an entire event graph, we need an \emph{incremental update} function.
Say we have an existing event graph $G$ and corresponding document state $\mathit{doc} = \mathsf{replay}(G)$. Then an event $e$ from a remote replica is added to the graph.
We could rerun the function to obtain $\mathit{doc}' = \mathsf{replay}(G \cup \{e\})$, but it would be inefficient to process the entire graph again.
Instead, we need to efficiently compute the operation to apply to $\mathit{doc}$ in order to obtain $\mathit{doc}'$.
For text documents, this incremental update is also described as an insertion or deletion at a particular index; however, the index may differ from that in the original event due to the effects of concurrent operations, and a deletion may turn into a no-op if the same character has also been deleted by a concurrent operation.

Both OT and CRDT algorithms focus on this incremental update.
If none of the events in $G$ are concurrent with $e$, OT is straightforward: the incremental update is identical to the operation in $e$, as no transformation takes place.
If there is concurrency, OT must transform each new event with regard to each existing event that is concurrent to it.

In CRDTs, each event is first translated into operations that use unique IDs instead of indexes, and then these operations are applied to a data structure that reflects all of the operations seen so far (both concurrent operations and those that happened before).
In order to update the text editor, these updates to the CRDT's internal structure need to be translated back into index-based insertions and deletions.
Many CRDT papers elide this translation from unique IDs back to indexes, but it is important for practical applications. 



Regardless of whether the OT or the CRDT approach is used, a collaborative editing algorithm can be boiled down to an incremental update to an event graph: given an event to be added to an existing event graph, return the (index-based) operation that must be applied to the current document state so that the resulting document is identical to replaying the entire event graph including the new event.


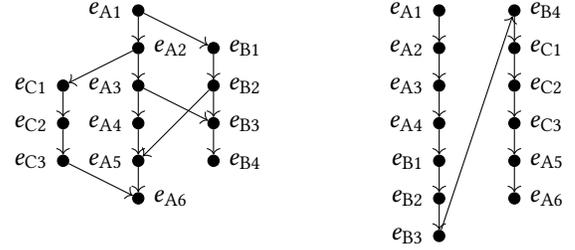
\begin{figure}
  \begin{tikzpicture}
    \tikzstyle{op} = [shape=circle,draw,fill,inner sep=1.5pt]
    \node [op] (a1) at (0,2) [label={left:$e_\mathrm{A1}$}] {};
    \node [op] (a2) at (0,1.5) [label={right:$e_\mathrm{A2}$}] {};
    \node [op] (a3) at (0,1) [label={left:$e_\mathrm{A3}$}] {};
    \node [op] (a4) at (0,0.5) [label={left:$e_\mathrm{A4}$}] {};
    \node [op] (a5) at (0,0) [label={left:$e_\mathrm{A5}$}] {};
    \node [op] (a6) at (0,-0.5) [label={right:$e_\mathrm{A6}$}] {};
    \node [op] (b1) at (1,1.5) [label={right:$e_\mathrm{B1}$}] {};
    \node [op] (b2) at (1,1) [label={right:$e_\mathrm{B2}$}] {};
    \node [op] (b3) at (1,0.5) [label={right:$e_\mathrm{B3}$}] {};
    \node [op] (b4) at (1,0) [label={right:$e_\mathrm{B4}$}] {};
    \node [op] (c1) at (-1,1) [label={left:$e_\mathrm{C1}$}] {};
    \node [op] (c2) at (-1,0.5) [label={left:$e_\mathrm{C2}$}] {};
    \node [op] (c3) at (-1,0) [label={left:$e_\mathrm{C3}$}] {};
    \node [op] (x1) at (4,2) [label={left:$e_\mathrm{A1}$}] {};
    \node [op] (x2) at (4,1.5) [label={left:$e_\mathrm{A2}$}] {};
    \node [op] (x3) at (4,1) [label={left:$e_\mathrm{A3}$}] {};
    \node [op] (x4) at (4,0.5) [label={left:$e_\mathrm{A4}$}] {};
    \node [op] (x5) at (4,0) [label={left:$e_\mathrm{B1}$}] {};
    \node [op] (x6) at (4,-0.5) [label={left:$e_\mathrm{B2}$}] {};
    \node [op] (x7) at (4,-1) [label={left:$e_\mathrm{B3}$}] {};
    \node [op] (x8) at (5,2) [label={right:$e_\mathrm{B4}$}] {};
    \node [op] (x9) at (5,1.5) [label={right:$e_\mathrm{C1}$}] {};
    \node [op] (x10) at (5,1) [label={right:$e_\mathrm{C2}$}] {};
    \node [op] (x11) at (5,0.5) [label={right:$e_\mathrm{C3}$}] {};
    \node [op] (x12) at (5,0) [label={right:$e_\mathrm{A5}$}] {};
    \node [op] (x13) at (5,-0.5) [label={right:$e_\mathrm{A6}$}] {};
    \draw [->] (a1) -- (a2);
    \draw [->] (a2) -- (a3);
    \draw [->] (a3) -- (a4);
    \draw [->] (a4) -- (a5);
    \draw [->] (a5) -- (a6);
    \draw [->] (b1) -- (b2);
    \draw [->] (b2) -- (b3);
    \draw [->] (b3) -- (b4);
    \draw [->] (c1) -- (c2);
    \draw [->] (c2) -- (c3);
    \draw [->] (a1) -- (b1);
    \draw [->] (a3) -- (b3);
    \draw [->] (b2) -- (a5);
    \draw [->] (a2) -- (c1);
    \draw [->] (c3) -- (a6);
    \draw [->] (x1) -- (x2);
    \draw [->] (x2) -- (x3);
    \draw [->] (x3) -- (x4);
    \draw [->] (x4) -- (x5);
    \draw [->] (x5) -- (x6);
    \draw [->] (x6) -- (x7);
    \draw [->] (x7) -- (x8);
    \draw [->] (x8) -- (x9);
    \draw [->] (x9) -- (x10);
    \draw [->] (x10) -- (x11);
    \draw [->] (x11) -- (x12);
    \draw [->] (x12) -- (x13);
  \end{tikzpicture}
  \caption{An event graph (left) and one possible topologically sorted order of that graph (right).}
  \label{topological-sort}
\end{figure}

\subsection{Implementing OT using a CRDT}\label{crdt-replay}

One way of implementing such a replay algorithm would be to simulate a network of CRDT replicas in a single process.
For each branch in the event graph there is a separate simulated replica, which takes operations in their original index-based form and generates a corresponding ID-based CRDT operation.
Another simulated replica receives every operation generated by the other replicas and applies them in some topologically sorted order, as illustrated in \autoref{topological-sort}.

For example, the history in \autoref{topological-sort} could be replayed using one simulated replica for $e_\mathrm{A1\dots A6}$, a second for $e_\mathrm{B1\dots B4}$, and a third for $e_\mathrm{C1\dots C3}$.
Every time an event's parent is an event generated on another simulated replica, the corresponding network communication is simulated, and the remote operations are merged using a CRDT algorithm.
For example, before the replica for $e_\mathrm{B1\dots B4}$ can generate $e_\mathrm{B3}$ it must first merge $e_\mathrm{A2}$ and $e_\mathrm{A3}$.
Each simulated replica thus tracks the document version in which the indexes of insertions and deletions should be interpreted.
The simulated replica that applies all operations then converts the ID-based operation back into an index based on its document version.
This index-based operation then allows an incremental update of the document state.

The process of translating an index-based operation into an ID-based one on one simulated replica, and translating it back into an index-based operation on another, is effectively an operational transformation algorithm: it updates the index to reflect the effects of concurrent operations (which have been applied to the second simulated replica but not the first).
However, the algorithm is fairly slow because it incurs the overhead of updating multiple simulated replicas and running the CRDT algorithm even at times when there is no concurrency in the event graph.
It also uses a lot of memory because it needs a separate copy of the CRDT state for every concurrent branch in the event graph.

Our \algname algorithm, described in the next section, modifies this approach to use only two simulated replicas: one on which operations are generated, and the other on which all operations are applied (and in fact, both are stored in the same data structure).
To deal with event graphs that are not totally ordered, the algorithm allows events on one branch to be \emph{retreated} when switching to another branch, and \emph{advanced} again when those branches are merged.
Retreating an event updates the replica state to behave as if that event had not yet happened, and advancing makes the event take effect again.

For example, in \autoref{topological-sort}, after applying $e_\mathrm{A4}$ we would retreat $e_\mathrm{A4}$, $e_\mathrm{A3}$, and $e_\mathrm{A2}$ before applying $e_\mathrm{B1}$, since those events are concurrent with $e_\mathrm{B1}$.
Before applying $e_\mathrm{B3}$ we would advance $e_\mathrm{A2}$ and $e_\mathrm{A3}$ again, since they are ancestors of $e_\mathrm{B3}$.
Retreating and advancing takes some additional CPU time on highly concurrent event graphs, but as we show in \autoref{benchmarking}, the optimisations this approach enables result in excellent performance overall.

\section{The Event Graph Walker algorithm}\label{algorithm}

\algname is a collaborative text editing algorithm based on the idea of event graph replay.
The algorithm builds on a replication layer that ensures that whenever a replica adds an event to the graph, all non-crashed replicas eventually receive it.
The state of each replica consists of three parts:

\begin{enumerate}
\item \textbf{Event graph:} Each replica stores a copy of the event graph on disk, in a format described in \autoref{storage}.
\item \textbf{Document state:} The current sequence of characters in the document with no further metadata. On disk this is simply a plain text file; in memory it may be represented as a rope \cite{Boehm1995}, piece table \cite{vscode-buffer}, or similar structure to support efficient insertions and deletions.
\item \textbf{Internal state:} A temporary CRDT structure that \algname uses to merge concurrent edits. It is not persisted or replicated, and it is discarded when the algorithm finishes running.
\end{enumerate}

\algname can reconstruct the document state by replaying the entire event graph.
It first performs a topological sort, as illustrated in \autoref{topological-sort}. Then each event is transformed so that the transformed insertions and deletions can be applied in topologically sorted order, starting with an empty document, to obtain the document state.
In Git parlance, this process ``rebases'' a DAG of operations into a linear operation history with the same effect.
The input of the algorithm is the event graph, and the output is this topologically sorted sequence of transformed operations.
While OT transforms one operation with respect to one other, \algname uses the internal state to transform sets of operations efficiently.

In graphs with concurrent operations there are multiple possible sort orders. \algname guarantees that the final document state is the same, regardless which of these orders is chosen. However, the choice of sort order may affect the performance of the algorithm, as discussed in \autoref{complexity}.

For example, the graph in \autoref{graph-example} has two possible sort orders; \algname either first inserts ``l'' at index 3 and then ``!'' at index 5 (like User 1 in \autoref{two-inserts}), or it first inserts ``!'' at index 4 followed by ``l'' at index 3 (like User 2 in \autoref{two-inserts}). The final document state is ``Hello!'' either way.

Event graph replay easily extends to incremental updates for real-time collaboration: when a new event is added to the graph, it becomes the next element of the topologically sorted sequence.
We can transform each new event in the same way as during replay, and apply the transformed operation to the current document state.

\subsection{Characteristics of \algname}\label{characteristics}

\algname ensures that the resulting document state is consistent with Attiya et al.'s \emph{strong list specification} \cite{Attiya2016} (in essence, replicas converge to the same state and apply operations in the right place), and it is \emph{maximally non-interleaving} \cite{fugue} (i.e., concurrent sequences of insertions at the same position are placed one after another, and not interleaved).

When generating new events, or when adding an event to the graph that happened after all existing events, \algname only needs the current document state.
Most of the time, the event graph can thus remain on disk without using any space in memory or any CPU time, and the internal state can be discarded entirely.
The event graph and internal state are only required when handling concurrency, and even then we only have to replay the portion of the graph since the last ancestor that the concurrent operations had in common.
In portions of the event graph that have no concurrency (which, in many editing histories, is the vast majority of events), events do not need to be transformed at all.

In contrast, existing CRDTs require every replica to persist the internal state and send it over the network.
They also require that state to be loaded into memory to generate and receive operations, even when there is no concurrency.
This uses several times more memory and makes documents slow to load.

OT algorithms avoid this internal state; similarly to \algname, they only need to persist the latest document state and the history of operations that are concurrent to operations that may arrive in the future.
In both \algname and OT, the event graph can be discarded if we know that no event we may receive in the future will be concurrent with any existing event.
However, OT algorithms are very slow to merge long-running branches (see \autoref{benchmarking}).
\algname handles arbitrary event DAGs, whereas some OT algorithms are only able to handle restricted forms of event graphs (server-based OT corresponds to event graphs with one main branch representing the server's view; all other branches may merge to and from the main branch, but not with each other).

\subsection{Walking the event graph}\label{graph-walk}

For the sake of clarity we first explain a simplified version of \algname that replays the entire event graph without discarding its internal state along the way. This approach incurs some CRDT overhead even for non-concurrent operations.
We give pseudocode for this simplified algorithm in \ifincludeappendix\autoref{pseudocode-appendix}\else the extended version of this paper \cite{extended-version}\fi.
In \autoref{partial-replay} we show how the algorithm can be optimised to replay only a part of the event graph.

First, we topologically sort the event graph in a way that keeps events on the same branch consecutive as much as possible: for example, in \autoref{topological-sort} we first visit $e_\mathrm{A1} \dots e_\mathrm{A4}$, then $e_\mathrm{B1} \dots e_\mathrm{B4}$. We avoid alternating between branches, such as $e_\mathrm{A1}, e_\mathrm{B1}, e_\mathrm{A2}, e_\mathrm{B2} \dots$, even though that would also be a valid topological sort.
For this we use a standard textbook algorithm \cite{CLRS2009}: perform a depth-first traversal starting from the oldest event, and build up the topologically sorted list in the order that events are visited.
When a node has multiple children in the graph, we choose their order based on a heuristic so that branches with fewer events tend to appear before branches with more events in the sorted order; this can improve performance (see \autoref{complexity}) but is not essential.
We estimate the size of a branch by counting the number of events that happened after each event.

The algorithm then processes the events one at a time in topologically sorted order, updating the internal state and outputting a transformed operation for each event.
The internal state simultaneously captures the document at two versions: the version in which an event was generated (which we call the \emph{prepare} version), and the version in which all events seen so far have been applied (which we call the \emph{effect} version).
These correspond to the two simulated replicas mentioned in \autoref{crdt-replay}.
If the prepare and effect versions are the same, the transformed operation is identical to the original one.
In general, the prepare version represents a subset of the events of the effect version.

The internal state can be updated with three methods, each of which takes an event as argument:

\begin{itemize}
\item $\mathsf{apply}(e)$ updates the prepare version and the effect version to include $e$, assuming that the current prepare version equals $e.\mathit{parents}$, and that $e$ has not yet been applied. This method interprets $e$ in the context of the prepare version, and outputs the operation representing how the effect version has been updated.
\item $\mathsf{retreat}(e)$ updates the prepare version to remove $e$, assuming the prepare version previously included $e$.
\item $\mathsf{advance}(e)$ updates the prepare version to add $e$, assuming that the prepare version previously did not include $e$, but the effect version did.
\end{itemize}

\begin{figure}
  \begin{tikzpicture}[every node/.style={inner ysep=1pt}]
    \node (e1) at (2,3.5) {$e_1: \mathit{Insert}(0, \text{``h''})$};
    \node (e2) at (2,2.8) {$e_2: \mathit{Insert}(1, \text{``i''})$};
    \node (e3) at (0,2.1) {$e_3: \mathit{Insert}(0, \text{``H''})$};
    \node (e4) at (0,1.4) {$e_4: \mathit{Delete}(1)$};
    \node (e5) at (4,2.1) {$e_5: \mathit{Delete}(1)$};
    \node (e6) at (4,1.4) {$e_6: \mathit{Insert}(1, \text{``e''})$};
    \node (e7) at (4,0.7) {$e_7: \mathit{Insert}(2, \text{``y''})$};
    \node (e8) at (2,0.0) {$e_8: \mathit{Insert}(3, \text{``!''})$};
    \draw [->] (e1) -- (e2);
    \draw [->] (e2) -- (e3);
    \draw [->] (e3) -- (e4);
    \draw [->] (e2) -- (e5);
    \draw [->] (e5) -- (e6);
    \draw [->] (e6) -- (e7);
    \draw [->] (e7) -- (e8);
    \draw [->] (e4) -- (e8);
  \end{tikzpicture}
  \caption{An event graph. Starting with document ``hi'', one user changes ``hi'' to ``hey'', while concurrently another user capitalises the ``H''. After merging to the state ``Hey'', one of them appends an exclamation mark to produce ``Hey!''.}
  \label{graph-hi-hey}
\end{figure}
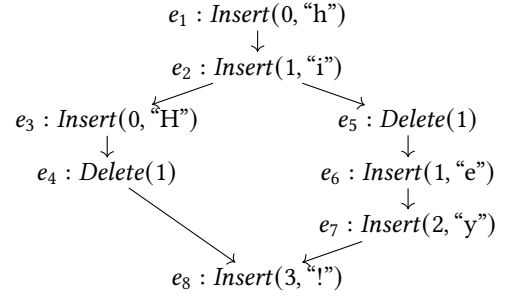

The effect version only moves forwards in time (through $\mathsf{apply}$), whereas the prepare version can move both forwards and backwards.
Consider the example in \autoref{graph-hi-hey}, and assume that the events $e_1 \dots e_8$ are traversed in order of their subscript.
These events can be processed as follows:

\begin{enumerate}
\item Start in the empty state, then call $\mathsf{apply}(e_1)$, $\mathsf{apply}(e_2)$, $\mathsf{apply}(e_3)$, and $\mathsf{apply}(e_4)$. This is valid because each event's parent version is the previously applied event.
\item Before we can apply $e_5$ we must rewind the prepare version to be $\{e_2\}$, which is the parent of $e_5$. We can do this by calling $\mathsf{retreat}(e_4)$ and $\mathsf{retreat}(e_3)$.
\item Now we can call $\mathsf{apply}(e_5)$, $\mathsf{apply}(e_6)$, and $\mathsf{apply}(e_7)$.
\item The parents of $e_8$ are $\{e_4, e_7\}$; before we can apply $e_8$ we must therefore add $e_3$ and $e_4$ to the prepare state again by calling $\mathsf{advance}(e_3)$ and $\mathsf{advance}(e_4)$.
    We do not retreat $e_{5\dots 7}$ because $e_3$ and $e_4$ have already been applied in Step 1; now we are \emph{advancing} $e_3$ and $e_4$, which does not require retreating concurrent events.
\item Finally, we can call $\mathsf{apply}(e_8)$.
\end{enumerate}

In complex event graphs such as the one in \autoref{topological-sort} the same event may have to be retreated and advanced several times, but we can process arbitrary DAGs this way.
In general, before applying the next event $e$ in topologically sorted order, compute $G_\mathrm{old} = \mathsf{Events}(V_p)$ where $V_p$ is the current prepare version, and $G_\mathrm{new} = \mathsf{Events}(e.\mathit{parents})$.
We then call $\mathsf{retreat}$ on each event in $G_\mathrm{old} - G_\mathrm{new}$ (in reverse topological sort order), and call $\mathsf{advance}$ on each event in $G_\mathrm{new} - G_\mathrm{old}$ (in topological sort order) before calling $\mathsf{apply}(e)$.

The following algorithm efficiently computes the events to retreat and advance when moving the prepare version from $V_p$ to $V'_p$.
For each event in $V_p$ and $V'_p$ we insert the index of that event in the topological sort order into a priority queue, along with a tag indicating whether the event is in the old or the new prepare version.
We then repeatedly pop the event with the greatest index off the priority queue, and enqueue the indexes of its parents along with the same tag.
We stop the traversal when all entries in the priority queue are common ancestors of both $V_p$ and $V'_p$.
Any events that were traversed from only one of the versions need to be retreated or advanced respectively.

\subsection{Representing prepare and effect versions}\label{prepare-effect-versions}

The internal state implements the $\mathsf{apply}$, $\mathsf{retreat}$, and $\mathsf{advance}$ methods by maintaining a CRDT data structure.
This structure consists of a linear sequence of records, one per character in the document, including tombstones for deleted characters.
Runs of characters with consecutive IDs and the same properties can be run-length encoded to save memory.
A record is inserted into this sequence by $\mathsf{apply}(e_i)$ for an insertion event $e_i$.
Subsequent deletion events and $\mathsf{retreat}$/$\mathsf{advance}$ calls may modify properties of the record, but records in the sequence are not removed or reordered once they have been inserted.

When the event graph contains concurrent insertions, we use a CRDT to ensure that all replicas place the records in this sequence in the same order, regardless of the order in which the event graph is traversed.
For example, RGA \cite{Roh2011RGA} or YATA \cite{Nicolaescu2016YATA} could be used for this purpose.
Our implementation of \algname uses a variant of the Yjs algorithm \cite{yjs}, itself based on YATA, that we conjecture to be maximally non-interleaving.
We leave a detailed analysis of this algorithm to future work, since it is not core to this paper.

Each record in this sequence contains:
\begin{itemize}
\item the ID of the event that inserted the character;
\item $s_p \in \{\texttt{NotInsertedYet}, \texttt{Ins}, \texttt{Del 1}, \texttt{Del 2}, \dots\}$, the character's state in the prepare version;
\item $s_e \in \{\texttt{Ins}, \texttt{Del}\}$, the state in the effect version;
\item and any other fields required by the CRDT to determine the order of concurrent insertions.
\end{itemize}

The rules for updating $s_p$ and $s_e$ are:

\begin{itemize}
\item When a record is first inserted by $\mathsf{apply}(e_i)$ with an insertion event $e_i$, it is initialised with $s_p = s_e = \texttt{Ins}$.
\item If $\mathsf{apply}(e_d)$ is called with a deletion event $e_d$, we set $s_e = \texttt{Del}$ in the record representing the deleted character. In the same record, if $s_p = \texttt{Ins}$ we update it to $\texttt{Del 1}$, and if $s_p = \texttt{Del}\; n$ it advances to $\texttt{Del} (n+1)$, as shown in \autoref{spv-state}.
\item If $\mathsf{retreat}(e_i)$ is called with insertion event $e_i$, we must have $s_p = \texttt{Ins}$ in the record affected by the event, and we update it to $s_p = \texttt{NotInsertedYet}$. Conversely, $\mathsf{advance}(e_i)$ moves $s_p$ from $\texttt{NotInsertedYet}$ to $\texttt{Ins}$.
\item If $\mathsf{retreat}(e_d)$ is called with a deletion event $e_d$, we must have $s_p = \texttt{Del}\; n$ in the affected record, and we update it to $\texttt{Del} (n-1)$ if $n>1$, or to $\texttt{Ins}$ if $n=1$. Calling $\mathsf{advance}(e_d)$ performs the opposite.
\end{itemize}

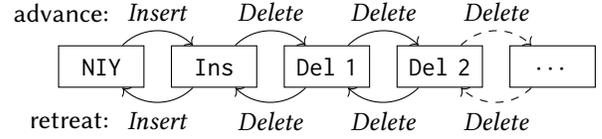
\begin{figure}
  \begin{tikzpicture}
    \tikzstyle{state} = [draw,align=center,text width=0.9cm,text height=8pt,text depth=1pt]
    \node [state] (nyi) at (0, 0) {\texttt{NIY}};
    \node [state] (ins) at (1.5, 0) {\texttt{Ins}};
    \node [state] (del1) at (3, 0) {\texttt{Del 1}};
    \node [state] (del2) at (4.5, 0) {\texttt{Del 2}};
    \node [state] (deln) at (6, 0) {$\cdots$};
    \draw [->] (nyi) to [out=45,in=135] node [above,label={left:$\mathsf{advance}$:}] {$\mathit{Insert}$} (ins);
    \draw [->] (ins) to [out=45,in=135] node [above] {$\mathit{Delete}$} (del1);
    \draw [->] (del1) to [out=45,in=135] node [above] {$\mathit{Delete}$} (del2);
    \draw [->,dashed] (del2) to [out=45,in=135] node [above] {$\mathit{Delete}$} (deln);
    \draw [->] (ins) to [out=225,in=315] node [below,label={left:$\mathsf{retreat}$:}] {$\mathit{Insert}$} (nyi);
    \draw [->] (del1) to [out=225,in=315] node [below] {$\mathit{Delete}$} (ins);
    \draw [->] (del2) to [out=225,in=315] node [below] {$\mathit{Delete}$} (del1);
    \draw [->,dashed] (deln) to [out=225,in=315] node [below] {$\mathit{Delete}$} (del2);
  \end{tikzpicture}
  \caption{State machine for internal state variable $s_p$.}
  \label{spv-state}
\end{figure}

As a result, $s_p$ and $s_e$ are \texttt{Ins} if the character is visible (inserted but not deleted) in the prepare and effect version respectively; $s_p = \texttt{Del}\; n$ indicates that the character has been deleted by $n$ concurrent delete events in the prepare version; and $s_p = \texttt{NotInsertedYet}$ indicates that the insertion of the character has been retreated in the prepare version.
$s_e$ does not count the number of deletions and does not have a $\texttt{NotInsertedYet}$ state since we never remove the effect of an operation from the effect version.

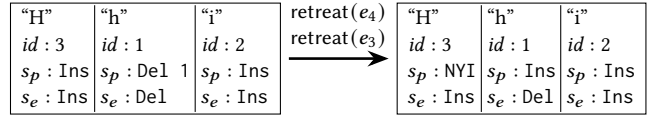
\begin{figure}
  \begin{tikzpicture}[font=\footnotesize]
    \tikzstyle{box}=[rectangle,draw,matrix,text height=6pt,text depth=2pt,nodes=right,inner xsep=2pt,inner ysep=1pt,column sep=0]
    \matrix [box] (left) at (0,0) {
      \node {``H''}; & \node (top1) {``h''}; & \node (top2) {``i''}; \\
      \node {$\mathit{id}: 3$}; & \node {$\mathit{id}: 1$}; & \node {$\mathit{id}: 2$}; \\
      \node {$s_p: \texttt{Ins}$}; & \node {$s_p: \texttt{Del 1}$}; & \node {$s_p: \texttt{Ins}$}; \\
      \node {$s_e: \texttt{Ins}$}; & \node (bot1) {$s_e: \texttt{Del}$}; & \node (bot2) {$s_e: \texttt{Ins}$}; \\
    };
    \matrix [box] (right) at (5,0) {
      \node {``H''}; & \node (top3) {``h''}; & \node (top4) {``i''}; \\
      \node {$\mathit{id}: 3$}; & \node {$\mathit{id}: 1$}; & \node {$\mathit{id}: 2$}; \\
      \node {$s_p: \texttt{NYI}$}; & \node {$s_p: \texttt{Ins}$}; & \node {$s_p: \texttt{Ins}$}; \\
      \node {$s_e: \texttt{Ins}$}; & \node (bot3) {$s_e: \texttt{Del}$}; & \node (bot4) {$s_e: \texttt{Ins}$}; \\
    };
    \draw (top1.north west) -- (bot1.south west);
    \draw (top2.north west) -- (bot2.south west);
    \draw (top3.north west) -- (bot3.south west);
    \draw (top4.north west) -- (bot4.south west);
    \draw [-{Stealth[length=8pt]},line width=1pt] ([xshift=3pt]left.east) -- node [above,align=center] {$\mathsf{retreat}(e_4)$\\$\mathsf{retreat}(e_3)$} ([xshift=-3pt]right.west);
  \end{tikzpicture}
  \caption{Left: the internal state after applying $e_1 ... e_4$ from \autoref{graph-hi-hey}. Right: after $\mathsf{retreat}(e_4)$ and $\mathsf{retreat}(e_3)$, the prepare state is updated to mark ``H'' as \texttt{NotInsertedYet}, and the deletion of ``h'' is undone. The effect state is unchanged.}
  \label{crdt-state-1}
\end{figure}

For example, \autoref{crdt-state-1} shows the internal state after applying $e_1 \dots e_4$ from \autoref{graph-hi-hey}, and how that state is updated by retreating $e_4$ and $e_3$ before $e_5$ is applied.
In the effect state, the lowercase ``h'' is marked as deleted, while the uppercase ``H'' and the ``i'' are visible.
In the prepare state, by retreating $e_4$ and $e_3$ the ``H'' is marked as \texttt{NotInsertedYet}, and the deletion of ``h'' is undone ($s_p = \texttt{Ins}$).

\begin{figure}
  \begin{tikzpicture}[font=\footnotesize]
    \tikzstyle{box}=[rectangle,draw,matrix,text height=6pt,text depth=2pt,nodes=right,inner xsep=2pt,inner ysep=1pt,column sep=0]
    \matrix [box] at (0,0) {
      \node {``H''}; & \node (top1) {``h''}; & \node (top2) {``e''}; & \node (top3) {``y''}; & \node (top4) {``!''}; & \node (top5) {``i''}; \\
      \node {$\mathit{id}: 3$}; & \node {$\mathit{id}: 1$}; & \node {$\mathit{id}: 6$}; & \node {$\mathit{id}: 7$}; & \node {$\mathit{id}: 8$}; & \node {$\mathit{id}: 2$}; \\
      \node {$s_p: \texttt{Ins}$}; & \node {$s_p: \texttt{Del 1}$}; & \node {$s_p: \texttt{Ins}$}; & \node {$s_p: \texttt{Ins}$}; & \node {$s_p: \texttt{Ins}$}; & \node {$s_p: \texttt{Del 1}$}; \\
      \node {$s_e: \texttt{Ins}$}; & \node (bot1) {$s_e: \texttt{Del}$}; & \node (bot2) {$s_e: \texttt{Ins}$}; & \node (bot3) {$s_e: \texttt{Ins}$}; & \node (bot4) {$s_e: \texttt{Ins}$}; & \node (bot5) {$s_e: \texttt{Del}$}; \\
    };
    \draw (top1.north west) -- (bot1.south west);
    \draw (top2.north west) -- (bot2.south west);
    \draw (top3.north west) -- (bot3.south west);
    \draw (top4.north west) -- (bot4.south west);
    \draw (top5.north west) -- (bot5.south west);
  \end{tikzpicture}
  \caption{The internal \algname state after replaying all of the events in \autoref{graph-hi-hey}.}
  \label{crdt-state-2}
\end{figure}

\autoref{crdt-state-2} shows the state after replaying all of the events in \autoref{graph-hi-hey}: ``i'' is also deleted, the characters ``e'' and ``y'' are inserted immediately after the ``h'', $e_3$ and $e_4$ are advanced again, and finally ``!'' is inserted after the ``y''.
The figures include the character for the sake of readability, but \algname actually does not store text content in its internal state.

\subsection{Mapping indexes to character IDs}\label{b-trees}

In the event graph, insertion and deletion operations specify the index at which they apply.
In order to update \algname's internal state, we need to map these indexes to the correct record in the sequence, based on the prepare state $s_p$.
To produce the transformed operations, we need to map the positions of these internal records back to indexes again -- this time based on the effect state $s_e$.

A simple but inefficient algorithm would be: to apply a $\mathit{Delete}(i)$ operation we iterate over the sequence of records and pick the $i$th record with a prepare state of $s_p = \texttt{Ins}$ (i.e., the $i$th among the characters that are visible in the prepare state, which is the document state in which the operation should be interpreted).
Similarly, to apply $\mathit{Insert}(i, c)$ we skip over $i - 1$ records with $s_p = \texttt{Ins}$ and insert the new record after the last skipped record (if there have been concurrent insertions at the same position, we may also need to skip over some records with $s_p = \texttt{NotInsertedYet}$, as determined by the list CRDT's insertion ordering).

To reduce the cost of this algorithm from $O(n)$ to $O(\log n)$, where $n$ is the number of characters in the document, we construct a B-tree whose leaves, from left to right, contain the sequence of records representing characters.
We extend the tree into an \emph{order statistic tree} \cite{CLRS2009} (also known as \emph{ranked B-tree}) by adding two integers to each node: the number of records with $s_p = \texttt{Ins}$ contained within that subtree, and the number of records with $s_e = \texttt{Ins}$ in that subtree.
Every time $s_p$ or $s_e$ are updated, we also update those numbers on the path from the updated record to the root.
As the tree is balanced, this update takes $O(\log n)$.

Now we can find the $i$th record with $s_p = \texttt{Ins}$ in logarithmic time by starting at the root of the tree, and adding up the values in the subtrees that have been skipped.
Moreover, once we have a record in the sequence we can efficiently determine its index in the effect state by going in the opposite direction: working upwards in the tree towards the root, and summing the numbers of records with $s_e = \texttt{Ins}$ that lie in subtrees to the left of the starting record.
This allows us to efficiently transform the index of an operation from the prepare version into the effect version.
If the character was already deleted in the effect version ($s_e = \texttt{Del}$), the transformed operation is a no-op.

The above process makes $\mathsf{apply}(e_i)$ efficient.
We also need to efficiently perform $\mathsf{retreat}(e_i)$ and $\mathsf{advance}(e_i)$, which modify the prepare state $s_p$ of the record inserted or deleted by $e_i$.
While advancing/retreating we cannot look up a target record by its index. Instead, we maintain a second B-tree, mapping from each event's ID to the target record. The mapping stores a value depending on the type of the event:

\begin{itemize}
\item For delete events, we store the ID of the character deleted by the event.
\item For insert events, we store a pointer to the leaf node in the first B-tree that contains the corresponding record. When nodes in the first B-tree are split, we update the pointers in the second B-tree accordingly.
\end{itemize}

On every $\mathsf{apply}(e)$, after updating the sequence as above, we update this mapping.
When we later call $\mathsf{retreat}(e)$ or $\mathsf{advance}(e)$, that event $e$ must have already been applied, and hence $e.\mathit{id}$ must appear in this mapping.
This map allows us to advance or retreat in logarithmic time.

\subsection{Clearing the internal state}\label{clearing}

As described so far, the algorithm retains every insertion since document creation forever in its internal state, consuming a lot of memory, and requiring the entire event graph to be replayed in order to restore the internal state.
We now introduce a further optimisation that allows \algname to completely discard its internal state from time to time, and replay only a subset of the event graph.

We define a version $V \subseteq G$ to be a \emph{critical version} in an event graph $G$ iff it partitions the graph into two subsets of events $G_1 = \mathsf{Events}(V)$ and $G_2 = G - G_1$ such that all events in $G_1$ happened before all events in $G_2$:
\begin{equation*}
  \forall e_1 \in G_1: \forall e_2 \in G_2: e_1 \rightarrow e_2.
\end{equation*}

Equivalently, $V$ is a critical version iff every event in the graph is either in $V$, or an ancestor of some event in $V$, or happened after \emph{all} of the events in $V$:
\begin{equation*}
  \forall e_1 \in G: e_1 \in \mathsf{Events}(V) \vee (\forall e_2 \in V: e_2 \rightarrow e_1).
\end{equation*}
A critical version might not remain critical forever; it is possible for a critical version to become non-critical because a concurrent event is added to the graph.

A key insight in the design of \algname is that critical versions partition the event graph into sections that can be processed independently. Events that happened at or before a critical version do not affect how any event after the critical version is transformed. 
This observation enables two important optimisations:

\begin{itemize}
\item Any time the version of the event graph processed so far is critical, we can discard the internal state (including both B-trees and all $s_p$ and $s_e$ values), and replace it with a placeholder as explained in \autoref{partial-replay}.
\item If both an event's version and its parent version are critical versions, there is no need to traverse the B-trees and update the CRDT state, since we would immediately discard that state anyway. In this case, the transformed event is identical to the original event, so the event can simply be emitted as-is.
\end{itemize}

These optimisations make it very fast to process documents that are mostly edited sequentially (e.g., because the authors took turns and did not write concurrently, or because there is only a single author), since most of the event graph of such a document is a linear chain of critical versions.

The internal state can be discarded once replay is complete, although it is also possible to retain the internal state for transforming future events.
If a replica receives events that are concurrent with existing events in its graph, but the replica has already discarded its internal state resulting from those events, it needs to rebuild some of that state.
It can do this by identifying the most recent critical version that happened before the new events, replaying the existing events that happened after that critical version, and finally applying the new events.
Events from before that critical version are not replayed.
Since most editing histories have critical versions from time to time, this means that usually only a small subset of the event graph is replayed.
In the worst case, this algorithm replays the entire event graph.

\subsection{Partial event graph replay}\label{partial-replay}

Assume that we want to add event $e_\mathrm{new}$ to the event graph $G$, that $V_\mathrm{curr} = \mathsf{Version}(G)$ is the current document version reflecting all events except $e_\mathrm{new}$, and that $V_\mathrm{crit} \neq V_\mathrm{curr}$ is the latest critical version in $G \cup \{e_\mathrm{new}\}$ that happened before both $e_\mathrm{new}$ and $V_\mathrm{curr}$.
Further assume that we have discarded the internal state, so the only information we have is the latest document state at $V_\mathrm{curr}$ and the event graph; in particular, without replaying the entire event graph we do not know the document state at $V_\mathrm{crit}$.

Luckily, the exact internal state at $V_\mathrm{crit}$ is not needed. All we need is enough state to transform $e_\mathrm{new}$ and rebase it onto the document at $V_\mathrm{curr}$.
This internal state can be obtained by replaying the events since $V_\mathrm{crit}$, that is, $G - \mathsf{Events}(V_\mathrm{crit})$, in topologically sorted order:

\begin{enumerate}
\item We initialise a new internal state corresponding to version $V_\mathrm{crit}$. Since we do not know the the document state at this version, we start with a single placeholder record representing the unknown document content.
\item We update the internal state by replaying events from $V_\mathrm{crit}$ to $V_\mathrm{curr}$, but we do not output transformed operations during this stage.
\item Finally, we apply the new event $e_\mathrm{new}$ and output the transformed operation. If we received a batch of new events, we apply them in topologically sorted order.
\end{enumerate}

The placeholder record we start with in step 1 represents the range of indexes $[0, \infty]$ of the document state at $V_\mathrm{crit}$ (we do not know the length of the document at that version, but we can still have a placeholder for arbitrarily many indexes).
Placeholders are counted as the number of characters they represent in the order statistic tree construction, and they have the same length in both the prepare and the effect versions.
We then apply events as follows:

\begin{itemize}
\item Applying an insertion at index $i$ creates a record with $s_p = s_e = \texttt{Ins}$ and the ID of the insertion event. We map the index to a record in the sequence using the prepare state as usual; if $i$ falls within a placeholder for range $[j, k]$, we split it into a placeholder for $[j, i-1]$, followed by the new record, followed by a placeholder for $[i, k]$. Placeholders for empty ranges are omitted.
\item Applying a deletion at index $i$: if the deleted character was inserted prior to $V_\mathrm{crit}$, the index must fall within a placeholder with some range $[j, k]$. We split it into a placeholder for $[j, i-1]$, followed by a new record with $s_p = \texttt{Del 1}$ and $s_e = \texttt{Del}$, followed by a placeholder for $[i+1, k]$. The new record has a placeholder ID that only needs to be unique within the local replica, and need not be consistent across replicas.
\item Applying a deletion of a character inserted since $V_\mathrm{crit}$ updates the record created by the insertion.
\end{itemize}

Before applying an event we retreat and advance as usual.
The algorithm never needs to retreat or advance an event that happened before $V_\mathrm{crit}$, therefore every retreated or advanced event ID must exist in second B-tree.

If there are concurrent insertions at the same position, we invoke the CRDT algorithm to place them in a consistent order as discussed in \autoref{prepare-effect-versions}.
Since all concurrent events must be after $V_\mathrm{crit}$, they are included in the replay.
When we are seeking for the insertion position, we never need to seek past a placeholder, since the placeholder represents characters that were inserted before $V_\mathrm{crit}$.

\subsection{Algorithm complexity}\label{complexity}

Say we have two users who have been working offline, generating $k$ and $m$ events respectively.
When they come online and merge their event graphs, the latest critical version is immediately prior to the branching point.
If the branch of $k$ events comes first in the topological sort, the replay algorithm first applies $k$ events, then retreats $k$ events, applies $m$ events, and finally advances $k$ events again.
Asymptotically, $O(k+m)$ calls to apply/retreat/advance are required regardless of the order of traversal, although in practice the algorithm is faster if $k<m$ since we don't need to retreat/advance on the branch that is visited last.

Each apply/retreat/advance requires one or two traversals of first B-tree, and at most one traversal of the second B-tree.
The upper bound on the number of entries in each tree (including placeholders) is $2(k+m)+1$, since each event generates at most one new record and one placeholder split.
Since the trees are balanced, the cost of each traversal is $O(\log(k+m))$.
Overall, the cost of merging branches with $k$ and $m$ events is therefore $O((k+m) \log(k+m))$.

We can also give an upper bound on the complexity of replaying an arbitrary event graph with $n$ events.
Each event is applied exactly once, and before each event we retreat or advance each prior event at most once, at $O(\log n)$ cost.
The worst-case complexity of the algorithm is therefore $O(n^2 \log n)$, but this case is unlikely to occur in practice.

\subsection{Storing the event graph}\label{storage}

To store the event graph compactly on disk, we developed a compression technique that takes advantage of how people typically write text documents: namely, they tend to insert or delete consecutive sequences of characters, and less frequently hit backspace or move the cursor to a new location.
\algname's event graph storage format is inspired by the Automerge CRDT library \cite{automerge-storage,automerge-columnar}, which in turn uses ideas from column-oriented databases \cite{Abadi2013,Stonebraker2005}. We also borrow some bit-packing tricks from the Yjs CRDT library \cite{yjs}.

We first topologically sort the events in the graph. Different replicas may sort the graph differently, but locally to one replica we can identify an event by its index in this sorted order.
Then we store different properties of events in separate byte sequences called \emph{columns}, which are then combined into one file with a simple header.
Each column stores different fields of the event data. The columns are:

\begin{itemize}
\item \emph{Event type, start position, and run length.} For example, ``the first 23 events are insertions at consecutive indexes starting from index 0, the next 10 events are deletions at consecutive indexes starting from index 7,'' and so on. We encode this using a variable-length binary encoding of integers, which represents small numbers in one byte, larger numbers in two bytes, etc.
\item \emph{Inserted content.} An insertion event contains exactly one character (a Unicode scalar value), and a deletion does not. We concatenate the UTF-8 encoding of the characters for insertion events in the same order as they appear in the first column, and LZ4-compress.
\item \emph{Parents.} By default we assume that every event has exactly one parent, namely its predecessor in the topological sort. Any events for which this is not true are listed explicitly, for example: ``the first event has zero parents; the 153rd event has two parents, namely events 31 and 152;'' and so on.
\item \emph{Event IDs.} Each event is uniquely identified by a pair of a replica ID and a per-replica sequence number. This column stores runs of event IDs, for example: ``the first 1085 events are from replica $A$, starting with sequence number 0; the next 595 events are from replica $B$, starting with sequence number 0;'' and so on.
\end{itemize}


Replicas can optionally also store a copy of the final document state reflecting all events. This allows documents to be loaded from disk without replaying the event graph.

We send the same data format over the network when replicating the entire event graph.
When sending a subset of events over the network (e.g., a single event during real-time collaboration), references to parent events outside of that subset need to be encoded using event IDs of the form $(\mathit{replicaID}, \mathit{seqNo})$, but otherwise the encoding is similar.

\begin{table*}
  \caption{The text editing traces used in our evaluation.
    \emph{Repeats}: number of times the original trace was repeated to normalise its length relative to the other traces.
    \emph{Events}: total number of editing events, in thousands, including repeats. Each inserted or deleted character counts as one event.
    \emph{Average concurrency}: mean number of concurrent branches per event in the trace.
    \emph{Graph runs}: number of sequential runs of events (linear event sequences without branching/merging).
    \emph{Authors}: number of users who added at least one event.
    \emph{Chars remaining}: percentage of inserted characters that remain in the document (i.e., are never deleted) after all events have been merged.
    \emph{Final size}: resulting document size in kilobytes after all events have been merged.
  }
  \label{traces-table}
  \footnotesize
\begin{tabular}{ccrrrrrrr}
\toprule
\textbf{Name} &
\textbf{Type} &
\textbf{Repeats} &
\textbf{Events (k)} &
\textbf{Avg Concurrency} &
\textbf{Graph runs} &
\textbf{Authors} &
\textbf{Chars remaining (\%)} &
\textbf{Final size (kB)} \\\midrule
S1 & sequential & 3 & 779 & 0.00 & 1 & 2 & 57.5 & 307.2 \\
S2 & sequential & 3 & 1105 & 0.00 & 1 & 1 & 26.7 & 166.3 \\
S3 & sequential & 1 & 2339 & 0.00 & 1 & 2 & 9.9 & 119.5 \\
C1 & concurrent & 25 & 652 & 0.43 & 92101 & 2 & 90.1 & 521.5 \\
C2 & concurrent & 25 & 608 & 0.44 & 133626 & 2 & 93.0 & 516.3 \\
A1 & asynchronous & 1 & 947 & 0.10 & 101 & 194 & 7.8 & 37.2 \\
A2 & asynchronous & 2 & 698 & 6.11 & 2430 & 299 & 49.6 & 222.0 \\
\bottomrule
\end{tabular}%
\end{table*}

\section{Evaluation}\label{benchmarking}


We created a TypeScript implementation of \algname optimised for simplicity and readability \cite{reference-reg}, and a production-ready Rust implementation optimised for performance \cite{dt}.
The TypeScript version omits the run-length encoding of internal state, B-trees, and topological sorting heuristics.

To evaluate the correctness of \algname we proved that the algorithm complies with Attiya et al.'s \emph{strong list specification} \cite{Attiya2016} (see \ifincludeappendix\autoref{proofs}\else the extended version of this paper \cite{extended-version}\fi).
We also performed randomised property testing on the implementations, including checking that our implementations converge to the same result.

\subsection{Editing traces}\label{editing-traces}

As there is no established benchmark for collaborative text editing, we collected a set of editing traces from real documents and made them freely available \cite{editing-traces}.
Statistics for these traces are given in \autoref{traces-table}.
The traces represent the editing history of the following documents:

\begin{description}
    \item[Sequential Traces:] These traces have no concurrency. Trace S1 is the LaTeX source of a journal paper \cite{Kleppmann2017,automerge-perf}, S2 is an 8,800-word blog post \cite{crdts-go-brrr}, and S3 is the text of this paper that you are currently reading.
        S2 has one author; S1 and S3 have two authors who took turns.
    \item[Concurrent Traces:] Trace C1 is two users collaboratively writing a reflection on TV series they have just watched. C2 is two users collaboratively reflecting on going to clown school together. We added 1~sec (C1) or 0.5~sec (C2) artificial latency between the users to increase the incidence of concurrent operations.
    \item[Asynchronous Traces:] We reconstructed the editing trace of some files in Git repositories. The event graph mirrors the branching/merging of Git commits. Since Git does not record individual keystrokes, we generated the minimal edit operations necessary to perform each commit's diff. Trace A1 is \texttt{src/node.cc} from the Git repository for Node.js \cite{node-src-nodecc}, and A2 is \texttt{Makefile} from the Git repository for Git itself \cite{git-makefile}.
\end{description}

Even though the sequential traces do not exercise the merging algorithm, they are important to include in the benchmark.
Anecdotal evidence suggests that the majority of documents in practice are sequentially edited~-- that is, they either have a single author, or multiple authors who take turns to write.
The concurrent traces have many short-lived branches (see the \emph{graph runs} column in \autoref{traces-table}).
The asynchronous traces have a small number of long-running branches, which occur in the context of offline working or editors that support explicit branching and merging~\cite{Upwelling,Patchwork}.

We recorded the sequential and concurrent traces ourselves, collaborating with friends or colleagues, using an instrumented text editor that recorded keystroke-granularity editing events.
All contributors to the traces have given their consent for their recorded keystroke data to be made publicly available and to be used for benchmarking purposes.
The asynchronous traces are derived from public data on GitHub.

The recorded editing traces originally varied a great deal in length.
To allow easier comparison of measurements between traces, we have normalised the length of the traces to contain approximately 500k inserted characters (except for S3, which is approximately twice this size).
We did this by repeating the original S1 and S2 traces 3 times, the original C1 and C2 traces 25 times, and the original A2 trace twice.
The statistics given in \autoref{traces-table} are after repetition.

\subsection{Experimental approach}

To evaluate the performance of \algname, we compare our Rust implementation with two popular CRDT libraries: Automerge v0.5.9 \cite{automerge} (Rust) and Yjs v13.6.10 \cite{yjs} (JavaScript).\footnote{We also tested Yrs \cite{yrs}, the Rust rewrite of Yjs by the original authors. At the time of our experiments it performed worse than Yjs, so we omitted it from our results.}
We only test their collaborative text datatypes, and not the other features they support.
However, the performance of these libraries varies widely.
In an effort to distinguish between implementation differences and algorithmic differences, we have also implemented our own performance-optimised reference CRDT library.
This library shares most of its code with our Rust \algname implementation, enabling a more like-to-like comparison between the traditional CRDT approach and \algname.
Our reference CRDT outperforms both Yjs and Automerge.

We have also implemented a simple OT library using the TTF algorithm \cite{Oster2006TTF}.
We do not use the server-based Jupiter algorithm \cite{Nichols1995} or the popular OT library ShareDB \cite{sharedb} because they do not support the branching and merging patterns that occur in our asynchronous traces.

We compare these implementations along 3 dimensions:\footnote{Experimental setup: We ran the benchmarks on a Ryzen 7950x CPU running Linux 6.5.0-28 and 64GB of RAM.
We compiled Rust code with rustc v1.78.0 in release mode with ``\texttt{-C target-cpu=native}''. Rust code was pinned to a single CPU core to reduce variance across runs. 
For JavaScript (Yjs) we used Node.js v22.2.0. 
All reported time measurements are the mean of at least 100 test iterations (except for the case where OT takes an hour to merge trace A2, which we ran 10 times).
The standard deviation for all benchmark results was less than 1.2\% of the mean, except for the Yjs measurements, which had a stddev of less than 6\%. Error bars on our graphs are too small to be visible.}

\begin{description}
\item[Speed:] The CPU time to load a document into memory, and to merge a set of updates from a remote replica.
\item[Memory usage:] The RAM used while a document is loaded and while merging remote updates.
\item[Storage size:] The number of bytes needed to persistently store a document or send it over the network.
\end{description}

By design, our experiments do not run over a network, but focus on single-node CPU and memory use (along with storage size, which is also a measure of network bandwidth used).
We made this choice because memory use and loading time are the biggest challenges with CRDTs, and CPU time on long-running branches is the biggest challenge with OT.
We delegate replication to the reliable broadcast protocol, which is beyond the scope of this paper.
Moreover, the results in a distributed setup would depend highly on the online/offline pattern we assume; running on a single node allows us to more directly compare the algorithms.


\subsection{Time taken to load and merge changes}

The slowest operations in many collaborative editors are:
\begin{itemize}
\item merging a large set of edits from a remote replica into the local state (e.g. reconnecting after working offline);
\item loading a document from disk into memory so that it can be displayed and edited.
\end{itemize}

To simulate a worst-case merge, we start with an empty document and then merge an entire editing trace into it.
In the case of \algname this means replaying the full trace.
\autoref{chart-remote} shows the merge time for each implementation.
Such a large merge occurs in practice when a node has been offline for a long time and needs to catch up on most of the editing history, or when replaying a subset of the editing history in order to reconstruct a historical version of the document (for history visualisation).
Moreover, in the CRDT implementations we tested, loading a document from disk takes the same CPU time as merging all of the events.
Smaller merges, which occur during online collaboration, are faster.

After merging the entire trace, we save the resulting local replica state to disk and measure the CPU time to load it back into memory.
Since loading and merging take the same time in the CRDTs we tested, we do not show their loading times separately in \autoref{chart-remote}.
In these algorithms, the CRDT metadata needs to be in memory for the user to be able to edit the document, or to apply any updates received from other replicas (even when there is no concurrency).
In contrast, OT and \algname can load documents orders of magnitude faster than CRDTs by caching the final document state on disk, and loading just this data (essentially a plain text file).
\algname and OT only need to load the event graph when merging concurrent changes or to reconstruct old document versions.
Document edits by the local user or applying non-concurrent remote events do not need the event graph.


\begin{figure}
  \includegraphics[width=\linewidth]{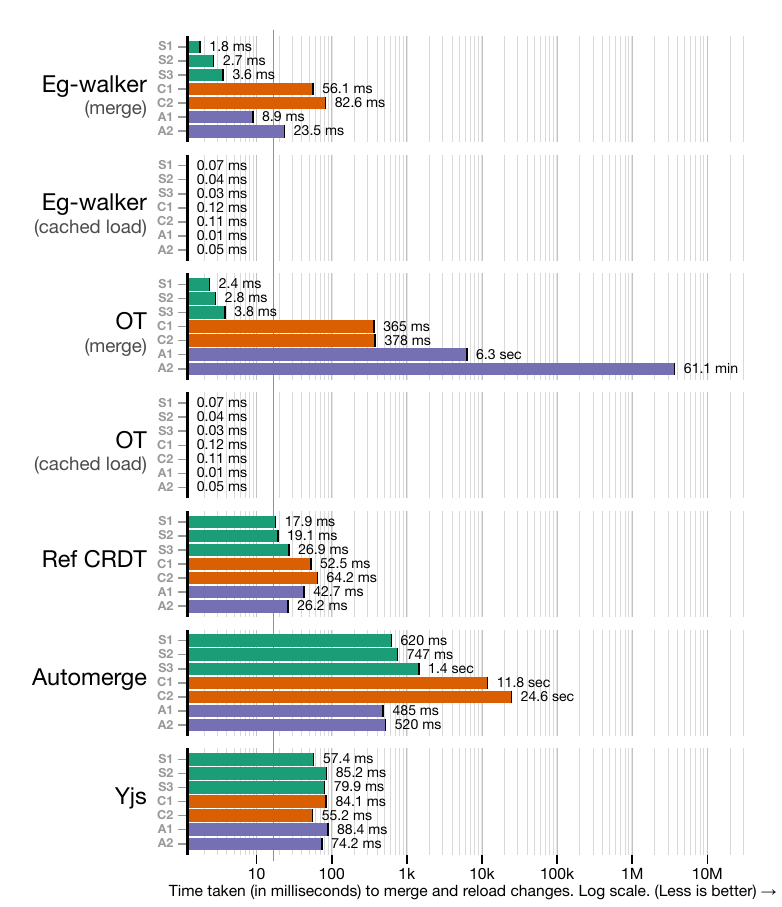}
  \caption{The CPU time taken by each algorithm to merge all events in each trace (as received from a remote replica), or to reload the resulting document from disk. The CRDT implementations (Ref CRDT, Automerge and Yjs) take the same amount of time to merge changes as they do to subsequently load the document. The red line at 16 ms indicates the time budget available to an application that wants to show the results of an operation by the next frame, assuming a display with a 60 Hz refresh rate.}
  \label{chart-remote}
\end{figure}

We can see in \autoref{chart-remote} that \algname and OT are very fast to merge the sequential traces (S1, S2, S3), since they simply apply the operations with no transformation.
However, OT performance degrades dramatically on the asynchronous traces (6 seconds for A1, and 1 hour for A2) due to the quadratic complexity of the algorithm, whereas \algname remains fast (160,000$\times$ faster in the case of A2).

On the concurrent traces (C1, C2) and asynchronous trace A2, the merge time of \algname is similar to that of our reference CRDT, since they perform similar work.
Both are significantly faster than the state-of-the-art Yjs and Automerge CRDT libraries; this is due to implementation differences and not fundamental algorithmic reasons.

On the sequential traces \algname outperforms our reference CRDT by a factor of 7--10$\times$, and on trace A1 (which contains large sequential sections) \algname is 5$\times$ faster.
Comparing to Yjs or Automerge, this speedup is greater still.
This is due to \algname's ability to clear its internal state and skip all of the internal state manipulation on critical versions (\autoref{clearing}).
To quantify this effect, \autoref{speed-ff} compares the time taken to replay all our traces with this optimisation enabled and disabled.
We see that the optimisation is effective for S1, S2, S3, and A1, whereas for C1, C2, and A2 it makes little difference (A2 contains no critical versions).

\begin{figure}
  \includegraphics[width=\linewidth]{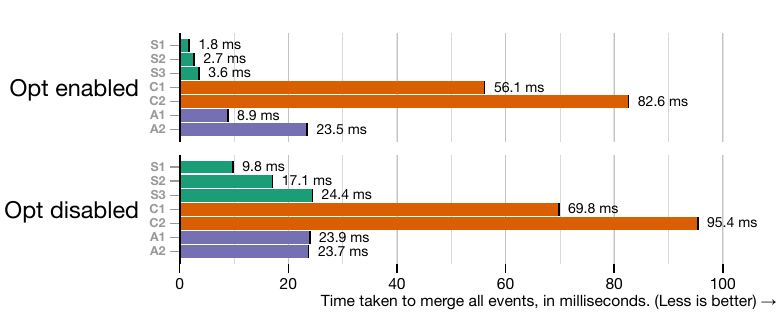}
  \caption{Time taken for \algname to merge all events in a trace, with and without the optimisations from \autoref{clearing}.}
  \label{speed-ff}
\end{figure}


When merging an event graph with very high concurrency (like A2), the performance of \algname is highly dependent on the order in which events are traversed.
A poorly chosen traversal order can make this trace as much as 8$\times$ slower to merge. Our topological sort algorithm (\autoref{graph-walk}) tries to avoid such pathological cases.

\subsection{RAM usage}

\autoref{chart-memusage} shows the memory footprint (retained heap size) of each algorithm.
For \algname and OT it shows both peak usage (while replaying the entire editing trace) and ``steady state'' memory usage (after temporary data and \algname's internal state are discarded and the event graph is written out to disk).
For the CRDTs the figure shows steady state memory usage; peak usage is up to 25\% higher.

\algname's peak memory use is similar to our reference CRDT's steady state: slightly lower on the sequential traces, and approximately double for the concurrent traces.
However, the steady-state memory use of \algname is 1--2 orders of magnitude lower than the best CRDT.
This is a significant result, since the steady state is what matters during normal operation while a document is being edited.
Memory usage reaches this peak only when replaying the entire trace; it is lower when merging a small branch.
Yjs has up to a 3$\times$ greater memory use than our reference CRDT, and Automerge an order of magnitude greater.


OT has the same memory use as \algname in the steady state, but significantly higher peak memory use on the C1, C2, and A2 traces (6.8~GiB for A2).
The reason is that our OT implementation memoizes intermediate transformed operations to improve performance.
This memory use could be reduced at the cost of increased merge times.
The computer we used for benchmarking had enough RAM to prevent swapping in all cases.

\begin{figure}
  \includegraphics[width=\linewidth]{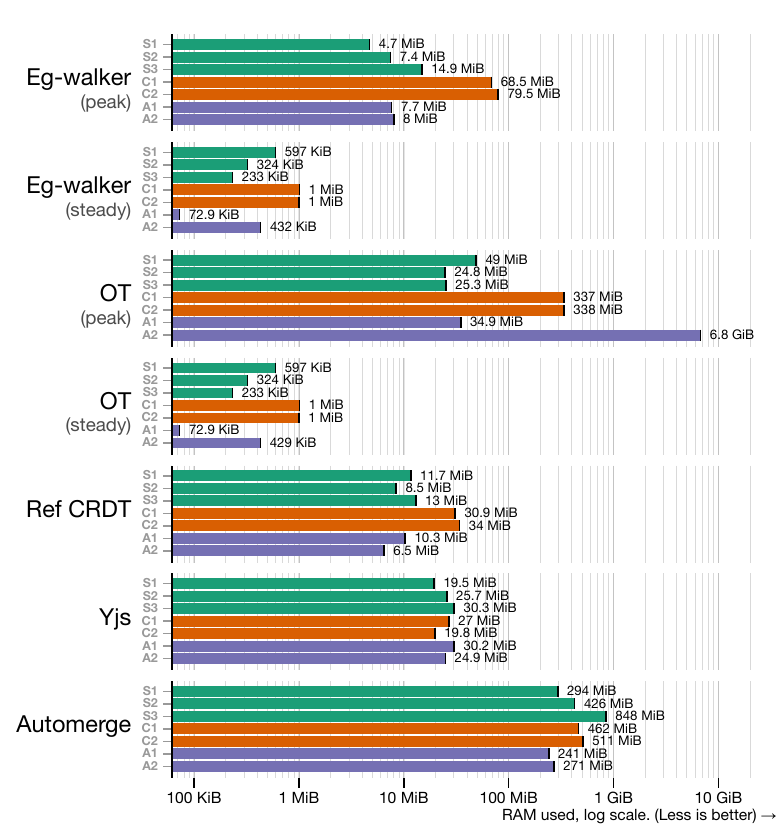}
  \caption{RAM used while merging an editing trace received from another replica. \algname and OT only retain the current document text in the steady state, but need additional RAM at peak while merging concurrent changes.}
  \label{chart-memusage}
\end{figure}

\subsection{Storage size}

Our binary encoding of event graphs (\autoref{storage}) results in smaller files than the equivalent internal CRDT state persisted by Automerge, and in many cases, Yjs.
To ensure a like-for-like comparison we have disabled \algname's built-in LZ4 and Automerge's built-in gzip compression. Enabling this compression further reduces the file sizes.


Automerge stores the full editing history of a document, and \autoref{chart-dt-vs-automerge} shows the resulting file sizes relative to the raw concatenated text content of all insertions, with and without a cached copy of the final document state (to enable fast loads).


In contrast, Yjs only stores the resulting document text, and any data needed to merge changes.
Yjs does not store deleted characters or the happened-before relationship between events.
\autoref{chart-dt-vs-yjs} compares Yjs to the equivalent event graph encoding in which we only store the final document text and operation metadata.
Our encoding is smaller than Yjs on the sequential and async traces, but larger for the concurrent traces, where the edges in the event graph take more space.
The overhead of storing the event graph is between 20\% and 3$\times$ the final plain text file size.

\begin{figure}
  \includegraphics[width=\linewidth]{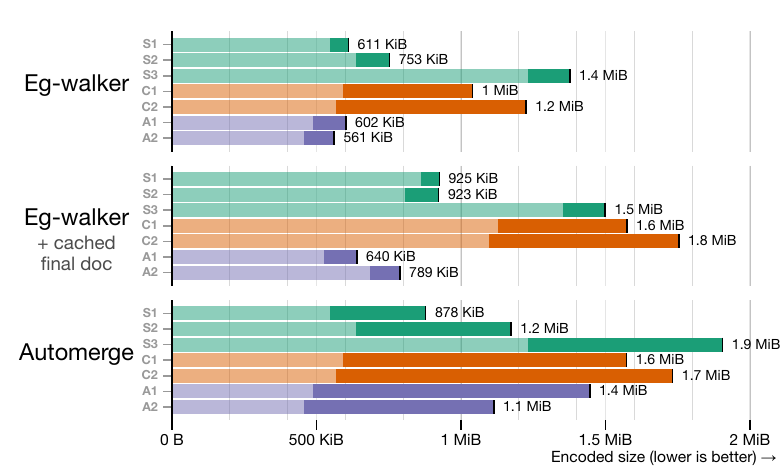}
  \caption{File size storing edit traces using \algname's event graph encoding (with and without final document caching) compared to Automerge. The lightly shaded region in each bar shows the concatenated length of all stored text. This acts as lower bound on the file size.}
  \label{chart-dt-vs-automerge}
\end{figure}

\begin{figure}
  \includegraphics[width=\linewidth]{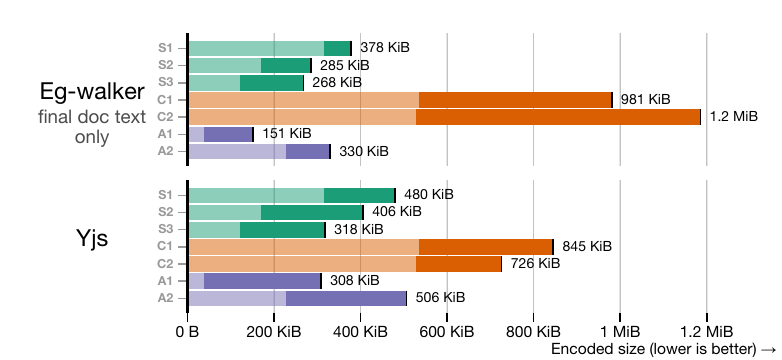}
  \caption{File size storing edit traces in which deleted text content has been omitted, as is the case with Yjs. The lightly shaded region in each bar is the size of the final document, which is a lower bound on the file size.}
  \label{chart-dt-vs-yjs}
\end{figure}

\section{Related Work}\label{related-work}

\algname is an example of a \emph{pure operation-based CRDT} \cite{polog}, which is a family of algorithms that capture a DAG (or partially ordered log) of operations in the form they were generated, and define the current state as a query over that log.
However, existing publications on pure operation-based CRDTs \cite{Almeida2023,Bauwens2023} present only datatypes such as maps, sets, and registers; \algname adds a list/text datatype to this family.

MRDTs \cite{Soundarapandian2022} are similarly based on a DAG, and use a three-way merge function to combine two branches since their lowest common ancestor; if the LCA is not unique, a recursive merge is used.
MRDTs for various datatypes have been defined, but so far none offers text with arbitrary insertion and deletion.

Toomim's \emph{time machines} approach \cite{time-machines} shares a conceptual foundation with \algname: both are based on traversing an event graph, with operations being transformed from their original form into a form that can be applied in topologically sorted order.
Toomim also points out that CRDTs can implement this transformation.
\algname is a concrete, optimised implementation of the time machine approach; novel contributions of \algname include updating the prepare version by retreating and advancing, as well as the details of internal state clearing and partial event graph replay.

\algname is also an \emph{operational transformation} (OT) algorithm \cite{Ellis1989}. 
OT has a long lineage of research going back to the 1990s \cite{Nichols1995,Ressel1996,Sun1998}.
To our knowledge, all existing OT algorithms consist of a set of \emph{transformation functions} that transform one operation with regard to one other operation, and a \emph{control algorithm} that traverses an editing history and invokes the necessary transformations.
A problem with this architecture is that when two replicas have diverged and each performed $n$ operations, merging their states unavoidably has a cost of at least $O(n^2)$; in some OT algorithms the cost is cubic or even worse \cite{Li2006,Roh2011RGA,Sun2020OT}.
\algname departs from the transformation function/control algorithm architecture and instead performs transformations using an internal CRDT state, which reduces the merging cost to $O(n \log n)$ in most cases; the upper bound of $O(n^2 \log n)$ is unlikely to occur in practical editing histories.


Other collaborative text editing algorithms \cite{Preguica2009,Roh2011RGA,fugue,Weiss2010} belong to the family of \emph{conflict-free replicated data types} (CRDTs) \cite{Shapiro2011}.
To our knowledge, all existing CRDTs for text work by assigning each character a unique ID, and translating index-based insertions and deletions into ID-based ones.
These unique IDs need to be held in memory when a document is being edited, persisted for the lifetime of the document, and sent to all replicas.
In contrast, \algname uses unique IDs only transiently during replay but does not persist or replicate them, and it can free its internal state whenever a critical version is reached.
\algname needs to store the event graph as long as concurrent operations may arrive, but this takes less space than CRDT state, and it only needs to be in-memory while merging concurrent operations.
Most of the time the event graph can remain on disk.

Gu et al.'s mark \& retrace method \cite{Gu2005} builds a CRDT-like structure containing the entire editing history, not only the parts being merged.
Differential synchronization \cite{Fraser2009} relies on heuristics such as similarity-matching of text to perform merges, which is not guaranteed to converge.

Version control systems such as Git \cite{Coglan2019}, Pijul \cite{pijul}, and Darcs \cite{darcs} also track the editing history of text files.
However, they do not support real-time collaboration, and they are line-based (good for code), whereas \algname is character-based (which is better for prose).
Git uses a three-way merge, which is not reliable on files containing substantial repeated text \cite{Khanna2007}.
Merges in Darcs have worst-case exponential complexity \cite{darcs-book}, and Pijul merges using a CRDT that assigns a unique ID to every line \cite{pijul-theory}.

\section{Conclusion}

\algname is a new approach to collaborative text editing that has characteristics of both CRDTs and OT.
It is orders of magnitude faster than existing algorithms in the best cases, and competitive with the fastest existing implementations in the worst cases.
Compared to existing CRDTs, it uses orders of magnitude less memory in the steady state, files are vastly faster to load for editing, and in documents with largely sequential editing edits from other users are merged much faster.
Compared to OT, large merges (e.g., when two users each did a significant amount of work while offline) are much faster, and \algname supports arbitrary branching/merging patterns (e.g., in peer-to-peer collaboration).

Since \algname stores a fine-grained editing history of a document, it allows applications to show that history to the user, and to restore arbitrary past versions of a document by replaying subsets of the graph.
The underlying event graph is not specific to the \algname algorithm, so we expect that the same data format will be able to support future collaborative editing algorithms as well.
The core idea of \algname is not specific to plain text; we believe it can be extended to other file types such as rich text, graphics, or spreadsheets.

Until now, many applications have been implemented using centralised server-based OT to avoid the overheads of CRDTs.
\algname is the first CRDT to match OT's memory use and performance on sequential editing histories (which are common in practice), while avoiding the quadratic merge complexity that makes OT impractical for long-running branches.
By requiring no server, \algname makes it possible for decentralised, local-first software \cite{Kleppmann2019localfirst} to become competitive with traditional cloud software.

\begin{acks}
  This work was made possible by generous support from Michael Toomim, the Braid community and the Invisible College. None of this would have happened without their help. Thank you for the endless conversations we have shared about collaborative editing.
  Martin Kleppmann gratefully acknowledges his crowdfunding supporters including Mintter and SoftwareMill.
  Thank you to Matthew Weidner, Joe Hellerstein, and our shepherd Diogo Behrens for feedback that helped improve this paper.
\end{acks}

\clearpage
\bibliographystyle{ACM-Reference-Format}
\bibliography{works}


\begin{thebibliography}{61}


\ifx \showCODEN    \undefined \def \showCODEN     #1{\unskip}     \fi
\ifx \showDOI      \undefined \def \showDOI       #1{#1}\fi
\ifx \showISBNx    \undefined \def \showISBNx     #1{\unskip}     \fi
\ifx \showISBNxiii \undefined \def \showISBNxiii  #1{\unskip}     \fi
\ifx \showISSN     \undefined \def \showISSN      #1{\unskip}     \fi
\ifx \showLCCN     \undefined \def \showLCCN      #1{\unskip}     \fi
\ifx \shownote     \undefined \def \shownote      #1{#1}          \fi
\ifx \showarticletitle \undefined \def \showarticletitle #1{#1}   \fi
\ifx \showURL      \undefined \def \showURL       {\relax}        \fi
\providecommand\bibfield[2]{#2}
\providecommand\bibinfo[2]{#2}
\providecommand\natexlab[1]{#1}
\providecommand\showeprint[2][]{arXiv:#2}

\bibitem[aut({[n.\,d.]})]%
        {automerge}
 \bibinfo{year}{[n.\,d.]}\natexlab{}.
\newblock \bibinfo{booktitle}{\emph{Automerge CRDT}}.
\newblock
\urldef\tempurl%
\url{https://automerge.org/}
\showURL{%
\tempurl}


\bibitem[dar({[n.\,d.]})]%
        {darcs}
 \bibinfo{year}{[n.\,d.]}\natexlab{}.
\newblock \bibinfo{booktitle}{\emph{Darcs}}.
\newblock
\urldef\tempurl%
\url{https://darcs.net/}
\showURL{%
\tempurl}


\bibitem[git({[n.\,d.]})]%
        {git-makefile}
 \bibinfo{year}{[n.\,d.]}\natexlab{}.
\newblock \bibinfo{booktitle}{\emph{Makefile for Git}}.
\newblock
\urldef\tempurl%
\url{https://github.com/git/git/blob/master/Makefile}
\showURL{%
\tempurl}


\bibitem[nod({[n.\,d.]})]%
        {node-src-nodecc}
 \bibinfo{year}{[n.\,d.]}\natexlab{}.
\newblock \bibinfo{booktitle}{\emph{Node.js source code: src/node.cc}}.
\newblock
\urldef\tempurl%
\url{https://github.com/nodejs/node/blob/main/src/node.cc}
\showURL{%
\tempurl}


\bibitem[pij({[n.\,d.]})]%
        {pijul-theory}
 \bibinfo{year}{[n.\,d.]}\natexlab{}.
\newblock \bibinfo{booktitle}{\emph{The Pijul manual: Theory}}.
\newblock
\urldef\tempurl%
\url{https://pijul.org/manual/theory.html}
\showURL{%
\tempurl}
\newblock
\shownote{Archived at \url{https://perma.cc/NAU4-SMYZ}}.


\bibitem[Abadi et~al\mbox{.}(2013)]%
        {Abadi2013}
\bibfield{author}{\bibinfo{person}{Daniel~J Abadi}, \bibinfo{person}{Peter
  Boncz}, \bibinfo{person}{Stavros Harizopoulos}, \bibinfo{person}{Stratos
  Idreos}, {and} \bibinfo{person}{Samuel Madden}.}
  \bibinfo{year}{2013}\natexlab{}.
\newblock \showarticletitle{The Design and Implementation of Modern
  Column-Oriented Database Systems}.
\newblock \bibinfo{journal}{\emph{Foundations and Trends in Databases}}
  \bibinfo{volume}{5}, \bibinfo{number}{3} (\bibinfo{year}{2013}),
  \bibinfo{pages}{197--280}.
\newblock
\urldef\tempurl%
\url{https://doi.org/10.1561/1900000024}
\showDOI{\tempurl}


\bibitem[Almeida(2023)]%
        {Almeida2023}
\bibfield{author}{\bibinfo{person}{Paulo~Sérgio Almeida}.}
  \bibinfo{year}{2023}\natexlab{}.
\newblock \showarticletitle{Approaches to Conflict-free Replicated Data Types}.
\newblock  (\bibinfo{date}{Oct.} \bibinfo{year}{2023}).
\newblock
\showeprint{2310.18220}
\urldef\tempurl%
\url{https://arxiv.org/abs/2310.18220}
\showURL{%
\tempurl}


\bibitem[Attiya et~al\mbox{.}(2016)]%
        {Attiya2016}
\bibfield{author}{\bibinfo{person}{Hagit Attiya}, \bibinfo{person}{Sebastian
  Burckhardt}, \bibinfo{person}{Alexey Gotsman}, \bibinfo{person}{Adam
  Morrison}, \bibinfo{person}{Hongseok Yang}, {and} \bibinfo{person}{Marek
  Zawirski}.} \bibinfo{year}{2016}\natexlab{}.
\newblock \showarticletitle{Specification and Complexity of Collaborative Text
  Editing}. In \bibinfo{booktitle}{\emph{ACM Symposium on Principles of
  Distributed Computing}} \emph{(\bibinfo{series}{PODC 2016})}.
  \bibinfo{pages}{259--268}.
\newblock
\urldef\tempurl%
\url{https://doi.org/10.1145/2933057.2933090}
\showDOI{\tempurl}


\bibitem[Baquero et~al\mbox{.}(2017)]%
        {polog}
\bibfield{author}{\bibinfo{person}{Carlos Baquero},
  \bibinfo{person}{Paulo~Sergio Almeida}, {and} \bibinfo{person}{Ali Shoker}.}
  \bibinfo{year}{2017}\natexlab{}.
\newblock \showarticletitle{Pure Operation-Based Replicated Data Types}.
\newblock  (\bibinfo{year}{2017}).
\newblock
\urldef\tempurl%
\url{https://arxiv.org/abs/1710.04469}
\showURL{%
\tempurl}


\bibitem[Bauwens and Gonzalez~Boix(2023)]%
        {Bauwens2023}
\bibfield{author}{\bibinfo{person}{Jim Bauwens} {and} \bibinfo{person}{Elisa
  Gonzalez~Boix}.} \bibinfo{year}{2023}\natexlab{}.
\newblock \showarticletitle{Nested Pure Operation-Based {CRDTs}}. In
  \bibinfo{booktitle}{\emph{37th European Conference on Object-Oriented
  Programming}} \emph{(\bibinfo{series}{ECOOP 2023})}.
  \bibinfo{publisher}{Schloss Dagstuhl}, \bibinfo{pages}{2:1--2:26}.
\newblock
\urldef\tempurl%
\url{https://doi.org/10.4230/LIPIcs.ECOOP.2023.2}
\showDOI{\tempurl}


\bibitem[Birman et~al\mbox{.}(1991)]%
        {Birman1991}
\bibfield{author}{\bibinfo{person}{Kenneth Birman}, \bibinfo{person}{Andr{\'e}
  Schiper}, {and} \bibinfo{person}{Pat Stephenson}.}
  \bibinfo{year}{1991}\natexlab{}.
\newblock \showarticletitle{Lightweight causal and atomic group multicast}.
\newblock \bibinfo{journal}{\emph{ACM Transactions on Computer Systems}}
  \bibinfo{volume}{9}, \bibinfo{number}{3} (\bibinfo{date}{Aug.}
  \bibinfo{year}{1991}), \bibinfo{pages}{272--314}.
\newblock
\urldef\tempurl%
\url{https://doi.org/10.1145/128738.128742}
\showDOI{\tempurl}


\bibitem[Boehm et~al\mbox{.}(1995)]%
        {Boehm1995}
\bibfield{author}{\bibinfo{person}{Hans-J. Boehm}, \bibinfo{person}{Russ
  Atkinson}, {and} \bibinfo{person}{Michael Plass}.}
  \bibinfo{year}{1995}\natexlab{}.
\newblock \showarticletitle{Ropes: An alternative to strings}.
\newblock \bibinfo{journal}{\emph{Software Prac. Experience}}
  \bibinfo{volume}{25}, \bibinfo{number}{12} (\bibinfo{year}{1995}),
  \bibinfo{pages}{1315--1330}.
\newblock
\urldef\tempurl%
\url{https://doi.org/10.1002/spe.4380251203}
\showDOI{\tempurl}


\bibitem[Cachin et~al\mbox{.}(2011)]%
        {Cachin2011}
\bibfield{author}{\bibinfo{person}{Christian Cachin}, \bibinfo{person}{Rachid
  Guerraoui}, {and} \bibinfo{person}{Luís Rodrigues}.}
  \bibinfo{year}{2011}\natexlab{}.
\newblock \bibinfo{booktitle}{\emph{Introduction to Reliable and Secure
  Distributed Programming} (\bibinfo{edition}{second} ed.)}.
\newblock \bibinfo{publisher}{Springer}.
\newblock
\showISBNx{9783642152597}


\bibitem[Coglan(2019)]%
        {Coglan2019}
\bibfield{author}{\bibinfo{person}{James Coglan}.}
  \bibinfo{year}{2019}\natexlab{}.
\newblock \bibinfo{booktitle}{\emph{Building {Git}}}.
\newblock
\urldef\tempurl%
\url{https://shop.jcoglan.com/building-git/}
\showURL{%
\tempurl}


\bibitem[Coldren(2024)]%
        {antarctica}
\bibfield{author}{\bibinfo{person}{Paul Coldren}.}
  \bibinfo{year}{2024}\natexlab{}.
\newblock \bibinfo{booktitle}{\emph{Engineering for Slow Internet: How to
  minimize user frustration in Antarctica}}.
\newblock
\urldef\tempurl%
\url{https://brr.fyi/posts/engineering-for-slow-internet}
\showURL{%
\tempurl}
\newblock
\shownote{Archived at \url{https://perma.cc/7Q9B-TUJV}}.


\bibitem[Cormen et~al\mbox{.}(2009)]%
        {CLRS2009}
\bibfield{author}{\bibinfo{person}{Thomas~H. Cormen},
  \bibinfo{person}{Charles~E. Leiserson}, \bibinfo{person}{Ronald~L. Rivest},
  {and} \bibinfo{person}{Clifford Stein}.} \bibinfo{year}{2009}\natexlab{}.
\newblock \bibinfo{booktitle}{\emph{Introduction to Algorithms}
  (\bibinfo{edition}{third} ed.)}.
\newblock \bibinfo{publisher}{MIT Press}.
\newblock


\bibitem[Day-Richter(2010)]%
        {DayRichter2010}
\bibfield{author}{\bibinfo{person}{John Day-Richter}.}
  \bibinfo{year}{2010}\natexlab{}.
\newblock \bibinfo{booktitle}{\emph{What's different about the new {Google
  Docs}: Making collaboration fast}}.
\newblock
\urldef\tempurl%
\url{https://drive.googleblog.com/2010/09/whats-different-about-new-google-docs.html}
\showURL{%
\tempurl}
\newblock
\shownote{Archived at \url{https://perma.cc/5FVM-542B}}.


\bibitem[Ellis and Gibbs(1989)]%
        {Ellis1989}
\bibfield{author}{\bibinfo{person}{C~A Ellis} {and} \bibinfo{person}{S~J
  Gibbs}.} \bibinfo{year}{1989}\natexlab{}.
\newblock \showarticletitle{Concurrency control in groupware systems}. In
  \bibinfo{booktitle}{\emph{ACM International Conference on Management of
  Data}} \emph{(\bibinfo{series}{SIGMOD 1989})}. \bibinfo{pages}{399--407}.
\newblock
\urldef\tempurl%
\url{https://doi.org/10.1145/67544.66963}
\showDOI{\tempurl}


\bibitem[Fabbri(2024)]%
        {ditto-military}
\bibfield{author}{\bibinfo{person}{Aaron Fabbri}.}
  \bibinfo{year}{2024}\natexlab{}.
\newblock \bibinfo{booktitle}{\emph{How Ditto Empowers the U.S. Navy's Unmanned
  Vehicle Operations, Keeping America Safe}}.
\newblock
\urldef\tempurl%
\url{https://ditto.live/blog/how-ditto-empowers-the-u-s-navy-s-unmanned-vehicle-operations-keeping-america-safe}
\showURL{%
\tempurl}
\newblock
\shownote{Archived at \url{https://perma.cc/UAM9-JX5C}}.


\bibitem[Fraser(2009)]%
        {Fraser2009}
\bibfield{author}{\bibinfo{person}{Neil Fraser}.}
  \bibinfo{year}{2009}\natexlab{}.
\newblock \showarticletitle{Differential synchronization}. In
  \bibinfo{booktitle}{\emph{9th ACM Symposium on Document Engineering}}
  \emph{(\bibinfo{series}{DocEng 2009})}. \bibinfo{publisher}{ACM},
  \bibinfo{pages}{13--20}.
\newblock
\urldef\tempurl%
\url{https://doi.org/10.1145/1600193.1600198}
\showDOI{\tempurl}


\bibitem[Gentle(2014)]%
        {sharedb}
\bibfield{author}{\bibinfo{person}{Joseph Gentle}.}
  \bibinfo{year}{2014}\natexlab{}.
\newblock \bibinfo{booktitle}{\emph{ShareDB}}.
\newblock
\urldef\tempurl%
\url{https://github.com/share/sharedb}
\showURL{%
\tempurl}


\bibitem[Gentle(2021)]%
        {crdts-go-brrr}
\bibfield{author}{\bibinfo{person}{Joseph Gentle}.}
  \bibinfo{year}{2021}\natexlab{}.
\newblock \bibinfo{booktitle}{\emph{5000x faster CRDTs: An Adventure in
  Optimization}}.
\newblock
\urldef\tempurl%
\url{https://josephg.com/blog/crdts-go-brrr/}
\showURL{%
\tempurl}


\bibitem[Gentle(2023a)]%
        {editing-traces}
\bibfield{author}{\bibinfo{person}{Joseph Gentle}.}
  \bibinfo{year}{2023}\natexlab{a}.
\newblock \bibinfo{booktitle}{\emph{Editing Traces (github repository)}}.
\newblock
\urldef\tempurl%
\url{https://github.com/josephg/editing-traces}
\showURL{%
\tempurl}


\bibitem[Gentle(2023b)]%
        {reference-reg}
\bibfield{author}{\bibinfo{person}{Joseph Gentle}.}
  \bibinfo{year}{2023}\natexlab{b}.
\newblock \bibinfo{booktitle}{\emph{Reference Eg-walker implementation in
  Typescript}}.
\newblock
\urldef\tempurl%
\url{https://github.com/josephg/eg-walker-reference}
\showURL{%
\tempurl}


\bibitem[Gentle(2024)]%
        {dt}
\bibfield{author}{\bibinfo{person}{Joseph Gentle}.}
  \bibinfo{year}{2024}\natexlab{}.
\newblock \bibinfo{booktitle}{\emph{Diamond Types: A fully featured realtime
  editing library}}.
\newblock
\urldef\tempurl%
\url{https://github.com/josephg/diamond-types}
\showURL{%
\tempurl}


\bibitem[Gomes et~al\mbox{.}(2017)]%
        {Gomes2017verifying}
\bibfield{author}{\bibinfo{person}{Victor B~F Gomes}, \bibinfo{person}{Martin
  Kleppmann}, \bibinfo{person}{Dominic~P Mulligan}, {and}
  \bibinfo{person}{Alastair~R Beresford}.} \bibinfo{year}{2017}\natexlab{}.
\newblock \showarticletitle{Verifying strong eventual consistency in
  distributed systems}.
\newblock \bibinfo{journal}{\emph{Proceedings of the ACM on Programming
  Languages (PACMPL)}} \bibinfo{volume}{1}, \bibinfo{number}{OOPSLA}
  (\bibinfo{date}{Oct.} \bibinfo{year}{2017}).
\newblock
\urldef\tempurl%
\url{https://doi.org/10.1145/3133933}
\showDOI{\tempurl}
\showeprint{1707.01747}


\bibitem[Good and Jeffery({[n.\,d.]})]%
        {automerge-storage}
\bibfield{author}{\bibinfo{person}{Alex Good} {and} \bibinfo{person}{Andrew
  Jeffery}.} \bibinfo{year}{[n.\,d.]}\natexlab{}.
\newblock \bibinfo{booktitle}{\emph{Automerge Binary Document Format}}.
\newblock
\urldef\tempurl%
\url{https://automerge.org/automerge-binary-format-spec/}
\showURL{%
\tempurl}
\newblock
\shownote{Archived at \url{https://perma.cc/XV25-RA4U}}.


\bibitem[Gu et~al\mbox{.}(2005)]%
        {Gu2005}
\bibfield{author}{\bibinfo{person}{Ning Gu}, \bibinfo{person}{Jiangming Yang},
  {and} \bibinfo{person}{Qiwei Zhang}.} \bibinfo{year}{2005}\natexlab{}.
\newblock \showarticletitle{Consistency maintenance based on the mark \&
  retrace technique in groupware systems}. In \bibinfo{booktitle}{\emph{ACM
  International Conference on Supporting Group Work}}
  \emph{(\bibinfo{series}{GROUP 2005})}. \bibinfo{publisher}{ACM},
  \bibinfo{pages}{264--273}.
\newblock
\urldef\tempurl%
\url{https://doi.org/10.1145/1099203.1099250}
\showDOI{\tempurl}


\bibitem[Hellerstein(2010)]%
        {Hellerstein2010}
\bibfield{author}{\bibinfo{person}{Joseph~M Hellerstein}.}
  \bibinfo{year}{2010}\natexlab{}.
\newblock \showarticletitle{The Declarative Imperative: Experiences and
  Conjectures in Distributed Logic}.
\newblock \bibinfo{journal}{\emph{ACM SIGMOD Record}} \bibinfo{volume}{39},
  \bibinfo{number}{1} (\bibinfo{date}{Sept.} \bibinfo{year}{2010}),
  \bibinfo{pages}{5--19}.
\newblock
\urldef\tempurl%
\url{https://doi.org/10.1145/1860702.1860704}
\showDOI{\tempurl}


\bibitem[Jahns({[n.\,d.]})]%
        {yjs}
\bibfield{author}{\bibinfo{person}{Kevin Jahns}.}
  \bibinfo{year}{[n.\,d.]}\natexlab{}.
\newblock \bibinfo{booktitle}{\emph{Yjs Shared Editing}}.
\newblock
\urldef\tempurl%
\url{https://yjs.dev/}
\showURL{%
\tempurl}


\bibitem[Khanna et~al\mbox{.}(2007)]%
        {Khanna2007}
\bibfield{author}{\bibinfo{person}{Sanjeev Khanna}, \bibinfo{person}{Keshav
  Kunal}, {and} \bibinfo{person}{Benjamin~C Pierce}.}
  \bibinfo{year}{2007}\natexlab{}.
\newblock \showarticletitle{A Formal Investigation of Diff3}. In
  \bibinfo{booktitle}{\emph{27th International Conference on Foundations of
  Software Technology and Theoretical Computer Science}}
  \emph{(\bibinfo{series}{FSTTCS 2007})}. \bibinfo{publisher}{Springer},
  \bibinfo{pages}{485--496}.
\newblock
\urldef\tempurl%
\url{https://doi.org/10.1007/978-3-540-77050-3_40}
\showDOI{\tempurl}


\bibitem[Kleppmann(2019)]%
        {automerge-columnar}
\bibfield{author}{\bibinfo{person}{Martin Kleppmann}.}
  \bibinfo{year}{2019}\natexlab{}.
\newblock \bibinfo{booktitle}{\emph{Experiment: columnar data encoding for
  Automerge}}.
\newblock
\urldef\tempurl%
\url{https://github.com/automerge/automerge-perf/blob/master/columnar/README.md}
\showURL{%
\tempurl}
\newblock
\shownote{Archived at \url{https://perma.cc/57KC-PP4Y}}.


\bibitem[Kleppmann(2020)]%
        {automerge-perf}
\bibfield{author}{\bibinfo{person}{Martin Kleppmann}.}
  \bibinfo{year}{2020}\natexlab{}.
\newblock \bibinfo{booktitle}{\emph{Benchmarking resources for Automerge}}.
\newblock
\urldef\tempurl%
\url{https://github.com/automerge/automerge-perf}
\showURL{%
\tempurl}


\bibitem[Kleppmann and Beresford(2017)]%
        {Kleppmann2017}
\bibfield{author}{\bibinfo{person}{Martin Kleppmann} {and}
  \bibinfo{person}{Alastair~R Beresford}.} \bibinfo{year}{2017}\natexlab{}.
\newblock \showarticletitle{A Conflict-Free Replicated {JSON} Datatype}.
\newblock \bibinfo{journal}{\emph{IEEE Transactions on Parallel and Distributed
  Systems}} \bibinfo{volume}{28}, \bibinfo{number}{10} (\bibinfo{date}{April}
  \bibinfo{year}{2017}), \bibinfo{pages}{2733--2746}.
\newblock
\urldef\tempurl%
\url{https://doi.org/10.1109/TPDS.2017.2697382}
\showDOI{\tempurl}
\showeprint{1608.03960}


\bibitem[Kleppmann et~al\mbox{.}({[n.\,d.]})]%
        {crdt-papers}
\bibfield{author}{\bibinfo{person}{Martin Kleppmann}, \bibinfo{person}{Annette
  Bieniusa}, {and} \bibinfo{person}{Marc Shapiro}.}
  \bibinfo{year}{[n.\,d.]}\natexlab{}.
\newblock \bibinfo{booktitle}{\emph{CRDT Papers}}.
\newblock
\urldef\tempurl%
\url{https://crdt.tech/papers.html}
\showURL{%
\tempurl}


\bibitem[Kleppmann et~al\mbox{.}(2019)]%
        {Kleppmann2019localfirst}
\bibfield{author}{\bibinfo{person}{Martin Kleppmann}, \bibinfo{person}{Adam
  Wiggins}, \bibinfo{person}{Peter van Hardenberg}, {and} \bibinfo{person}{Mark
  McGranaghan}.} \bibinfo{year}{2019}\natexlab{}.
\newblock \showarticletitle{Local-First Software: You own your data, in spite
  of the cloud}. In \bibinfo{booktitle}{\emph{ACM SIGPLAN International
  Symposium on New Ideas, New Paradigms, and Reflections on Programming and
  Software}} \emph{(\bibinfo{series}{Onward! 2019})}. \bibinfo{publisher}{ACM},
  \bibinfo{pages}{154--178}.
\newblock
\urldef\tempurl%
\url{https://doi.org/10.1145/3359591.3359737}
\showDOI{\tempurl}


\bibitem[Kow({[n.\,d.]})]%
        {darcs-book}
\bibfield{author}{\bibinfo{person}{Eric Kow}.}
  \bibinfo{year}{[n.\,d.]}\natexlab{}.
\newblock \bibinfo{booktitle}{\emph{Understanding Darcs}}.
\newblock
\urldef\tempurl%
\url{https://en.wikibooks.org/wiki/Understanding_Darcs/Print_Version}
\showURL{%
\tempurl}
\newblock
\shownote{Archived at \url{https://perma.cc/3VZF-8J65}}.


\bibitem[Lamport(1978)]%
        {Lamport1978}
\bibfield{author}{\bibinfo{person}{Leslie Lamport}.}
  \bibinfo{year}{1978}\natexlab{}.
\newblock \showarticletitle{Time, clocks, and the ordering of events in a
  distributed system}.
\newblock \bibinfo{journal}{\emph{Commun. ACM}} \bibinfo{volume}{21},
  \bibinfo{number}{7} (\bibinfo{year}{1978}), \bibinfo{pages}{558--565}.
\newblock
\urldef\tempurl%
\url{https://doi.org/10.1145/359545.359563}
\showDOI{\tempurl}


\bibitem[Li and Li(2006)]%
        {Li2006}
\bibfield{author}{\bibinfo{person}{Du Li} {and} \bibinfo{person}{Rui Li}.}
  \bibinfo{year}{2006}\natexlab{}.
\newblock \showarticletitle{A performance study of group editing algorithms}.
  In \bibinfo{booktitle}{\emph{12th International Conference on Parallel and
  Distributed Systems}} \emph{(\bibinfo{series}{ICPADS 2006})}.
\newblock
\urldef\tempurl%
\url{https://doi.org/10.1109/icpads.2006.18}
\showDOI{\tempurl}


\bibitem[Litt et~al\mbox{.}(2024)]%
        {Patchwork}
\bibfield{author}{\bibinfo{person}{Geoffrey Litt}, \bibinfo{person}{Paul
  Sonnentag}, \bibinfo{person}{Max Schöning}, \bibinfo{person}{Adam Wiggins},
  \bibinfo{person}{Peter van Hardenberg}, {and} \bibinfo{person}{Orion Henry}.}
  \bibinfo{year}{2024}\natexlab{}.
\newblock \bibinfo{booktitle}{\emph{Patchwork lab notebook: Version control for
  everything}}.
\newblock \bibinfo{type}{{T}echnical {R}eport}. \bibinfo{institution}{Ink \&
  Switch}.
\newblock
\urldef\tempurl%
\url{https://www.inkandswitch.com/patchwork/notebook/}
\showURL{%
\tempurl}
\newblock
\shownote{Archived at \url{https://perma.cc/VJM7-YPJB}}.


\bibitem[Lyu(2018)]%
        {vscode-buffer}
\bibfield{author}{\bibinfo{person}{Peng Lyu}.} \bibinfo{year}{2018}\natexlab{}.
\newblock \bibinfo{booktitle}{\emph{Text Buffer Reimplementation}}.
\newblock
\urldef\tempurl%
\url{https://code.visualstudio.com/blogs/2018/03/23/text-buffer-reimplementation}
\showURL{%
\tempurl}
\newblock
\shownote{Archived at \url{https://perma.cc/V695-K7EL}}.


\bibitem[McKelvey et~al\mbox{.}(2023)]%
        {Upwelling}
\bibfield{author}{\bibinfo{person}{Karissa~Rae McKelvey},
  \bibinfo{person}{Scott Jenson}, \bibinfo{person}{Eileen Wagner},
  \bibinfo{person}{Blaine Cook}, {and} \bibinfo{person}{Martin Kleppmann}.}
  \bibinfo{year}{2023}\natexlab{}.
\newblock \bibinfo{booktitle}{\emph{Upwelling: Combining real-time
  collaboration with version control for writers}}.
\newblock \bibinfo{type}{{T}echnical {R}eport}. \bibinfo{institution}{Ink \&
  Switch}.
\newblock
\urldef\tempurl%
\url{https://www.inkandswitch.com/upwelling/}
\showURL{%
\tempurl}
\newblock
\shownote{Archived at \url{https://perma.cc/7TT8-X8S9}}.


\bibitem[Meunier and Becker({[n.\,d.]})]%
        {pijul}
\bibfield{author}{\bibinfo{person}{Pierre-Étienne Meunier} {and}
  \bibinfo{person}{Florent Becker}.} \bibinfo{year}{[n.\,d.]}\natexlab{}.
\newblock \bibinfo{booktitle}{\emph{Pijul}}.
\newblock
\urldef\tempurl%
\url{https://pijul.org/}
\showURL{%
\tempurl}


\bibitem[Nichols et~al\mbox{.}(1995)]%
        {Nichols1995}
\bibfield{author}{\bibinfo{person}{David~A Nichols}, \bibinfo{person}{Pavel
  Curtis}, \bibinfo{person}{Michael Dixon}, {and} \bibinfo{person}{John
  Lamping}.} \bibinfo{year}{1995}\natexlab{}.
\newblock \showarticletitle{High-latency, low-bandwidth windowing in the
  {Jupiter} collaboration system}. In \bibinfo{booktitle}{\emph{8th Annual ACM
  Symposium on User Interface and Software Technology}}
  \emph{(\bibinfo{series}{UIST 1995})}. \bibinfo{pages}{111--120}.
\newblock
\urldef\tempurl%
\url{https://doi.org/10.1145/215585.215706}
\showDOI{\tempurl}


\bibitem[Nicolaescu et~al\mbox{.}(2016)]%
        {Nicolaescu2016YATA}
\bibfield{author}{\bibinfo{person}{Petru Nicolaescu}, \bibinfo{person}{Kevin
  Jahns}, \bibinfo{person}{Michael Derntl}, {and} \bibinfo{person}{Ralf
  Klamma}.} \bibinfo{year}{2016}\natexlab{}.
\newblock \showarticletitle{Near Real-Time Peer-to-Peer Shared Editing on
  Extensible Data Types}. In \bibinfo{booktitle}{\emph{19th International
  Conference on Supporting Group Work}} \emph{(\bibinfo{series}{GROUP 2016})}.
  \bibinfo{publisher}{ACM}, \bibinfo{pages}{39--49}.
\newblock
\urldef\tempurl%
\url{https://doi.org/10.1145/2957276.2957310}
\showDOI{\tempurl}


\bibitem[Oster et~al\mbox{.}(2006a)]%
        {Oster2006TTF}
\bibfield{author}{\bibinfo{person}{Gérald Oster}, \bibinfo{person}{Pascal
  Molli}, \bibinfo{person}{Pascal Urso}, {and} \bibinfo{person}{Abdessamad
  Imine}.} \bibinfo{year}{2006}\natexlab{a}.
\newblock \showarticletitle{Tombstone Transformation Functions for Ensuring
  Consistency in Collaborative Editing Systems}. In
  \bibinfo{booktitle}{\emph{9th IEEE International Conference on Collaborative
  Computing}} \emph{(\bibinfo{series}{CollaborateCom 2006})}.
\newblock
\urldef\tempurl%
\url{https://doi.org/10.1109/colcom.2006.361867}
\showDOI{\tempurl}


\bibitem[Oster et~al\mbox{.}(2006b)]%
        {Oster2006WOOT}
\bibfield{author}{\bibinfo{person}{Gérald Oster}, \bibinfo{person}{Pascal
  Urso}, \bibinfo{person}{Pascal Molli}, {and} \bibinfo{person}{Abdessamad
  Imine}.} \bibinfo{year}{2006}\natexlab{b}.
\newblock \showarticletitle{Data consistency for {P2P} collaborative editing}.
  In \bibinfo{booktitle}{\emph{ACM Conference on Computer Supported Cooperative
  Work}} \emph{(\bibinfo{series}{CSCW 2006})}. \bibinfo{pages}{259--268}.
\newblock
\urldef\tempurl%
\url{https://doi.org/10.1145/1180875.1180916}
\showDOI{\tempurl}


\bibitem[{Overleaf}({[n.\,d.]})]%
        {overleaf-ot}
\bibfield{author}{\bibinfo{person}{{Overleaf}}.}
  \bibinfo{year}{[n.\,d.]}\natexlab{}.
\newblock \bibinfo{booktitle}{\emph{Can multiple authors edit the same file at
  the same time?}}
\newblock
\urldef\tempurl%
\url{https://www.overleaf.com/learn/how-to/Can_multiple_authors_edit_the_same_file_at_the_same_time%3F}
\showURL{%
\tempurl}
\newblock
\shownote{Archived at \url{https://perma.cc/P8TH-YRMX}}.


\bibitem[Preguiça et~al\mbox{.}(2009)]%
        {Preguica2009}
\bibfield{author}{\bibinfo{person}{Nuno Preguiça},
  \bibinfo{person}{Joan~Manuel Marques}, \bibinfo{person}{Marc Shapiro}, {and}
  \bibinfo{person}{Mihai Letia}.} \bibinfo{year}{2009}\natexlab{}.
\newblock \showarticletitle{A Commutative Replicated Data Type for Cooperative
  Editing}. In \bibinfo{booktitle}{\emph{29th IEEE International Conference on
  Distributed Computing Systems}} \emph{(\bibinfo{series}{ICDCS 2009})}.
  \bibinfo{pages}{395--403}.
\newblock
\urldef\tempurl%
\url{https://doi.org/10.1109/icdcs.2009.20}
\showDOI{\tempurl}


\bibitem[Ratner(2024)]%
        {ditto-aircraft}
\bibfield{author}{\bibinfo{person}{Ryan Ratner}.}
  \bibinfo{year}{2024}\natexlab{}.
\newblock \bibinfo{booktitle}{\emph{ANA Elevates Onboard Passenger Experience
  with Ditto}}.
\newblock
\urldef\tempurl%
\url{https://ditto.live/blog/ana-elevates-onboard-passenger-experience-with-ditto}
\showURL{%
\tempurl}
\newblock
\shownote{Archived at \url{https://perma.cc/M6NG-AWLM}}.


\bibitem[Ressel et~al\mbox{.}(1996)]%
        {Ressel1996}
\bibfield{author}{\bibinfo{person}{Matthias Ressel}, \bibinfo{person}{Doris
  Nitsche-Ruhland}, {and} \bibinfo{person}{Rul Gunzenhäuser}.}
  \bibinfo{year}{1996}\natexlab{}.
\newblock \showarticletitle{An integrating, transformation-oriented approach to
  concurrency control and undo in group editors}. In
  \bibinfo{booktitle}{\emph{ACM Conference on Computer Supported Cooperative
  Work}} \emph{(\bibinfo{series}{CSCW 1996})}. \bibinfo{pages}{288--297}.
\newblock
\urldef\tempurl%
\url{https://doi.org/10.1145/240080.240305}
\showDOI{\tempurl}


\bibitem[Roh et~al\mbox{.}(2011)]%
        {Roh2011RGA}
\bibfield{author}{\bibinfo{person}{Hyun-Gul Roh}, \bibinfo{person}{Myeongjae
  Jeon}, \bibinfo{person}{Jin-Soo Kim}, {and} \bibinfo{person}{Joonwon Lee}.}
  \bibinfo{year}{2011}\natexlab{}.
\newblock \showarticletitle{Replicated Abstract Data Types: Building Blocks for
  Collaborative Applications}.
\newblock \bibinfo{journal}{\emph{J. Parallel and Distrib. Comput.}}
  \bibinfo{volume}{71}, \bibinfo{number}{3} (\bibinfo{date}{March}
  \bibinfo{year}{2011}), \bibinfo{pages}{354--368}.
\newblock
\showISSN{0743-7315}
\urldef\tempurl%
\url{https://doi.org/10.1016/j.jpdc.2010.12.006}
\showDOI{\tempurl}


\bibitem[Shapiro et~al\mbox{.}(2011)]%
        {Shapiro2011}
\bibfield{author}{\bibinfo{person}{Marc Shapiro}, \bibinfo{person}{Nuno
  Pregui\c{c}a}, \bibinfo{person}{Carlos Baquero}, {and} \bibinfo{person}{Marek
  Zawirski}.} \bibinfo{year}{2011}\natexlab{}.
\newblock \showarticletitle{Conflict-free Replicated Data Types}. In
  \bibinfo{booktitle}{\emph{13th International Conference on Stabilization,
  Safety, and Security of Distributed Systems}} \emph{(\bibinfo{series}{SSS
  2011})}. \bibinfo{pages}{386--400}.
\newblock
\urldef\tempurl%
\url{https://doi.org/10.1007/978-3-642-24550-3_29}
\showDOI{\tempurl}


\bibitem[Soundarapandian et~al\mbox{.}(2022)]%
        {Soundarapandian2022}
\bibfield{author}{\bibinfo{person}{Vimala Soundarapandian},
  \bibinfo{person}{Adharsh Kamath}, \bibinfo{person}{Kartik Nagar}, {and}
  \bibinfo{person}{KC Sivaramakrishnan}.} \bibinfo{year}{2022}\natexlab{}.
\newblock \showarticletitle{Certified mergeable replicated data types}. In
  \bibinfo{booktitle}{\emph{43rd ACM SIGPLAN International Conference on
  Programming Language Design and Implementation}} \emph{(\bibinfo{series}{PLDI
  2022})}. \bibinfo{publisher}{ACM}, \bibinfo{pages}{332--347}.
\newblock
\urldef\tempurl%
\url{https://doi.org/10.1145/3519939.3523735}
\showDOI{\tempurl}


\bibitem[Stonebraker et~al\mbox{.}(2005)]%
        {Stonebraker2005}
\bibfield{author}{\bibinfo{person}{Michael Stonebraker},
  \bibinfo{person}{Daniel~J Abadi}, \bibinfo{person}{Adam Batkin},
  \bibinfo{person}{Xuedong Chen}, \bibinfo{person}{Mitch Cherniack},
  \bibinfo{person}{Miguel Ferreira}, \bibinfo{person}{Edmond Lau},
  \bibinfo{person}{Amerson Lin}, \bibinfo{person}{Samuel Madden},
  \bibinfo{person}{Elizabeth O'Neil}, \bibinfo{person}{Patrick O'Neil},
  \bibinfo{person}{Alexander Rasin}, \bibinfo{person}{Nga Tran}, {and}
  \bibinfo{person}{Stanley Zdonik}.} \bibinfo{year}{2005}\natexlab{}.
\newblock \showarticletitle{{C-Store}: A Column-oriented {DBMS}}. In
  \bibinfo{booktitle}{\emph{31st International Conference on Very Large Data
  Bases}} \emph{(\bibinfo{series}{VLDB 2005})}. \bibinfo{pages}{553--564}.
\newblock


\bibitem[Sun et~al\mbox{.}(1998)]%
        {Sun1998}
\bibfield{author}{\bibinfo{person}{Chengzheng Sun}, \bibinfo{person}{Xiaohua
  Jia}, \bibinfo{person}{Yanchun Zhang}, \bibinfo{person}{Yun Yang}, {and}
  \bibinfo{person}{David Chen}.} \bibinfo{year}{1998}\natexlab{}.
\newblock \showarticletitle{Achieving convergence, causality preservation, and
  intention preservation in real-time cooperative editing systems}.
\newblock \bibinfo{journal}{\emph{ACM Transactions on Computer-Human
  Interaction}} \bibinfo{volume}{5}, \bibinfo{number}{1} (\bibinfo{date}{March}
  \bibinfo{year}{1998}), \bibinfo{pages}{63--108}.
\newblock
\urldef\tempurl%
\url{https://doi.org/10.1145/274444.274447}
\showDOI{\tempurl}


\bibitem[Sun et~al\mbox{.}(2020)]%
        {Sun2020OT}
\bibfield{author}{\bibinfo{person}{David Sun}, \bibinfo{person}{Chengzheng
  Sun}, \bibinfo{person}{Agustina Ng}, {and} \bibinfo{person}{Weiwei Cai}.}
  \bibinfo{year}{2020}\natexlab{}.
\newblock \showarticletitle{Real Differences between {OT} and {CRDT} in
  Correctness and Complexity for Consistency Maintenance in Co-Editors}.
\newblock \bibinfo{journal}{\emph{Proceedings of the ACM on Human-Computer
  Interaction}} \bibinfo{volume}{4}, \bibinfo{number}{CSCW1}, Article
  \bibinfo{articleno}{21} (\bibinfo{date}{May} \bibinfo{year}{2020}),
  \bibinfo{numpages}{30}~pages.
\newblock
\urldef\tempurl%
\url{https://doi.org/10.1145/3392825}
\showDOI{\tempurl}


\bibitem[Sypytkowski et~al\mbox{.}({[n.\,d.]})]%
        {yrs}
\bibfield{author}{\bibinfo{person}{Bartosz Sypytkowski}, \bibinfo{person}{Kevin
  Jahns}, {and} \bibinfo{person}{John Waidhofer}.}
  \bibinfo{year}{[n.\,d.]}\natexlab{}.
\newblock \bibinfo{booktitle}{\emph{Y CRDT: Rust port of Yjs}}.
\newblock
\urldef\tempurl%
\url{https://github.com/y-crdt/y-crdt}
\showURL{%
\tempurl}


\bibitem[Toomim(2024)]%
        {time-machines}
\bibfield{author}{\bibinfo{person}{Michael Toomim}.}
  \bibinfo{year}{2024}\natexlab{}.
\newblock \showarticletitle{CRDT and OT generalize as Time Machines}.
\newblock  (\bibinfo{year}{2024}).
\newblock
\urldef\tempurl%
\url{https://braid.org/time-machines}
\showURL{%
\tempurl}
\newblock
\shownote{Archived at \url{https://perma.cc/VND4-ACEK}}.


\bibitem[Weidner and Kleppmann(2023)]%
        {fugue}
\bibfield{author}{\bibinfo{person}{Matthew Weidner} {and}
  \bibinfo{person}{Martin Kleppmann}.} \bibinfo{year}{2023}\natexlab{}.
\newblock \showarticletitle{The Art of the Fugue: Minimizing Interleaving in
  Collaborative Text Editing}.
\newblock  (\bibinfo{year}{2023}).
\newblock
\urldef\tempurl%
\url{https://arxiv.org/abs/2305.00583}
\showURL{%
\tempurl}


\bibitem[Weiss et~al\mbox{.}(2010)]%
        {Weiss2010}
\bibfield{author}{\bibinfo{person}{Stéphane Weiss}, \bibinfo{person}{Pascal
  Urso}, {and} \bibinfo{person}{Pascal Molli}.}
  \bibinfo{year}{2010}\natexlab{}.
\newblock \showarticletitle{{Logoot-Undo}: Distributed Collaborative Editing
  System on {P2P} Networks}.
\newblock \bibinfo{journal}{\emph{IEEE Transactions on Parallel and Distributed
  Systems}} \bibinfo{volume}{21}, \bibinfo{number}{8} (\bibinfo{year}{2010}),
  \bibinfo{pages}{1162--1174}.
\newblock
\urldef\tempurl%
\url{https://doi.org/10.1109/tpds.2009.173}
\showDOI{\tempurl}


\end{thebibliography}

\clearpage
\appendix
\section{Artifact Appendix} 
\subsection{Abstract}

Our artifact is an implementation of the Eg-walker algorithm, together with the datasets and benchmarking tools required to run the experiments described in the paper.
Specifically, it contains the following items:
\begin{itemize}
    \item Editing traces of text documents that we use for benchmarking (see \autoref{traces-table}), in our own binary format and as JSON
    \item Source code for the following tools:
        \begin{itemize}
            \item Our optimised Eg-walker implementation, written in Rust
            \item Our reference CRDT implementation
            \item Our reference OT implementation
            \item Tools to convert our editing traces to JSON, and to the Yjs and Automerge file formats
            \item Benchmarking tools to reproduce all experiments
        \end{itemize}
    \item The tools to generate the figures in this paper
\end{itemize}

\subsection{Description \& Requirements}

\subsubsection{How to access}

All code and data of the artifact is publicly available in the following GitHub repository:

\url{https://github.com/josephg/egwalker-paper}

A snapshot of this repository is archived at:

\url{https://zenodo.org/records/13823409}

\doi{10.5281/zenodo.13823409}

The README file in this repository contains detailed instructions to configure \& run our code locally.

\subsubsection{Hardware dependencies}

To run our experiments, you need the following:

\begin{itemize}
    \item A computer running Linux. Our software should work on a number of other systems including Windows and MacOS, but we have not tested on other platforms.
    \item At least 8GB of RAM, but 16GB is recommended. This is mainly required for the \texttt{OT/A2} benchmark, which at peak uses approximately 7GB of RAM.
    \item Plenty of disk space. The Rust compiler produces a lot of temporary files~-- 44GB on our system.
\end{itemize}

\subsubsection{Software dependencies}

You need a recent compiler and runtime for the Rust programming language. We have tested with Rust 1.78; newer versions should also work.

You also need NodeJS installed in order to run the Yjs benchmark and to generate the charts showing our results. We used version 21.

Other software dependencies are managed through Cargo and npm respectively, and they are automatically installed as part of the build process.

\subsubsection{Benchmarks}

The datasets used by our benchmarks are included in the \texttt{datasets/} folder of our artifact repository, and the code to run the benchmarks is in the \texttt{tools/} folder.
Each of these folders contains a \texttt{README.md} file that documents its contents in more detail.

\subsection{Set-up}

Install Rust (the easiest way is via rustup) and NodeJS.

\subsection{Evaluation workflow}

Please see the \texttt{README.md} file in the artifact repository for a detailed description of the process.

\subsubsection{Major Claims}





\begin{description}
    \item[C1.] Eg-walker is competitive with existing state-of-the-art collaborative editing systems in terms of file size, memory usage, and CPU time taken to merge changes.
    That is, on all our benchmarks, Eg-walker has at most $\approx 2\times$ the cost of the other systems we test, and in some cases dramatically lower cost.
    Experiments E1, E2, E3 and Figures~\ref{chart-remote}--\ref{chart-dt-vs-yjs} support this claim.
    \item[C2.] On editing traces with long-running branches, the merge performance of Eg-walker is several orders of magnitude faster than OT, and slightly faster than the best CRDT implementations.
    This claim is supported by Experiment E3 and the data in \autoref{chart-remote}.
    \item[C3.] On sequential editing traces, Eg-walker is as fast as OT, and about an order of magnitude faster than our reference CRDT (which in turn is the fastest among the CRDTs we tested).
    This is supported by Experiment E3 and the data in \autoref{chart-remote}.
    This is in large part due to the optimisations in \autoref{clearing}, as shown in \autoref{speed-ff}.
    \item[C4.] The time to load a document from disk (to view and edit) with Eg-walker is the same as with OT, which is several orders of magnitude faster than all CRDTs we tested.
    Experiment E3 and \autoref{chart-remote} also show this.
    \item[C5.] The steady-state memory consumption of Eg-walker is at least an order of magnitude lower than that of CRDTs, and the same as OT.
    The peak memory consumption of Eg-walker during merging is similar to the steady state of CRDTs, and lower than OT.
    This claim is supported by Experiment E2 and the data in \autoref{chart-memusage}.
\end{description}

\subsubsection{Experiment E1: File sizes (4 compute-hours)}

Converts the raw editing traces in the \verb|datasets/raw| directory into the Yjs, Automerge, and Eg-walker file formats (\verb|datasets/*.{yjs,am,dt}|).
Some of the traces are repeated several times so that they have a similar size (see \autoref{editing-traces}).
The sizes of the resulting files are reported in \autoref{chart-dt-vs-automerge} and \autoref{chart-dt-vs-yjs}.
The files also form an input to the subsequent experiments.
The output files are included with the artifact so that the later experiments can be run without having to run this one.

To run this experiment:
\begin{verbatim}
  rm datasets/*
  ./step1-prepare.sh
\end{verbatim}
The resulting files should be byte-for-byte identical to those in the artifact.
This experiment is slow because we have not made much effort to optimise it.

To generate the figures:
\begin{verbatim}
  node collect.js
  cd svg-plot
  npm i # only needed once to install dependencies
  node render.js
\end{verbatim}
which writes \autoref{chart-dt-vs-automerge} to \verb|diagrams/filesize_full.svg| and \autoref{chart-dt-vs-yjs} to \verb|diagrams/filesize_smol.svg|.

\subsubsection{Experiment E2: Memory use (1 compute-hour)}

This experiment measures memory use~-- both the peak memory use while replaying each editing trace, and the ``steady state'' memory use once the replay is complete (keeping in memory the structures that are needed to display and edit a document, but freeing the structures that are needed only for replay and merging concurrent edits).
We test with the following algorithms:
\begin{itemize}
  \item Diamond-types (DT): our optimised Eg-walker implementation
  \item DT-CRDT: our reference CRDT implementation
  \item Automerge~\cite{automerge}
  \item Yjs~\cite{yjs}
  \item Yrs, the port of Yjs to Rust by the original authors~\cite{yrs}
  \item OT: our reference OT implementation
\end{itemize}

To run this experiment:
\begin{verbatim}
  ./step2a-memusage.sh
\end{verbatim}
Then use \verb|collect.js| and \verb|render.js| like in E1 to generate \autoref{chart-memusage} (\verb|diagrams/memusage.svg|).
The data appears in \verb|results/*_memusage.json|.

Note: Our OT implementation takes 1 hour to replay the A2 editing trace.

\subsubsection{Experiment E3: Merge time (12 compute-hours)}

This experiment measures the CPU time taken to merge the entire editing trace into an empty document, as if it had been received from a remote peer.
The benchmarks cover the same algorithms as Experiment E2.
To run this experiment:
\begin{verbatim}
  ./step2b-benchmarks.sh
\end{verbatim}
Then use \verb|collect.js| and \verb|render.js| like in E1 to generate the charts for \autoref{chart-remote} (\verb|diagrams/timings.svg|) and \autoref{speed-ff} (\verb|diagrams/ff.svg|).
The script writes summary data to \verb|results/timings.json|.

\subsection{Notes on Reusability}

While our artifact contains a snapshot of our Eg-walker implementation at the time of this paper was written, the ongoing development of the implementation is part of the Diamond Types project in the following repository:

\url{https://github.com/josephg/diamond-types}

Diamond Types is freely available under the ISC license.

As part of this work, we have also created several editing traces of real text documents, as described in \autoref{editing-traces}.
The intention is that these traces can be used in future research for benchmarking collaborative text editing systems.
The \verb|datasets/| directory of the artifact contains those traces in JSON format.
Additional traces that we may collect in the future will be added to the following repository:

\url{https://github.com/josephg/editing-traces}

That repository also contains documentation of the JSON-based file format that we use to encode the editing traces.

If you want to benchmark a CRDT using these editing traces, you need to convert them to your CRDT's local format. We do this by simulating (in memory) a set of collaborating peers. The peers fork and merge their changes. The \texttt{tools/crdt-converter} directory of our artifact contains code to perform this process using Automerge and Yjs (Yrs). We believe this algorithm could be adapted to support most existing CRDT formats and systems.

\subsection{General Notes}

The performance of collaborative editing systems varies by orders of magnitude depending on how the implementation has been optimised. For example, early CRDTs were widely thought to be impractical in real systems because they were so slow and memory-inefficient, taking gigabytes of RAM and hard disk space to process editing traces smaller than the ones we present in this paper. And yet, the CRDTs we benchmark here, like Yjs and Automerge, can process large documents with very reasonable computational resources.

These improvements have come through a mixture of improved data structures and algorithms, and improved implementation techniques such as reduced memory allocation and better memory layout.
Unfortunately, the techniques that these implementations use to achieve their performance are not well documented in either the academic literature or the documentation of those projects.
It is difficult to determine which combination of factors is responsible for a performance improvement.

As a result, it is difficult to fairly compare algorithms that are implemented by different developers and use different implementation techniques.
We have attempted to address this issue by writing our own reference CRDT and OT implementations that use a similar implementation style to our optimised Eg-walker implementation, enabling as much as possible an apples-for-apples comparison.
We hope that future research will be able to develop more rigorous approaches to evaluating collaborative editing systems.

\ifincludeappendix
\begin{listing*}
\section{Algorithm Pseudocode}\label{pseudocode-appendix}

\begin{flushleft}
\autoref{pseudocode1} and \autoref{pseudocode2} provide simplified pseudocode for the core \algname algorithm as described in \autoref{graph-walk} and \ref{prepare-effect-versions}.
For brevity, the pseudocode does not include the B-trees from \autoref{b-trees}, the state clearing optimisation from \autoref{clearing}, and the partial replay optimisation from \autoref{partial-replay}.
\end{flushleft}

\footnotesize
\vspace{5cm}
\begin{minted}{rust}
let events: EventStorage // Assumed to contain all events

enum PREPARE_STATE {
    NOT_YET_INSERTED = 0
    INSERTED = 1
    // Any state 2+ means the item has been concurrently deleted n-1 times.
}

// Each of these corresponds to a single inserted character.
type AugmentedCRDTItem {
    // The fields from the CRDT that determines insertion order
    id, originLeft, originRight,

    // State at effect version. Either inserted or inserted-and-subsequently-deleted.
    ever_deleted: bool,

    // State at prepare version (affected by retreat / advance)
    prepare_state: uint,
}

fn space_in_prepare_state(item: AugmentedCRDTItem) {
    if item.prepare_state == INSERTED { return 1 } else { return 0 }
}

fn space_in_effect_state(item: AugmentedCRDTItem) {
    if !item.ever_deleted { return 1 } else { return 0 }
}

// We have an efficient algorithm for this in our code. See diff() in causal-graph.ts.
fn diff(v1, v2) -> (only_in_v1, only_in_v2) {
    // This function considers the transitive expansion of the versions v1 and v2.
    // We return the set difference between the transitive expansions.
    let all_events_v1 = {set of all events in v1 + all events which happened-before any event in v1}
    let all_events_v2 = {set of all events in v2 + all events which happened-before any event in v2}

    return (
        set_subtract(all_events_v1 - all_events_v2),
        set_subtract(all_events_v2 - all_events_v1)
    )
}
\end{minted}
\caption{Pseudocode for the \algname algorithm (continued in \autoref{pseudocode2}).}\label{pseudocode1}
\end{listing*}

\begin{listing*}
\footnotesize
\begin{minted}{rust}
fn generateDocument(events) {
    let cur_version = {} // Frontier version

    // List of AugmentedCRDTItems. This could equally be an RGA tree or some other data structure.
    let crdt = []

    // Resulting document text
    let resulting_doc = ""

    // Some traversal obeying partial order relationship between events.
    for e in events.iter_in_causal_order() {
        // Step 1: Prepare
        let (a, b) = diff(cur_version, e.parent_version)
        for e in a {
            // Retreat
            let item = crdt.find_item_by_id(e.id)
            item.prepare_state -= 1
        }
        for e in b {
            // Advance
            let item = crdt.find_item_by_id(e.id)
            item.prepare_state += 1
        }

        // Step 2: Apply
        if e.type == Insert {
            // We find the insertion position in the crdt using the prepare_state variables.
            let ins_pos = idx_of(crdt, e.pos, PREPARE_STATE)
            // Then insert here using the underlying CRDT's rules.
            let origin_left = prev_item(ins_pos).id or START
            // Origin_right is the ID of the first item after ins_pos where prepare_state >= 1.
            let origin_right = next_item(crdt, ins_pos, item => item.prepare_state >= INSERTED).id or END

            // Use an existing CRDT to determine the order of concurrent insertions at the same position
            crdt_integrate(crdt, {
                id: e.id,
                origin_left,
                origin_right,
                ever_deleted: false,
                prepare_state: 1
            })

            let effect_pos = crdt[0..ins_pos].map(space_in_effect_state).sum()
            resulting_doc.splice_in(effect_pos, e.contents)
        } else {
            // Delete
            let idx = idx_of(crdt, e.pos, PREPARE_STATE)
            // But this time skip any items which aren't in the inserted state.
            while crdt[idx].prepare_state != INSERTED { idx += 1 }
            // Mark as deleted.
            crdt[idx].ever_deleted = true
            crdt[idx].prepare_state += 1

            let effect_pos = crdt[0..idx].map(space_in_effect_state).sum()
            resulting_doc.delete_at(effect_pos)
        }

        cur_version = {e.id}
    }

    return resulting_doc
}
\end{minted}
\caption{Continuation of the pseudocode in \autoref{pseudocode1}.}\label{pseudocode2}
\end{listing*}

\clearpage
\section{Proof of Correctness}\label{proofs}

We now demonstrate that \algname is correct by showing that it satisfies the \emph{strong list specification} proposed by Attiya et al. \cite{Attiya2016}, a formal specification of collaborative text editing.
Informally speaking, this specification requires that replicas converge to the same document state, that this state contains exactly those characters that were inserted but not deleted, and that inserted characters appear in the correct place relative to the characters that surrounded it at the time it was inserted.
Assuming network partitions are eventually repaired, this is a stronger specification than \emph{strong eventual consistency} \cite{Shapiro2011}, which is a standard correctness criterion for CRDTs \cite{Gomes2017verifying}.

With a suitable algorithm for ordering concurrent insertions at the same position, \algname is also able to achieve maximal non-interleaving \cite{fugue}, which is a further strengthening of the strong list specification.
However, since that algorithm is out of scope of this paper, we also leave the proof of non-interleaving out of scope.

\subsection{Definitions}

Let $\mathsf{Char}$ be the set of characters that can be inserted in a document.
Let $\mathsf{Op} = \{\mathit{Insert}(i, c) \mid i \in \mathbb{N} \wedge c \in \mathsf{Char}\} \cup \{\mathit{Delete}(i) \mid i \in \mathbb{N}\}$ be the set of possible operations.
Let $\mathsf{ID}$ be the set of unique event identifiers, and let $\mathsf{Evt} = \mathsf{ID} \times \mathcal{P}(\mathsf{ID}) \times \mathsf{Op}$ be the set of possible events consisting of a unique ID, a set of parent event IDs, and an operation.
When $e \in G$ and $e = (i,p,o)$ we also use the notation $e.\mathit{id} = i$, $e.\mathit{parents} = p$, and $e.\mathit{op} = o$.

\begin{definition}\label{valid-graph}
  An event graph $G \subseteq \mathsf{Evt}$ is \emph{valid} if:
  \begin{enumerate}
    \item every event $e \in G$ has an ID $e.\mathit{id}$ that is unique in $G$;
    \item for every event $e \in G$, every parent ID $p \in e.\mathit{parents}$ is the ID of some other event in $G$;
    \item the graph is acyclic, i.e. there is no subset of events $\{e_1, e_2, \dots, e_n\} \subseteq G$ such that $e_1$ is a parent of $e_2$, $e_2$ is a parent of $e_3$, \dots, and $e_n$ is a parent of $e_1$; and
    \item for every event $e \in G$, the index at which $e.\mathit{op}$ inserts or deletes is an index that exists (is not beyond the end of the document) in the document version defined by the parents $e.\mathit{parents}$.
  \end{enumerate}
\end{definition}

Since event graphs grow monotonically and we never remove events, it is easy to ensure that the graph remains valid whenever a new event is added to it.

Attiya et al. make a simplifying assumption that every insertion operation has a unique character.
We use a slightly stronger version of the specification that avoids this assumption.
We also simplify the specification by using our event graph definition instead of the original abstract execution definition (containing message broadcast/receive events and a visibility relation).
These changes do not affect the substance of the proof: each node of our event graph corresponds to a \emph{do} event in the original strong list specification, and the transitive closure of our event graph is equivalent to the visibility relation.

For a given event graph $G$ we define a replay function $\mathsf{replay}(G)$ as introduced in \autoref{replay}, based on the \algname algorithm.
It iterates over the events in $G$ in some topologically sorted order, transforming the operation in each event as described in \autoref{algorithm}, and then applying the transformed operation to the document state resulting from the operations applied so far (starting with the empty document).
In a real implementation, $\mathsf{replay}$ returns the final document state as a concatenated sequence of characters.
For the sake of this proof, we define $\mathsf{replay}$ to instead return a sequence of $(\mathit{id}, c)$ pairs, where $\mathit{id}$ is the unique ID of the event that inserted the character $c$.
This allows us to distinguish between different occurrences of the same character.
The text of the document can be recovered by simply ignoring the $\mathit{id}$ of each pair and concatenating the characters.

We can now state our modified definition of the strong list specification:

\begin{definition}\label{strong-list-spec}
  A collaborative text editing algorithm with a replay function $\mathsf{replay}(G)$ satisfies the \emph{strong list specification} if for every valid event graph $G \subset \mathsf{Evt}$ there exists a relation $\mathit{lo} \subset \mathsf{ID} \times \mathsf{ID}$ called the \emph{list order}, such that:
  \begin{enumerate}
    \item For event $e \in G$, let $G_e = \{e\} \cup \mathsf{Events}(e.\mathit{parents})$ be the set of all events that happened before $e$ and $e$ itself.
    Let $\mathit{doc}_e = \mathsf{replay}(G_e) = \langle (\mathit{id}_0, c_0), \dots, (\mathit{id}_{n-1}, c_{n-1}) \rangle$ be the document state immediately after locally generating $e$, where $c_i \in \mathsf{Char}$ and $\mathit{id}_i \in \mathsf{ID}$. Then:
      \begin{enumerate}
        \item $\mathit{doc}_e$ contains exactly the elements that have been inserted but not deleted in $G_e$:
         \begin{multline*}
             \hspace{9pt}(\exists i \in [0, n-1]: \mathit{doc}_e [i] = (\mathit{id}, c)) \Longleftrightarrow \\
             (\exists a \in G_e, j \in \mathbb{N}: a.\mathit{id} = \mathit{id} \wedge a.\mathit{op} = \mathit{Insert}(j,c)) \;\wedge \\
             (\nexists b \in G_e, k \in \mathbb{N}: b.\mathit{op} = \mathit{Delete}(k) \;\wedge \\
             \mathsf{replay}(\mathsf{Events}(b.\mathit{parents}))[k] = (\mathit{id}, c)).
         \end{multline*}
        \item The order of the elements in $\mathit{doc}_e$ is consistent with the list order:
          \begin{equation*}
            \forall i, j \in [0, n-1]: i<j \Longrightarrow (\mathit{id}_i, \mathit{id}_j) \in \mathit{lo}.
          \end{equation*}
        \item Elements are inserted at the specified position:
          \begin{equation*}
            \qquad\forall i, c: e.\mathit{op} = \mathit{Insert}(i,c) \Longrightarrow \mathit{doc}_e [i] = (e.\mathit{id}, c).
          \end{equation*}
      \end{enumerate}
    \item The list order $\mathit{lo}$ is transitive, irreflexive, and total, and thus determines the order of all insert operations in the event graph.
  \end{enumerate}
\end{definition}

\subsection{Proving Convergence}

\begin{lemma}\label{lemma-prepare-delete}
  Let $e$ be an event in a valid event graph such that $e.\mathit{op} = \mathit{Delete}(i)$.
  In the internal state immediately before applying $e$ (in which all events that happened before $e$ have been advanced and all others have been retreated), either the record that $e$ will update has $s_p = \texttt{Ins}$, or it is part of a placeholder (which behaves like a sequence of $s_p = \texttt{Ins}$ records).
\end{lemma}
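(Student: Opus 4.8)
The plan is to reduce the claim to a single counting fact about the prepare state and then establish that fact by an invariant. By hypothesis the prepare version at this moment is exactly $e.\mathit{parents}$, i.e.\ the set of advanced events is $\mathsf{Events}(e.\mathit{parents})$. The algorithm locates the record that $\mathit{Delete}(i)$ updates by finding the $i$-th record whose prepare state is $\texttt{Ins}$, treating a placeholder as a run of $\texttt{Ins}$ records. Hence, by the selection rule itself, the record it lands on has $s_p = \texttt{Ins}$ or lies inside a placeholder \emph{provided such a record exists}. So it suffices to show that the prepare state contains at least $i+1$ positions that are visible (records with $s_p = \texttt{Ins}$, plus the spans of placeholders).

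First I would pin down what $s_p$ means as a function of the prepare version. I would prove, by induction over the sequence of $\mathsf{apply}/\mathsf{retreat}/\mathsf{advance}$ calls, the invariant that when the prepare version is $V_p$, each record's state is determined by $\mathsf{Events}(V_p)$ as follows: $s_p = \texttt{NotInsertedYet}$ if the record's insertion event is not in $\mathsf{Events}(V_p)$; otherwise $s_p = \texttt{Del}\,n$, where $n$ is the number of deletion events of that character contained in $\mathsf{Events}(V_p)$ (with $n=0$ written as $\texttt{Ins}$). The inductive step is exactly the state machine of \autoref{spv-state}: $\mathsf{apply}$ and $\mathsf{advance}$ add an insertion or deletion to the prepare version and move $s_p$ one step right, while $\mathsf{retreat}$ removes one and moves it one step left. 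The only global fact needed is that, as described in \autoref{graph-walk}, moving the prepare version retreats and advances precisely the events in the symmetric difference of the old and new versions; this guarantees that an event touching a record is reflected in $s_p$ if and only if it belongs to $\mathsf{Events}(V_p)$.

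With the invariant in hand I would specialise to $V_p = e.\mathit{parents}$. A record is then visible ($s_p = \texttt{Ins}$) exactly when its insertion lies in $\mathsf{Events}(e.\mathit{parents})$ and none of its deletions does, i.e.\ exactly when the corresponding character is present (inserted but not yet deleted) in $\mathit{doc}_p := \mathsf{replay}(\mathsf{Events}(e.\mathit{parents}))$. Placeholders are handled analogously: a placeholder stands for characters inserted before the critical version $V_\mathrm{crit}$ from which replay began (\autoref{partial-replay}); since $V_\mathrm{crit} \rightarrow e$, those insertions happened before $e$, the characters a remaining placeholder represents are still present at $e.\mathit{parents}$ (any intervening deletion would have split off a materialised record), and placeholders count as visible by definition. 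Consequently the number of visible positions in the prepare state equals $|\mathit{doc}_p|$. Finally, because the event graph is valid, clause~(4) of \autoref{valid-graph} guarantees that index $i$ exists in $\mathit{doc}_p$, so $|\mathit{doc}_p| \ge i+1$; the $i$-th visible position therefore exists and is the record that $e$ updates, completing the argument.

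The main obstacle is the invariant of the second paragraph, since it is where the local, per-record state machine must be tied to the global happened-before structure. The delicate points are verifying that the retreat/advance bookkeeping of \autoref{graph-walk} really does toggle each relevant event exactly once (so that $s_p$ cannot drift out of sync with $\mathsf{Events}(V_p)$), and that insertions and deletions are always applied to the intended record, so that the ``$n$ deletions in $\mathsf{Events}(V_p)$'' appearing in the invariant refers to the right character. I also expect to spend a little care on the partial-replay setting, where the internal state is seeded at $V_\mathrm{crit}$ with a single placeholder rather than built from the empty document: there the invariant must be read as a statement about the events after $V_\mathrm{crit}$, with everything at or before $V_\mathrm{crit}$ absorbed into the placeholder, which is always treated as visible because such events are never retreated.
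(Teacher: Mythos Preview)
Your proposal is correct, but it takes a substantially more elaborate route than the paper.

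The paper's proof is a three-line case analysis on the record that $e$ targets: $s_p = \texttt{NotInsertedYet}$ would mean the insertion of the deleted character was retreated, contradicting that an insertion must happen before any deletion of the same character; $s_p = \texttt{Del}\; k$ would mean some event that happened before $e$ already deleted the character, contradicting that $e$ could have been generated at all. That leaves $\texttt{Ins}$ or placeholder. The paper relies only implicitly on the fact that $s_p$ tracks the character's status in $\mathsf{Events}(e.\mathit{parents})$, and does not treat existence as a separate question.

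You, by contrast, observe that the algorithm \emph{selects} the $i$-th $s_p = \texttt{Ins}$ record, so the conclusion is immediate once existence is established; you then spend the bulk of the argument on existence, by formalising the invariant that $s_p$ encodes ``inserted/not-inserted/deleted-$n$-times in $\mathsf{Events}(V_p)$'' and deducing that the count of visible records equals $|\mathsf{replay}(\mathsf{Events}(e.\mathit{parents}))|$, which is at least $i{+}1$ by validity. This is more rigorous---it makes explicit the link between the state machine of \autoref{spv-state} and the happened-before structure, and it handles the partial-replay placeholder case carefully---but it is considerably heavier machinery than the paper deploys for this lemma. The invariant you isolate is essentially the same fact the paper's case analysis leans on; the difference is that you prove it as a standalone statement and derive the lemma from it, whereas the paper invokes it inline for just the two bad cases.
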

\begin{proof}
  If we had $s_p = \texttt{NotInsertedYet}$, that would imply that we retreated the insertion of the character deleted by $e$, which contradicts the fact that the insertion of a character must happen before any deletion of the same character.
  Furthermore, if we had $s_p = \texttt{Del}\; k$ for some $k$, that would imply that an event that happened before $e$ already deleted the same character, in which case it would not be possible to generate $e$.
  This leaves $s_p = \texttt{Ins}$ or placeholder as the only options that do not result in a contradiction.
\end{proof}

\begin{lemma}\label{lemma-prepare-state}
  Let $S_0$ be some internal \algname state, and let $a$ and $b$ be two concurrent events.
  Let $S_1$ be the internal state resulting from updating $S_0$ with retreat and advance calls so that the prepare version of $S_1$ equals the parents of $b$.
  Let $S_2$ be the internal state resulting from first replaying $a$ on top of $S_0$, and then retreating and advancing so that the prepare version of $S_2$ equals the parents of $b$.
  Then the only difference between $S_1$ and $S_2$ is in the record inserted or updated by $a$ (and possibly the split of a placeholder that this record falls within); the rest of $S_1$ and $S_2$ is the same.
\end{lemma}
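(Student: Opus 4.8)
The plan is to exploit a \emph{locality} property of the internal state: the effect-state field $s_e$ of each record is a pure function of the effect version, and the prepare-state field $s_p$ is a pure function of the prepare version, both independent of the particular sequence of $\mathsf{apply}$/$\mathsf{retreat}$/$\mathsf{advance}$ calls used to reach that version. I would first establish this purity from the update rules of \autoref{prepare-effect-versions}. Since $\mathsf{retreat}$ and $\mathsf{advance}$ are exact inverses (the state machine in \autoref{spv-state} is reversible), $s_p$ of a record ends up as $\texttt{NotInsertedYet}$ exactly when that record's insertion event is absent from the prepare version, as $\texttt{Ins}$ when the insertion is present and no deletion is, and as $\texttt{Del}\;n$ when exactly $n$ of its deletion events are present. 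Likewise $s_e$ is monotone and changes only through $\mathsf{apply}$, so $s_e = \texttt{Del}$ exactly when some deletion of that character lies in the effect version. I would also record that only $\mathsf{apply}$ ever inserts a new record or splits a placeholder, whereas $\mathsf{retreat}$ and $\mathsf{advance}$ touch only $s_p$ and never change the set, order, or CRDT ordering fields of the records.

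Next I would compare the two versions reached. Because $a \parallel b$ we have $a \not\rightarrow b$, hence $a \notin \mathsf{Events}(b.\mathit{parents})$. Both $S_1$ and $S_2$ have prepare version equal to $b.\mathit{parents}$, so by the purity of $s_p$ every record present in both states receives an identical $s_p$; in particular the effect of $a$ in $S_2$ (which was applied and then retreated when moving the prepare version to $b.\mathit{parents}$, since $a$ is concurrent with $b$) cancels out on every record. For the effect version, $\mathsf{retreat}$ and $\mathsf{advance}$ leave it unchanged, so $S_1$'s effect version is that of $S_0$ while $S_2$'s is $S_0$'s together with $a$; by the purity of $s_e$ these can differ only on the single record inserted or deleted by $a$.

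Finally I would assemble the two comparisons. If $a$ is an insertion, $S_2$ contains one extra record — the one inserted by $a$, carrying $s_e = \texttt{Ins}$ and, after its retreat, $s_p = \texttt{NotInsertedYet}$ — possibly splitting a placeholder, while every other record coincides in position, ordering fields (fixed at insertion time and hence unchanged for records already present in $S_0$), $s_p$, and $s_e$. If $a$ is a deletion, no record is added, but the record it targets has $s_e = \texttt{Del}$ in $S_2$ versus its $S_0$ value in $S_1$, while its $s_p$ agrees in both (by \autoref{lemma-prepare-delete} that record was $\texttt{Ins}$ or a placeholder before $a$); a placeholder split may again occur if that character was inserted before the critical version. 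In every case the sole discrepancy is confined to the record inserted or updated by $a$, plus the induced placeholder split, which is exactly the claim.

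The main obstacle I expect is rigorously justifying the path-independence of $s_p$: I must argue that the apply-of-$a$-then-retreat-of-$a$ in $S_2$ leaves no residue on any record other than $a$'s, and that the differing retreat/advance sequences used in $S_1$ (starting from $S_0$'s prepare version, namely $a.\mathit{parents}$, the precondition for applying $a$) and in $S_2$ (starting from the frontier $\{a\}$) nonetheless converge to the same $s_p$ values. This reduces to the observation that each $\mathsf{retreat}$/$\mathsf{advance}$ moves exactly one event in or out of the prepare version and that $a$ only ever modifies its own record, so the purity characterisation of $s_p$ above is the crux on which the whole argument rests.
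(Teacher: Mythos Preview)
Your approach is essentially the same as the paper's: both hinge on the observation that the $s_p$ field of each record is determined solely by the target prepare version (what you call ``purity'' and what the paper phrases as ``the $s_p$ values in $S_1$ do not depend on the $s_p$ values in $S_0$, and they do not depend on any events that are concurrent with $b$''), together with the fact that $\mathsf{retreat}$/$\mathsf{advance}$ never alter the record sequence or $s_e$. One small slip: you write ``$S_0$'s prepare version, namely $a.\mathit{parents}$'', but the lemma does not assume $S_0$ already has prepare version $a.\mathit{parents}$; replaying $a$ on top of $S_0$ itself begins with a retreat/advance to $a.\mathit{parents}$, and your path-independence argument covers this case anyway.
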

\begin{proof}
  Since $S_0$ is produced by \algname, it contains records for all characters that have been inserted or deleted by events since the last critical version prior to $a$ and $b$, it contains placeholders for any characters inserted but not deleted prior to that critical version, and it does not contain anything for characters that were deleted prior to that critical version.
  By the definition of critical version, any event $e$ that is concurrent with $a$ or $b$ must be after the critical version, and therefore the record that is updated by $e$ must exist in $S_0$.

  $S_1$ has the same record sequence and the same $s_e$ in each record as $S_0$, since retreating and advancing do not change those things.
  The $s_p$ values in $S_1$ are set so that every record inserted by an event that is concurrent with $b$ has $s_p = \texttt{NotInsertedYet}$, every record whose insertion happened before $b$ but which was not deleted before $b$ has $s_p = \texttt{Ins}$, and every record that was deleted by $k>0$ separate events before $b$ has $s_p = \texttt{Del}\; k$.
  To achieve this it is sufficient to consider events that happened after the last critical version.
  Thus, the $s_p$ values in $S_1$ do not depend on the $s_p$ values in $S_0$, and they do not depend on any events that are concurrent with $b$.

  Replaying $a$ on top of $S_0$ involves first updating the $s_p$ values to set the prepare version to the parents of $a$ (which may differ from the parents of $b$), and then applying $a$, which either inserts or updates a record in the internal state, and possibly splits a placeholder to accommodate this record.
  $S_2$ is then produced by updating all of the $s_p$ values in the same way as for $S_1$.
  As these $s_p$ values depend only on $b.\mathit{parents}$ and not on $a$, $S_2$ is identical to $S_1$ except for the record inserted or updated by $a$.
\end{proof}

\begin{lemma}\label{lemma-ins-ins}
  Let $a$ and $b$ be two concurrent events such that $a.\mathit{op} = \mathit{Insert}(i, c_i)$ and $b.\mathit{op} = \mathit{Insert}(j, c_j)$.
  If we start with some internal state and document state and then replay $a$ followed by $b$, the resulting internal state and document state are the same as if we had replayed $b$ followed by $a$.
\end{lemma}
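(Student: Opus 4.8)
The plan is to reduce the claim to a single question—the relative order in which the two new records are placed—and then appeal to the commutativity of the underlying list CRDT. Write $r_a$ and $r_b$ for the records created by the insertions $a$ and $b$. Applying an insertion event $e$ does two things: it places a new record using origins and a CRDT position computed from the prepare version $e.\mathit{parents}$, and it splices $e$'s character into the document at the effect position determined by the $s_e$ values. Since the document is recoverable as the sequence of characters of the records with $s_e = \texttt{Ins}$, read in record order, it suffices to show that after both orderings the record sequence and the $s_e$ value of every record coincide; the document then coincides automatically. The $s_p$ values do momentarily differ—after replaying $a$ then $b$ the prepare version is $b.\mathit{parents} \cup \{b\}$, whereas after $b$ then $a$ it is $a.\mathit{parents} \cup \{a\}$—but $s_p$ is a deterministic function of the record sequence and the target prepare version, so the next event resynchronises it via $\mathsf{retreat}$/$\mathsf{advance}$ and the full states then agree. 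I therefore focus on the record sequence and the $s_e$ values.

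First I would show that each insertion records the same origins regardless of order. The origins of $a$ are computed while the prepare version equals $a.\mathit{parents}$, which excludes $b$ because $a \parallel b$. When $a$ is processed second, $b$'s record is physically present but carries $s_p = \texttt{NotInsertedYet}$, since it was retreated on the way to $a.\mathit{parents}$. Because the insertion index is located purely from prepare-state counts and the right origin skips every record whose prepare state is below \texttt{Ins}, the visible neighbours that determine $a$'s origins are exactly those it would see if $b$ were absent; the symmetric statement holds for $b$. Hence $a$ and $b$ carry identical origins and IDs in both orderings, so their positions relative to the pre-existing records of the starting state are fixed, and only their relative order remains undetermined.

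Next I would isolate that relative-order question using \autoref{lemma-prepare-state}. Applied to the starting state $S_0$ and the pair $a, b$, it tells us that the state into which $b$ is integrated in the ``$a$ first'' ordering differs from the state into which $b$ would be integrated without $a$ only in the record $r_a$ (and a possible placeholder split), and symmetrically for $a$. Thus in the ``$a$ first'' ordering $b$ is integrated in the presence of the retreated record $r_a$, while in the ``$b$ first'' ordering $a$ is integrated in the presence of the retreated record $r_b$, and every other record is identical. The claim then reduces to: integrating $r_a$ then $r_b$ yields the same relative order of the two as integrating $r_b$ then $r_a$. This is exactly the commutativity of concurrent insertions guaranteed by the underlying list CRDT (such as RGA or YATA), whose integration function orders two records with fixed origins and IDs deterministically. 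A retreated record still participates in that ordering through its fixed origins and ID, so its temporarily demoted $s_p$ does not perturb the placement. Finally, both $r_a$ and $r_b$ are created with $s_e = \texttt{Ins}$, and neither insertion changes the $s_e$ of any other record, so the $s_e$ values agree as well, which completes the argument.

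I expect the main obstacle to be the last step: justifying that the \algname machinery presents to the CRDT precisely the inputs—origins, IDs, and surrounding records—on which the list CRDT's commutativity relies, and that the mere physical presence of a retreated concurrent record does not distort the integration. In particular, the delicate point is that the left and right origins depend only on the prepare version and not on which concurrent records happen to be present and retreated; this is exactly what \autoref{lemma-prepare-state} is designed to underwrite.
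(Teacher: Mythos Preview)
Your proposal is correct and follows essentially the same route as the paper: both arguments invoke Lemma~\ref{lemma-prepare-state} to show that the presence of the retreated concurrent record does not alter the neighbours seen when placing the other record, and both defer the relative ordering of $r_a$ and $r_b$ to the commutativity guarantee of the underlying list CRDT. The one noteworthy difference is in how document convergence is argued. You reduce it to an invariant (the document equals the characters of the $s_e=\texttt{Ins}$ records in order), so that record-sequence agreement plus $s_e$ agreement immediately yields document agreement. The paper instead carries out the index computation explicitly: assuming WLOG that $r_a$ precedes $r_b$, it shows that when $b$ is applied second its transformed index becomes $j'+1$ (because $r_a$ is skipped in the prepare-to-record direction but counted in the record-to-effect direction), while $a$'s transformed index is unaffected by $b$, giving the familiar OT-style picture of Figure~\ref{two-inserts}. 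Your reduction is cleaner but implicitly leans on the document/internal-state correspondence that the paper only proves later as Lemma~\ref{state-correspondence}; this is not circular, since the correspondence is maintained step-by-step by construction of the transformed index, but you should state that local fact directly rather than cite the later lemma. You are also more careful than the paper about the transient $s_p$ discrepancy after the swap, which the paper's statement of ``same internal state'' glosses over; your observation that $s_p$ is resynchronised before the next $\mathsf{apply}$ is exactly what makes the swap-based induction in Lemma~\ref{lemma-deterministic} go through.
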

\begin{proof}
  To replay $a$ followed by $b$, we first retreat/advance so that the prepare state corresponds to $a.\mathit{parents}$, then apply $a$, then retreat $a$, then retreat/advance so that the prepare state corresponds to $b.\mathit{parents}$, then apply $b$.
  Applying $a$ inserts a record into the internal state, and after retreating $a$ this record has $s_p = \texttt{NotInsertedYet}$ and $s_e = \texttt{Ins}$.
  Since $b$ is concurrent to $a$, $a$ cannot be a critical version, and therefore the internal state is not cleared after applying $a$.
  When $b$ is applied, the presence of the record inserted by $a$ is the only difference between the internal state when applying $b$ after $a$ compared to applying $b$ without applying $a$ first (by Lemma \ref{lemma-prepare-state}).
  When determining the insertion position in the internal state for $b$'s record based on $b$'s index $j$, the record inserted by $a$ does not count since it has $s_p = \texttt{NotInsertedYet}$.
  Therefore, $b$'s record is inserted into the internal state at the same position relative to its neighbours, regardless of whether $a$ has been applied previously.
  By similar argument the same holds for $a$'s record.

  As explained in \autoref{prepare-effect-versions}, the internal state uses a CRDT algorithm to place the records in the internal state in a consistent order, regardless of the order in which the events are applied.
  The details of that algorithm go beyond the scope of this paper.
  The key property of that algorithm is that the final sequence of internal state records is the same, regardless of whether we apply first $a$ and then $b$, or vice versa.
  For example, if we first apply $a$ then $b$, and if the final position of $b$'s record in the internal state is after $a$'s record, then the CRDT algorithm has to skip over $a$'s record (and potentially other, concurrently inserted records) when determining the insertion position for $b$'s record.
  This process never needs to skip over a placeholder, since placeholders represent characters that were inserted before the last critical version.
  It only ever needs to skip over records for insertions that are concurrent with $a$ or $b$; by the definition of critical versions, all such insertion events appear after the last critical version (and hence after the last internal state clearing) in the topological sort, and therefore they are represented by explicit internal state records, not placeholders.

  Now we consider the document state.
  WLOG assume that the record inserted by $a$ appears at an earlier position in the internal state than the record inserted by $b$ (regardless of the order of applying $a$ and $b$).
  Let $i'$ be the transformed index of $a.\mathit{op}$ when $a$ is applied first, and let $j'$ be the transformed index of $b.\mathit{op}$ when $b$ is applied first.

  Say we replay $a$ before $b$.
  When computing the transformed index for $b$, the internal state record for $a$ has $s_p = \texttt{NotInsertedYet}$, and hence it is not counted when mapping $b.\mathit{op}$'s index $j$ to $b$'s internal state record.
  However, $a$'s record \emph{is} counted when mapping $b$'s internal state record back to an index, since $a$'s record has $s_e = \texttt{Ins}$ and it appears before $b$'s record.
  Therefore the transformed index for $b.\mathit{op}$ is $j' + 1$ when applied after $a$.
  On the other hand, if we replay $b$ before $a$, the record for $b$ appears after the record for $a$ in the internal state, so the transformed index for $a$ is $i'$, unaffected by $b$.
  Thus, we have the situation as shown in \autoref{two-inserts}, and the effect of the two insertions $a$ and $b$ on the document state is the same regardless of their order.
\end{proof}

\begin{lemma}\label{lemma-ins-del}
  Let $a$ and $b$ be two concurrent events such that $a.\mathit{op} = \mathit{Insert}(i, c)$ and $b.\mathit{op} = \mathit{Delete}(j)$.
  If we start with some internal state and document state and then replay $a$ followed by $b$, the resulting internal state and document state are the same as if we had replayed $b$ followed by $a$.
\end{lemma}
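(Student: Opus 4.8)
The plan is to mirror the two-stage structure used in the proof of Lemma~\ref{lemma-ins-ins}: first show that the final internal state is identical whether we replay $a$ then $b$ or $b$ then $a$, and then deduce that the resulting document states agree. A preliminary observation underpins everything: since $a$ inserts a character and $b$ deletes one, and since the insertion of any character must happen before its deletion, $b$ cannot target the record created by $a$ (that would force $a \rightarrow b$, contradicting $a \parallel b$). Hence $a$'s record and $b$'s target record are always distinct, so the insertion and the deletion act on different parts of the internal state and cannot interfere.

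For the internal-state stage I would invoke Lemma~\ref{lemma-prepare-state} with this $a$ and $b$: the state in which $b$ is applied after $a$ differs from the state in which $b$ is applied alone only in the record inserted by $a$. In $b$'s prepare version that record carries $s_p = \texttt{NotInsertedYet}$ (as $a$ is concurrent with $b$), so it contributes nothing to the prepare-state index count; therefore $b.\mathit{op}$'s index $j$ resolves to the same target record regardless of whether $a$ ran first, and by Lemma~\ref{lemma-prepare-delete} that target has $s_p = \texttt{Ins}$ or lies in a placeholder, so $b$ is well defined. Symmetrically, a deletion never inserts, removes, or reorders records — it only updates the $s_p$ and $s_e$ flags of an existing record — so the sequence of records and the left/right origin references used by the CRDT integration are unaffected by whether $b$ has been applied. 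Thus $a$'s record is integrated at the same position, and $b$'s flags are set on the same record, in both orders, and the two final internal states coincide.

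For the document-state stage, note that the final document is exactly the sequence of characters of the records with $s_e = \texttt{Ins}$ (interleaved with the unchanged placeholder text), so equality of the final internal states already forces equality of the documents; nonetheless, to match the operational style of Lemma~\ref{lemma-ins-ins} I would verify the effect-version index bookkeeping directly. Without loss of generality I fix the relative order of the two records and split into cases: if $b$'s target already has $s_e = \texttt{Del}$ before $b$ is applied, then $b$ is a no-op on the document in both orders and the claim reduces to $a$ simply inserting its character; otherwise, when $a$'s record lies to the left of $b$'s, applying $a$ first shifts $b$'s effect index up by one (since $a$'s record is visible and precedes $b$'s), while applying $b$ first leaves $a$'s effect index unchanged, and the two shifts cancel so that the same character is inserted and the same original character deleted. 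The mirror case, where $a$'s record follows $b$'s, is analogous with the shift directions reversed. I expect this index bookkeeping — in particular the degenerate no-op case and confirming that the $+1$ shift introduced by $a$'s visible record exactly cancels the reordering of the deletion relative to the insertion — to be the main obstacle, whereas the internal-state stage is a fairly direct application of Lemmas~\ref{lemma-prepare-state} and~\ref{lemma-prepare-delete}.
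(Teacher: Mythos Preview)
Your proposal is correct and follows essentially the same route as the paper: the preliminary observation that $b$ cannot delete $a$'s character, the appeal to Lemma~\ref{lemma-prepare-state} (and Lemma~\ref{lemma-prepare-delete}) to pin down each event's target record independently of the other, and the case split on the relative position of the two records together with the $s_e = \texttt{Del}$ no-op sub-case all match the paper's argument. The only organisational difference is that you separate the reasoning into an internal-state stage followed by a document-state stage, whereas the paper interleaves both within each positional case; the content is the same.
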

\begin{proof}
  Since $a$ and $b$ are concurrent, the character being deleted by $b$ cannot be the character inserted by $a$.
  We therefore only need to consider two cases: (1)~the record inserted by $a$ has an earlier position in the internal state than the record updated by $b$; or (2) vice versa.

  Case (1): If we replay $a$ before $b$, we first apply $a$, then retreat $a$, then apply $b$ (and also retreat/advance other events before applying, like in Lemma \ref{lemma-ins-ins}).
  Applying $a$ inserts a record into the internal state, and after retreating $a$ this record has $s_p = \texttt{NotInsertedYet}$ and $s_e = \texttt{Ins}$.
  When subsequently applying $b$ we update an internal state record at a later position.
  The record inserted by $a$ is not counted when mapping $b$'s index to an internal record, but it is counted when mapping the internal record back to a transformed index, resulting in $b$'s transformed index being one greater than it would have been without earlier applying $a$ (by Lemma \ref{lemma-prepare-state}).
  On the other hand, if we replay $b$ before $a$, the record updated by $b$ appears after $a$'s record in the internal state, so the transformation of $a$ is not affected by $b$.
  The transformed operations therefore converge.

  Case (2): If we replay $b$ before $a$, we first apply $b$, then retreat $b$, then apply $a$ (plus other retreats/advances).
  Applying $b$ updates an existing record in the internal state (possibly splitting a placeholder in the process).
  Before applying $b$ this record must have $s_p = \texttt{Ins}$ (by Lemma \ref{lemma-prepare-delete}), and it can have either $s_e = \texttt{Ins}$ (in which case, the transformed operation for $b$ is $\mathit{Delete}(j')$ for some transformed index $j'$) or $s_e = \texttt{Del}$ (in which case, $b$ is transformed into a no-op).
  After applying and retreating $b$ this record has $s_p = \texttt{Ins}$ and $s_e = \texttt{Del}$ in any case.
  We next apply $a$, which by assumption inserts a record into the internal state at a later position than $b$'s record.
  If we had $s_e = \texttt{Del}$ before applying $b$, the process of applying and retreating $b$ did not change the internal state, so the transformed operation for $a$ is the same as if $b$ had not been applied, which is consistent with the fact that $b$ was transformed into a no-op.
  If we had $s_e = \texttt{Ins}$ before applying $b$, $b$'s record is counted when mapping $a$'s index to an internal record position, but not counted when mapping the internal record back to a transformed index, resulting in $a$'s transformed index being one less than it would have been without earlier applying $b$ (by Lemma \ref{lemma-prepare-state}), as required given that $b$ has deleted an earlier character.
  On the other hand, if we replay $a$ before $b$, the record inserted by $a$ appears after $b$'s record in the internal state, so the transformation of $b$ is not affected by $a$, and the transformed operations converge.
\end{proof}

\begin{lemma}\label{lemma-del-del}
  Let $a$ and $b$ be two concurrent events such that $a.\mathit{op} = \mathit{Delete}(i)$ and $b.\mathit{op} = \mathit{Delete}(j)$.
  If we start with some internal state and document state and then replay $a$ followed by $b$, the resulting internal state and document state are the same as if we had replayed $b$ followed by $a$.
\end{lemma}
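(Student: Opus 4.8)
The plan is to follow the same template as the proofs of Lemmas~\ref{lemma-ins-ins} and~\ref{lemma-ins-del}. Since $a$ and $b$ are concurrent, neither can be a critical version, so the internal state is not cleared between them, and by the argument used in those lemmas the record that each of $a$ and $b$ updates is already present in the internal state (either as an explicit record, or inside a placeholder that the deletion splits off). By Lemma~\ref{lemma-prepare-state}, replaying $a$ and then moving the prepare version to $b.\mathit{parents}$ differs from moving to that prepare version directly only in the record touched by $a$, so it suffices to track the records that $a$ and $b$ act on. I would then split into two cases according to whether $a$ and $b$ delete the same character (update the same record) or two different characters.

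For the different-records case, I would argue exactly as in Lemma~\ref{lemma-ins-del}. Assume WLOG that $a$'s record precedes $b$'s record in the sequence. By Lemma~\ref{lemma-prepare-delete} each deletion finds its target with $s_p = \texttt{Ins}$ at the moment it is applied; it is transformed into a genuine $\mathit{Delete}$ if the record had $s_e = \texttt{Ins}$, and into a no-op if it already had $s_e = \texttt{Del}$. When $a$ is replayed first and genuinely deletes, applying-then-retreating $a$ leaves its record with $s_p = \texttt{Ins}$ and $s_e = \texttt{Del}$; this leaves the prepare-index lookup for $b$ unchanged (so $b$ maps to the same record) but removes $a$'s record from the effect-state count, reducing $b$'s transformed index by exactly one, which is correct because $a$ removed an earlier character. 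Replaying $b$ first leaves $a$'s transformation untouched, and the no-op subcases follow from the idempotence of setting $s_e = \texttt{Del}$. In every subcase the record sequence and all $s_e$ values are independent of the order, and the two resulting document states coincide.

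The same-record case is the one genuinely new to del--del, and I expect it to be the main obstacle, since it has no analogue in the insert--insert or insert--delete lemmas (two concurrent insertions always create distinct records, and a concurrent insert/delete pair cannot touch the same character). Here I would show that whichever of $a$ and $b$ is applied first sees $s_p = \texttt{Ins}$ (Lemma~\ref{lemma-prepare-delete}) and performs a genuine deletion exactly when the shared record was visible ($s_e = \texttt{Ins}$) in the starting state $S_0$, thereby setting $s_e = \texttt{Del}$. After retreating that event and moving the prepare version to the other event's parents (which, by concurrency, does not contain the first event), the record again has $s_p = \texttt{Ins}$ but now $s_e = \texttt{Del}$, so the second deletion is \emph{always} transformed into a no-op. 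Hence both orders delete the character exactly once at the shared record's effect position, or not at all if $S_0$ already had $s_e = \texttt{Del}$, leaving identical documents; and the shared record finishes with $s_e = \texttt{Del}$ and, at equal prepare versions, equal $s_p$.

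The crux throughout is that $s_e$ moves only from \texttt{Ins} to \texttt{Del} and is never reverted by the $\mathsf{apply}$/$\mathsf{retreat}$ of a deletion, so the effect state is order-independent and idempotent; combined with Lemma~\ref{lemma-prepare-delete}, which pins $s_p = \texttt{Ins}$ at each apply, this forces the second deletion of a shared character to degenerate into a no-op. A minor bookkeeping point I would verify is that when $a$ and $b$ delete characters inserted before the last critical version, both traversal orders split the relevant placeholder(s) in the same way, so the resulting record sequence still matches; this follows because placeholder splitting is driven only by the deleted index and the surrounding records, which Lemma~\ref{lemma-prepare-state} already shows to be order-independent.
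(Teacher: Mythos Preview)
Your proposal is correct and follows essentially the same approach as the paper: the same two-case split (distinct records, with the earlier one WLOG being $a$'s, versus the same record), the same use of Lemma~\ref{lemma-prepare-delete} to pin $s_p=\texttt{Ins}$ at each apply, and the same observation that in the shared-record case the second deletion necessarily becomes a no-op because $s_e$ has already been set to \texttt{Del}. Your treatment is, if anything, slightly more explicit than the paper's on the sub-cases where the first deletion is itself a no-op and on placeholder splitting, but the argument is structurally identical.
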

\begin{proof}
  WLOG we need to consider two cases: (1)~the record updated by $a$ has an earlier position in the internal state than the record updated by $b$; or (2)~$a$ and $b$ update the same internal state record. The case where $a$'s record has a later position than $b$'s record is symmetric to (1).

  Case (1): We further consider two sub-cases: (1a)~the record that $a$ will update has $s_e = \texttt{Ins}$ prior to applying $a$; or (1b)~the record has $s_e = \texttt{Del}$.

  Case (1a): Say we replay $a$ before $b$.
  Before applying $a$, the record that $a$ will update must have $s_p = \texttt{Ins}$ (by Lemma \ref{lemma-prepare-delete}).
  After applying and retreating $a$, the record updated by $a$ has $s_p = \texttt{Ins}$ and $s_e = \texttt{Del}$, and the transformed operation for $a$ is $\mathit{Delete}(i')$ for some transformed index $i'$.
  We subsequently apply $b$, which by assumption updates an internal state record that is later than $a$'s.
  $a$'s record is therefore counted when mapping the index of $b.\mathit{op}$ to an internal record position, but not counted when mapping the internal record back to a transformed index.
  If $a$ had not been replayed previously, it would have been counted during both mappings (by Lemma \ref{lemma-prepare-state}).
  Thus, if the record updated by $b$ has $s_e = \texttt{Ins}$, the transformed operation for $b$ is $\mathit{Delete}(j'-1)$, where $j'$ is the transformed index of $b$'s operation if $a$ had not been replayed previously, and $j'-1 \geq i'$, as required.
  If $b$'s record previously has $s_e = \texttt{Del}$, it is transformed into a no-op.
  On the other hand, if we replay $b$ before $a$, the record updated by $b$ appears later than $a$'s record in the internal state, so the transformation of $a$ is not affected by $b$.

  Case (1b): Say we replay $a$ before $b$.
  Before applying $a$, the record that $a$ will update must have $s_p = \texttt{Ins}$ (by Lemma \ref{lemma-prepare-delete}), and $s_e = \texttt{Del}$ by assumption.
  After applying and retreating $a$, the record updated by $a$ remains in the same state ($s_p = \texttt{Ins}$, $s_e = \texttt{Del}$), and the transformed operation for $a$ is a no-op.
  When we subsequently apply $b$, the transformed operation is therefore the same as if $a$ had not been applied, as required.
  On the other hand, if we replay $b$ before $a$, the record updated by $b$ appears later than $a$'s record in the internal state, so the transformation of $a$ is not affected by $b$.

  Case (2): Before replaying both of the events, the record that both events update may have $s_e = \texttt{Ins}$ or $s_e = \texttt{Del}$, but after applying the first event it definitely has $s_e = \texttt{Del}$.
  The second event will therefore be transformed into a no-op.
  The same happens regardless of whether $a$ or $b$ is replayed first, so the result does not depend on the order of replay of the two events.
\end{proof}

\begin{lemma}\label{lemma-deterministic}
  Given a valid event graph $G$, $\mathsf{replay}(G)$ is a deterministic function.
  In other words, any two replicas that have the same event graph converge to the same document state and the same internal state.
\end{lemma}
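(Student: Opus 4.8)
The plan is to show that the only source of nondeterminism in the algorithm---the choice of topological sort order in which events are processed---cannot affect the output. Since $\mathsf{replay}(G)$ is otherwise a sequence of deterministic steps (retreat, advance, apply, and the fixed CRDT tie-breaking rule for concurrent insertions), it suffices to prove that every valid topological sort of $G$ yields the same final document state and the same internal state. I would reduce this global claim to a purely local one: that exchanging two \emph{adjacent} events in the processing order leaves the final output unchanged, provided those two events are concurrent.

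The combinatorial backbone is the standard fact that any two linear extensions of a finite partial order are connected by a finite sequence of adjacent transpositions, each swapping two consecutive elements that are incomparable in the order. Here ``incomparable'' is exactly ``concurrent'' under the $\rightarrow$ relation, and two events with $a \rightarrow b$ can never be swapped because $a$ must precede $b$ in every topological sort. I would justify connectivity by the usual inversion-count argument: if two sorts differ, some pair of events appears in opposite relative order; one can always locate such an inverted pair that is adjacent in one of the sorts, and since an inverted pair cannot be comparable (a comparable pair is ordered identically in both sorts), swapping it is legal and strictly decreases the number of disagreeing pairs, so the process terminates. This matches the hypotheses of Lemmas~\ref{lemma-ins-ins}, \ref{lemma-ins-del}, and \ref{lemma-del-del} precisely, each of which requires the swapped events to be concurrent.

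For the local step I would fix a sort containing an adjacent concurrent pair $(a,b)$ and let $S_0$ be the internal and document state reached after processing the prefix preceding the pair. Because $S_0$ results from replaying an initial segment of a valid topological sort, it is a genuine intermediate \algname state and therefore satisfies the preconditions of Lemma~\ref{lemma-prepare-state} and of the three swap lemmas. The operation types of $a$ and $b$ fall into exactly the three cases covered by Lemmas~\ref{lemma-ins-ins} (insert/insert), \ref{lemma-ins-del} (insert/delete), and \ref{lemma-del-del} (delete/delete); in each case the internal and document state after processing $a$ then $b$ equal those after processing $b$ then $a$. Hence the state entering the common suffix is identical for the two orders, the suffix is then processed identically, and the two sorts produce identical output. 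Chaining this invariance along the sequence of transpositions supplied by the combinatorial fact shows that all sorts agree, which is the claimed determinism.

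The main obstacle I anticipate is not the combinatorics but the bookkeeping around the transient prepare state. After swapping $(a,b)$ the final prepare version differs between the two orders (it ends at $b.\mathit{parents}\cup\{b\}$ versus $a.\mathit{parents}\cup\{a\}$), so the $s_p$ values are momentarily different even though the swap lemmas assert that the states agree. I would resolve this by observing that what is persistent---the sequence of records, their order as fixed by the CRDT, and the effect state $s_e$---is identical, whereas $s_p$ is recomputed from scratch by the retreat/advance calls preceding the next event, and those calls depend only on the target prepare version and the (identical) record structure, not on the path by which $S_0$ was reached; thus the momentary difference in $s_p$ never propagates, and at the end of the replay the prepare version equals $\mathsf{Version}(G)$ regardless of order. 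I would also remark that the state-clearing optimisation of \autoref{clearing} does not threaten determinism: a critical version partitions $G$ so that all earlier events precede all later ones in every topological sort, so concurrent swappable pairs never straddle a critical boundary and clearing always occurs at the same logical point.
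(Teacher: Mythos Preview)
Your proposal is correct and follows essentially the same route as the paper: reduce to adjacent transpositions of concurrent events, dispatch the three operation-type cases to Lemmas~\ref{lemma-ins-ins}--\ref{lemma-del-del}, and note that state clearing cannot fall between concurrent events. If anything, your version is more careful than the paper's: you spell out the inversion-count argument for connectivity of linear extensions, and you explicitly handle the momentary discrepancy in the prepare state $s_p$ after a swap (which the paper's statements of Lemmas~\ref{lemma-ins-ins}--\ref{lemma-del-del} gloss over by asserting the ``internal state'' is identical), correctly observing that the next retreat/advance step recomputes $s_p$ from the record structure alone.
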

\begin{proof}
  The algorithms to transform an operation and to apply a transformed operation to the document state are by definition deterministic.
  This leaves as the only source of nondeterminism the choice of topologically sorted order ($G$ is valid and hence acyclic, thus at least one such order exists, but there may be several topologically sorted orders if $G$ contains concurrent events).
  We show that all sort orders result in the same final document state.

  Let $E = \langle e_1, e_2, \dots, e_n \rangle$ and $E' = \langle e'_1, e'_2, \dots, e'_n \rangle$ be two topological sort orders of $G = \{e_1, e_2, \dots, e_n\}$.
  Then $E'$ must be a permutation of $E$.
  Both sequences are in some causal order, that is: if $e_i \rightarrow e_j$ ($e_i$ happens before $e_j$, as defined in \autoref{event-graphs}), then $e_i$ must appear before $e_j$ in both $E$ and $E'$.
  If $e_i \parallel e_j$ (they are concurrent), the events could appear in either order.
  Therefore, it is possible to transform $E$ into $E'$ by repeatedly swapping two concurrent events that are adjacent in the sequence.
  We show that at each such swap we maintain the invariant that the document state and the internal state resulting from replaying the events in the order before the swap are equal to the states resulting from replaying the events in the order after the swap.
  Therefore, the document state and the internal state resulting from replaying $E$ are equal to those resulting from $E'$.

  Let $\langle e_1, e_2, \dots, e_i, e_{i+1}, \dots, e_n \rangle$ be the sequence of events prior to one of these swaps, and $e_i$, $e_{i+1}$ are the events to be swapped.
  Replaying the events in the prefix $\langle e_1, e_2, \dots, e_{i-1} \rangle$ is a deterministic algorithm resulting in some document state and some internal state.
  Next, we replay either $e_i$ followed by $e_{i+1}$, or $e_{i+1}$ followed by $e_i$.
  Since $e_i$ and $e_{i+1}$ are concurrent, it is not possible for only one of the two to be contained in a critical version, and therefore no state clearing will take place between applying these two events.
  If $e_i$ and $e_{i+1}$ are both insertions, the resulting states in either order are the same by Lemma \ref{lemma-ins-ins}.
  If one of $e_i$ and $e_{i+1}$ is an insertion and the other is a deletion, we use Lemma \ref{lemma-ins-del}.
  If both $e_i$ and $e_{i+1}$ are deletions, we use Lemma \ref{lemma-del-del}.
  Finally, replaying the suffix $\langle e_{i+2}, \dots, e_n \rangle$ is a deterministic algorithm.
  This shows that concurrent operations commute.
\end{proof}

\subsection{Satisfying the Strong List Specification}

\begin{lemma}\label{state-correspondence}
  Let $G$ be a valid event graph, let $\mathit{doc} = \mathsf{replay}(G)$ be the document state resulting from replaying $G$, and let $S$ be the internal state after replaying $G$.
  Then the $i$th element in $\mathit{doc}$ corresponds to the $i$th record with $s_e = \texttt{Ins}$ in the internal state (counting placeholders as having $s_e = \texttt{Ins}$, and not counting records with $s_e = \texttt{Del}$).
  Moreover, the set of elements in $\mathit{doc}$ is exactly the elements that have been inserted but not deleted in $G$:
  \begin{multline*}
    (\exists i \in [0, n-1]: \mathit{doc}[i] = (\mathit{id}, c)) \Longleftrightarrow \\
    (\exists a \in G, i \in \mathbb{N}: a.\mathit{id} = \mathit{id} \wedge a.\mathit{op} = \mathit{Insert}(i,c)) \;\wedge \\
    (\nexists b \in G, i \in \mathbb{N}: b.\mathit{op} = \mathit{Delete}(i) \;\wedge \\
    \mathsf{replay}(\mathsf{Events}(b.\mathit{parents}))[i] = (\mathit{id}, c)).
  \end{multline*}
\end{lemma}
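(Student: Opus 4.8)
The plan is to prove the lemma by induction on the length of a topological sort of $G$, carrying the positional correspondence itself as the induction invariant. By Lemma~\ref{lemma-deterministic} the choice of topological order is immaterial, so I may fix a single order $\langle e_1, \dots, e_n \rangle$ and reason about the internal and document states after each prefix has been replayed. The invariant is exactly the first claim: after replaying $e_1, \dots, e_k$, the $i$th element of the current document equals the $i$th record with $s_e = \texttt{Ins}$, treating a placeholder as a run of $s_e = \texttt{Ins}$ records. In the full-replay setting there are no placeholders and the base case pairs the empty document with the empty record sequence; in the partial-replay setting of \autoref{partial-replay} the base case is the initial placeholder, which by construction stands for exactly the inserted-but-not-deleted characters at $V_\mathrm{crit}$, so the invariant holds at the start.

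For the inductive step I would first observe that the retreat/advance calls preceding $\mathsf{apply}(e_k)$ modify only $s_p$ values, never $s_e$ or the document, so they preserve the invariant trivially. It then remains to check $\mathsf{apply}(e_k)$ in two cases. For an insertion, the algorithm creates a record with $s_e = \texttt{Ins}$ and splices the character into the document at the effect position, defined as the number of $s_e = \texttt{Ins}$ records to its left; inserting a visible record and the corresponding character at matching positions preserves the lockstep correspondence. For a deletion, if the targeted record already has $s_e = \texttt{Del}$ the operation is a no-op on both the document and the $s_e$-records, and if it has $s_e = \texttt{Ins}$ then setting $s_e = \texttt{Del}$ removes it from the visible sequence exactly as the character is removed from the document at the matching effect position. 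Either way the invariant is maintained, establishing the first claim.

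To obtain the set characterisation I need a companion fact about the prepare state: whenever the prepare version has been set to some version $V$, the records with $s_p = \texttt{Ins}$ correspond positionally to $\mathsf{replay}(\mathsf{Events}(V))$. This follows because the state-machine rules of \autoref{prepare-effect-versions} guarantee $s_p = \texttt{Ins}$ precisely for characters inserted but not deleted within $V$, while the record ordering is fixed by the CRDT at insertion time and, by the order-consistency property used in Lemma~\ref{lemma-ins-ins}, agrees with the effect-state ordering of an independent replay of $\mathsf{Events}(V)$. With this in hand the two directions of the biconditional are routine: the forward direction reads off from the first claim that every $\mathit{doc}[i]$ is a visible record, hence produced by an insertion event and never marked $s_e = \texttt{Del}$; the backward direction notes that a character inserted and never deleted in $G$ retains $s_e = \texttt{Ins}$ throughout the replay and is therefore visible.

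The main obstacle I anticipate is the connection between an index-based deletion and the specific character it removes. To show that $\mathsf{apply}(b)$ with $b.\mathit{op} = \mathit{Delete}(k)$ marks exactly the record whose insertion id matches that of $\mathsf{replay}(\mathsf{Events}(b.\mathit{parents}))[k]$, I would invoke the prepare-state correspondence at the moment $b$ is applied, when the prepare version equals $b.\mathit{parents}$ by the retreat/advance discipline of \autoref{graph-walk}: the $k$th $s_p = \texttt{Ins}$ record then targeted is, by that correspondence, exactly the character at index $k$ of $\mathsf{replay}(\mathsf{Events}(b.\mathit{parents}))$. Lemma~\ref{lemma-prepare-delete} guarantees this target genuinely has $s_p = \texttt{Ins}$ (or lies in a placeholder), so the lookup is well-defined. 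The delicate part is justifying that the record ordering seen by the restricted prepare state agrees with that of an independent replay of $\mathsf{Events}(b.\mathit{parents})$; here I expect to lean on the CRDT's order-consistency together with the observation that every record distinguishing the two states belongs to an event concurrent with $b$ and hence carries $s_p = \texttt{NotInsertedYet}$, so it does not shift the prepare-state index.
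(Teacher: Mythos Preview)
Your approach is essentially the paper's: induction over a topological prefix, with the positional correspondence between the document and the $s_e=\texttt{Ins}$ records as the invariant, and a case split on whether the next event is an insertion or a deletion. Your explicit isolation of the ``prepare-state correspondence'' as a companion fact is, if anything, more careful than the paper, which simply asserts that the $j$th record with $s_p=\texttt{Ins}$ corresponds to $\mathsf{replay}(\mathsf{Events}(e_{k+1}.\mathit{parents}))[j]$ without further argument.

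There is one omission. You treat placeholders only as a base-case phenomenon of partial replay, writing that ``in the full-replay setting there are no placeholders.'' But the state-clearing optimisation of \autoref{clearing} can fire mid-replay whenever the processed prefix reaches a critical version, replacing the entire internal state with a single placeholder even during a full replay from the empty graph. The paper's proof handles this as a separate case in the inductive step: it argues that at a critical version every $s_e=\texttt{Del}$ record can be discarded (it will never transition back to $\texttt{Ins}$ and will never be counted again), and that all surviving records have $s_e=s_p=\texttt{Ins}$, so collapsing them into one placeholder preserves the invariant. Your inductive step should include this case; otherwise the invariant is not shown to survive the clearing step, and your later appeal to Lemma~\ref{lemma-prepare-delete} (which allows the deletion target to lie in a placeholder) is left unsupported.
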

\begin{proof}
  Let $E = \langle e_1, e_2, \dots, e_n \rangle$ be some topological sort of $G$, and assume that we replay $G$ in this order.
  By Lemma \ref{lemma-deterministic} it does not matter which of the possible orders we choose.
  We then prove the thesis by induction over $n$, the number of events in $G$.
  The base case is trivial: $G=\{\}$, $\mathit{doc}=\langle \rangle$, so there are no events, no records in the internal state, and no elements in the document state.

  Inductive step: Let $E_k = \langle e_1, e_2, ..., e_k \rangle$ with $k<n$ be a prefix of $E$.
  Since the set of events in $E_k$ also forms a valid event graph, we can assume the inductive hypothesis, namely that replaying $E_k$ results in a document corresponding to the records with $s_e = \texttt{Ins}$ in the resulting internal state, and the document contains exactly those elements that have been inserted but not deleted by an operation in $E_k$.
  We now add $e_{k+1}$, the next event in the sequence $E$, to the replay.
  We do this by transforming $e_{k+1}$ using the internal state obtained by replaying $E_k$, and applying the transformed operation to the document state from $E_k$.
  We need to show that the invariant is still preserved in the following two cases: either (1)~$e_{k+1}.\mathit{op} = \mathit{Insert}(j,c)$ for some $j$, $c$, or (2)~$e_{k+1}.\mathit{op} = \mathit{Delete}(j)$ for some $j$.
  We also have to consider the case where the internal state is cleared, but we begin with the case where no state clearing occurs.

  Case (1): The set of elements that have been inserted but not deleted grows by $(e_{k+1}.\mathit{id}, c)$ and otherwise stays unchanged.
  The transformation of an insertion operation is always another insertion operation.
  The document state is thus updated by inserting the same element $(e_{k+1}.\mathit{id}, c)$, and otherwise remains unchanged.
  Moreover, the transformed index of that insertion is computed by counting the number of internal state records with $s_e = \texttt{Ins}$ that appear before the new record in the internal state, and the new record also has $s_e = \texttt{Ins}$, and the $s_e$ property of no other record is updated, so the correspondence between internal state records and document state is preserved.

  Case (2): The element being deleted is located at index $j$ in the document at the time $e_{k+1}$ was generated, which is $\mathsf{replay}(\mathsf{Events}(e_{k+1}.\mathit{parents}))$.
  We compute this element by retreating and advancing events until the prepare version equals $e_{k+1}.\mathit{parents}$, and then finding the $j$th (zero-indexed) record that has $s_p = \texttt{Ins}$ in the internal state.
  The records with $s_p = \texttt{Ins}$ are those that have been inserted but not deleted in events that happened before $e_{k+1}$, and thus the $j$th such record corresponds to $\mathsf{replay}(\mathsf{Events}(e_{k+1}.\mathit{parents}))[j]$.
  Before applying $e_{k+1}$, this record may have either $s_e = \texttt{Ins}$ or $s_e = \texttt{Del}$.
  If $s_e = \texttt{Ins}$, we update it to $s_e = \texttt{Del}$, and transform $e_{k+1}$ into a deletion whose index is the number of $s_e = \texttt{Ins}$ to the left of the target record in the internal state; by the inductive hypothesis, this is the correct document element to be deleted.
  If $s_e = \texttt{Del}$ before applying $e_{k+1}$, that event is transformed into a no-op, since another operation in $E_k$ has already deleted the element in question from the document state.
  In either case, we preserve the invariants of the induction.

  If $e_{k+1}$ is a critical version, we clear the internal state and replace it with a placeholder.
  By the definition of critical version, every event in $E_k$ and $e_{k+1}$ happened before every event in the rest of $E$.
  Therefore, after retreating and advancing any event after $e_{k+1}$, any internal state record with $s_e = \texttt{Del}$ will also have $s_p = \texttt{Del}\; k$ for some $k>0$, and any internal state record with $s_e = \texttt{Ins}$ will also have $s_p = \texttt{Ins}$ unless it is deleted by an event after $e_{k+1}$.
  Since an internal state with $s_e = \texttt{Del}$ can never move to state $s_e = \texttt{Ins}$, this means that any records with $s_e = \texttt{Del}$ as of the critical version can be discarded, since they will never again be needed for transforming the index of an operation after $e_{k+1}$.
  Moreover, since all of the remaining records have $s_e = s_p = \texttt{Ins}$ as of the critical version, and since the replay of the remaining events in $E$ will never need to advance or retreat an event prior to the critical version, all of the records in the internal state can all be replaced by a single placeholder while still preserving the invariants of the induction.
\end{proof}

\begin{theorem}\label{main-theorem}
  The \algname algorithm satisfies the strong list specification (Definition \ref{strong-list-spec}).
\end{theorem}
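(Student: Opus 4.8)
The plan is to verify the four clauses of Definition \ref{strong-list-spec} one at a time, leaning on the convergence result (Lemma \ref{lemma-deterministic}) and the state-correspondence result (Lemma \ref{state-correspondence}), which already do most of the work. First I would construct the list order $\mathit{lo}$ explicitly. Replay the whole graph $G$ using the simplified variant of the algorithm that never clears its internal state; the resulting internal state is a single linear sequence containing exactly one record per inserted character, including tombstones for deleted ones. I define $(\mathit{id}_a, \mathit{id}_b) \in \mathit{lo}$ to hold iff the record for $a$ precedes the record for $b$ in this sequence. Lemma \ref{lemma-deterministic} guarantees that this sequence, and hence $\mathit{lo}$, is independent of the chosen topological sort, so $\mathit{lo}$ is well defined. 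Clause~2 of the specification is then immediate: left-to-right position in a fixed sequence is a strict total order, so $\mathit{lo}$ is irreflexive, transitive, and total over the insert events of $G$, and since each insert event contributes exactly one record, $\mathit{lo}$ orders all of them.

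For clause~1(a) I would instantiate Lemma \ref{state-correspondence} at the event graph $G_e = \{e\} \cup \mathsf{Events}(e.\mathit{parents})$, which is itself valid because it is causally closed and acyclic; the characterisation of the contents of $\mathit{doc}_e$ as exactly the inserted-but-not-deleted elements is then literally the conclusion of that lemma. For clause~1(c) I would exploit that $e$ is a maximal element of $G_e$. Using Lemma \ref{lemma-deterministic} I may pick any topological sort, so I take one in which $e$ is applied last. At that moment every other event of $G_e$ --- precisely the ancestors $\mathsf{Events}(e.\mathit{parents})$ --- has been applied, so the effect version has frontier $e.\mathit{parents}$, and after the retreat/advance step the prepare version also equals $e.\mathit{parents}$. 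With prepare version equal to effect version, the set of records with $s_p = \texttt{Ins}$ coincides with the set having $s_e = \texttt{Ins}$, so the index-to-record and record-to-index maps agree, the transformation is the identity, and $e$'s character is inserted at index $i$. Since $e$ is applied last, the document immediately afterwards is $\mathit{doc}_e$, giving $\mathit{doc}_e[i] = (e.\mathit{id}, c)$ as required.

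Clause~1(b) is where the real work lies, and I expect it to be the main obstacle. By Lemma \ref{state-correspondence} the order of the elements of $\mathit{doc}_e$ matches the left-to-right order of the corresponding $s_e = \texttt{Ins}$ records in the internal state after replaying $G_e$, so it suffices to show this order agrees with $\mathit{lo}$. The structural facts I would use are that $G_e$ is causally closed, hence forms a prefix of some topological sort of the full graph $G$, and that records are never removed or reordered once inserted (Section \ref{prepare-effect-versions}). Appealing to Lemma \ref{lemma-deterministic}, I replay $G$ in an order that processes all of $G_e$ first: the records for the insertions in $G_e$ then appear in exactly the order that defines $\mathit{lo}$ restricted to those IDs, and replaying the remaining events of $G$ only splices further records between them without disturbing the relative order of the existing ones. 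The delicate point is reconciling this with the state-clearing optimisation, since the sequence defining $\mathit{lo}$ and the one obtained from replaying $G_e$ may in practice contain placeholders rather than explicit records; I would resolve this by observing that the document output of $\mathsf{replay}$ is independent of whether the internal state is cleared, so it is legitimate to reason throughout with the no-clearing variant, for which the ``never reordered'' invariant yields the consistency directly. Combining clauses~1(a)--1(c) with clause~2 then establishes Theorem \ref{main-theorem}.
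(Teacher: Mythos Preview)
Your proposal is correct and follows essentially the same approach as the paper's proof: construct $\mathit{lo}$ from the record ordering in the internal state after replaying all of $G$ with the no-clearing variant, obtain clause~2 immediately, derive (1a) from Lemma~\ref{state-correspondence} applied to $G_e$, derive (1c) from the observation that with $e$ maximal the prepare and effect versions coincide so the transformation is the identity, and derive (1b) from the fact that the record order for $G_e$ is preserved in the full replay of $G$. Your treatment of (1b) is actually spelled out more carefully than the paper's (you make explicit the prefix-of-a-topological-sort argument and the ``records are never reordered'' invariant, whereas the paper simply asserts the preservation), but the underlying idea is identical.
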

\begin{proof}
  Given a valid event graph $G$, let $\mathsf{replay}(G)$ be the replay function based on \algname, as introduced earlier.
  We must show that there exists a list order $\mathit{lo} \subset \mathsf{ID} \times \mathsf{ID}$ that satisfies the conditions given in Definition \ref{strong-list-spec}.
  We claim that this list order corresponds exactly to the sequence of records and placeholders in the internal state after replaying the entire event graph $G$.
  By Lemma \ref{lemma-deterministic}, this internal state exists and is unique.
  This correspondence is more apparent if we assume a variant of \algname that does not clear the internal state on critical versions, but we also claim that performing the optimisations in \autoref{clearing} preserves this property.

  To begin, note that the internal state is a totally ordered sequence of records, and that (aside from clearing the internal state) we only ever modify this sequence by inserting records or by updating the $s_p$ and $s_e$ properties of existing records.
  Thus, if a record with ID $\mathit{id}_i$ appears before a record with ID $\mathit{id}_j$ at some point in the replay, the order of those IDs remains unchanged for the rest of the replay.
  We define the list order $\mathit{lo}$ to be the ordering relation among IDs in the internal state after replaying $G$ using a \algname variant that does not clear the internal state.
  This order exists, is unique (Lemma \ref{lemma-deterministic}), and is by definition transitive, irreflexive, and total, so it meets requirement (2) of Definition \ref{strong-list-spec}.

  Let $e \in G$ be any event in the graph, and let $G_e = \{e\} \cup \mathsf{Events}(e.\mathit{parents})$ be the subset of $G$ consisting of $e$ and all events that happened before $e$.
  Note that $G_e$ satisfies the conditions in Definition \ref{valid-graph}, so it is also valid.
  Let $\mathit{doc}_e = \mathsf{replay}(G_e) = \langle (\mathit{id}_0, c_0), \dots, (\mathit{id}_{n-1}, c_{n-1}) \rangle$ be the document state immediately after locally generating $e$.
  Since $\mathsf{replay}$ is deterministic (Lemma \ref{lemma-deterministic}), $\mathit{doc}_e$ exists and is unique.

  By Lemma \ref{state-correspondence}, $\mathit{doc}_e$ contains exactly the elements that have been inserted but not deleted in $G_e$, which is requirement (1a) of Definition \ref{strong-list-spec}.
  Also by Lemma \ref{state-correspondence}, the $i$th element in $\mathit{doc}_e$ corresponds to the $i$th record with $s_e = \texttt{Ins}$ in the internal state obtained by replaying $G_e$.
  Since any pair of IDs that are ordered by the internal state derived from $G_e$ retain the same ordering in the internal state derived from $G$, we know that the ordering of elements in $\mathit{doc}_e$ is consistent with the list order $\mathit{lo}$, satisfying requirement (1b) of Definition \ref{strong-list-spec}.

  Finally, to demonstrate requirement (1c) of Definition \ref{strong-list-spec} we assume that $e.\mathit{op} = \mathit{Insert}(i,c)$, and we need to show that $\mathit{doc}_e [i] = (e.\mathit{id}, c)$.
  Since $G_e$ contains only $e$ and events that happened before $e$, but no events concurrent with $e$, we know that immediately before applying $e$, every record in the internal state will have $s_p = \texttt{Ins}$ if and only if it has $s_e = \texttt{Ins}$ (because there are no events that are reflected in the effect version but not in the prepare version $e.\mathit{parents}$).
  Therefore, the set of records that are counted while mapping the original insertion index $i$ to an internal state record equals the set of records that are counted while mapping the internal record back to a transformed index.
  Thus, the transformed index of the insertion is also $i$, and therefore the new element is inserted at index $i$ of the document as required.
  This completes the proof that \algname satisfies the strong list specification.
\end{proof}
\fi 
\end{document}